\documentclass[11pt]{article}
\usepackage{fullpage,times}

\usepackage[utf8]{inputenc}
\usepackage[T1]{fontenc}

\usepackage{amsmath}
\usepackage{amssymb}

\usepackage{amstext}
\usepackage{amsfonts}
\usepackage{amsthm}
\usepackage{mathtools}
\usepackage{scrextend}    
\usepackage{array}     
\usepackage{multicol}    
\usepackage{mathrsfs}    
\usepackage{color}
\usepackage{hyperref}
\usepackage{cleveref}
\usepackage{bm}
\usepackage{enumitem}
\usepackage{upgreek}
\usepackage{IEEEtrantools}
\usepackage{mleftright}
\usepackage[mathscr]{euscript}
\usepackage{thm-restate}

\newtheorem{theorem}{Theorem}[section]    
\newtheorem{definition}{Definition}[section] 
\newtheorem{corollary}[theorem]{Corollary}    
\newtheorem{lemma}[theorem]{Lemma}    

\renewcommand{\qed}{\hfill{$\rule{6pt}{6pt}$}} 
\renewenvironment{proof}{\noindent{\bf Proof:}}{\qed\\}


\numberwithin{equation}{section}

\newcommand{\complex}{{\mathbb C}}
\newcommand{\reals}{{\mathbb R}}

\newcommand{\integers}{{\mathbb Z}}
\newcommand{\naturals}{{\mathbb N}}


\newcommand{\size}[1]{\left| #1 \right|}
\newcommand{\set}[1]{\left\{ #1 \right\}}
\newcommand{\floor}[1]{{\lfloor #1 \rfloor}}

\newcommand{\abs}[1]{\left| #1 \right|}

\DeclareMathOperator{\trace}{Tr}

\DeclareMathOperator{\order}{o}
\DeclareMathOperator{\Span}{span}
\DeclareMathOperator{\expct}{{\mathbb E}}

\DeclareMathOperator{\e}{e}

\newcommand{\id}{{\mathbb I}}
\newcommand{\allone}{{\mathbb J}}

\newcommand{\br}[1]{\mleft(#1\mright)}

\newcommand{\linear}{{\mathsf L}}

\newcommand{\qstate}{{\mathsf D}}

\newcommand{\ket}[1]{| #1 \rangle}
\newcommand{\bra}[1]{ \langle #1 |}
\newcommand{\ketbra}[2]{| #1 \rangle\!\langle #2 |}
\newcommand{\braket}[2]{\langle #1 | #2 \rangle }
\newcommand{\density}[1]{\ketbra{#1}{#1}}

\newcommand{\eqdef}{\coloneqq}
\newcommand{\tensor}{\otimes}

\newcommand{\intersect}{\cap}

\newcommand{\adjoint}{*}

\newcommand{\suppress}[1]{}

\newcommand{\etal}{\emph{et al.\/}}

\DeclareMathOperator{\entropy}{H}    
\DeclareMathOperator{\bentropy}{h}    
\DeclareMathOperator{\vnentropy}{S}    
\DeclareMathOperator{\mi}{I}    

\DeclareMathOperator{\VC}{VC\mbox{-}dim}
\DeclareMathOperator{\QPEX}{QPEX}
\DeclareMathOperator{\QAEX}{QAEX}
\DeclareMathOperator{\err}{err}
\DeclareMathOperator{\opt}{opt}
\DeclareMathOperator{\parity}{ps}

\newcommand{\sB}{{\mathsf B}}

\newcommand{\scA}{{\mathscr A}}
\newcommand{\scC}{{\mathscr C}}

\newcommand{\cA}{{\mathcal A}}
\newcommand{\cB}{{\mathcal B}}
\newcommand{\cE}{{\mathcal E}}
\newcommand{\cF}{{\mathcal F}}
\newcommand{\cX}{{\mathcal X}}
\newcommand{\cY}{{\mathcal Y}}

\newcommand{\tD}{{\widetilde D}}

\DeclareMathOperator{\hc}{\upalpha_{\mathrm{HC}}}    
\DeclareMathOperator{\dist}{\upgamma}    

\newcommand{\tW}{{\widetilde W}}

\title{\textbf{Optimal lower bounds for Quantum Learning via Information Theory}~\thanks{A preliminary version of the results in Section~\ref{sec-learning-functions} was included in the S.B.H.'s PhD thesis at University of Waterloo, December~2020~\cite{B20-phd-thesis}. 
An extended abstract of the results in Section~\ref{sec-coupon-collector} was included in the P.S.' bachelor's project report at Indian Institute of Science, April~2022.}
}

\author{
Shima Bab Hadiashar~\thanks{Zapata Computing Canada Inc., 25 Adelaide St.\ East, Toronto, ON, M5C 3A1, Canada. 
Most of this work was done while this author was with the 
Department of Combinatorics and Optimization,
and Institute for Quantum Computing, 
University of Waterloo. 
Email: \texttt{sbabhadi@uwaterloo.ca}~.
} \\
Zapata Computing Canada Inc.
\and
Ashwin Nayak~\thanks{Department of Combinatorics and Optimization,
and Institute for Quantum Computing, University
of Waterloo, 200 University Ave.\ W., Waterloo, ON,
N2L~3G1, Canada.
Email: \texttt{academic@ashwinnayak.info}~.
} \\
University of Waterloo
\and
Pulkit Sinha~\thanks{School of Computer Science,
and Institute for Quantum Computing, University
of Waterloo, 200 University Ave.\ W., Waterloo, ON,
N2L~3G1, Canada.
Most of this work was done during an undergraduate research project at University of Waterloo in Fall~2021, while this author was with the 
Department of Mathematics,
Indian Institute of Science.
Email: \texttt{psinha@uwaterloo.ca}~.
} \\
University of Waterloo}

\date{October 11, 2023}

\begin{document}

\maketitle

\begin{abstract}
Although a concept class may be learnt more efficiently using quantum samples as compared with classical samples in certain scenarios, Arunachalam and de Wolf~\cite{AW18-optimal-sample} proved that quantum learners are asymptotically no more efficient than classical ones in the quantum PAC and Agnostic learning models. They established lower bounds on sample complexity via quantum state identification and Fourier analysis. In this paper, we derive optimal lower bounds for quantum sample complexity in both the PAC and agnostic models via an information-theoretic approach. The proofs are arguably simpler, and the same ideas can potentially be used to derive optimal bounds for other problems in quantum learning theory.

We then turn to a quantum analogue of the Coupon Collector problem, a classic problem from probability theory also of importance in the study of PAC learning. Arunachalam, Belovs, Childs, Kothari, Rosmanis, and de Wolf~\cite{ABCKRW20-quantum-coupon-collector} characterized the quantum sample complexity of this problem up to constant factors. First, we show that the information-theoretic approach mentioned above provably does not yield the optimal lower bound, at least in its current form. As a by-product, we get a natural ensemble of pure states in arbitrarily high dimensions which are not easily (simultaneously) distinguishable, whereas the ensemble has close to maximal Holevo information. Second, we discover that the information-theoretic approach yields an asymptotically optimal bound for an approximation variant of the problem. Finally, we derive a sharper lower bound for the Quantum Coupon Collector problem, 
via the generalised Holevo-Curlander bounds on the distinguishability of an ensemble. All the aspects of the Quantum Coupon Collector problem we study rest on properties of the spectrum of the associated Gram matrix, which may be of independent interest.
\end{abstract}

\section{Introduction}
\label{sec-introduction}

Suppose we wish to predict a feature, given an individual from a
certain population, drawn from an unknown distribution. For example, we
may wish to recognize whether or not the individual has ``medium
build''.
We are given access to ``labelled'' random samples from the
population, i.e., we are told whether a given sample has the feature.
How many samples are needed for us to be able to predict the feature
with high probability? Valiant~\cite{val84-PAC} developed the
\emph{probably approximately correct\/} (PAC) model to formalize such
learning tasks. Later, Bshouty and Jackson~\cite{BJ99-DNF} extended the
notion of PAC learning to the quantum setting.
In reality, the labeled examples may be noisy, or they may not
necessarily be consistent with an underlying feature.
A more realistic model, the~\emph{agnostic\/} learning model, was introduced by
Haussler~\cite{Haus92-generalize-PAC}, and Kearns, Schapire and
Sellie~\cite{KSS94-agnostic}. In the agnostic model, the goal is to
make predictions that approximate the feature that is most consistent
with the examples. Like the PAC model, the agnostic model can
be extended to the quantum setting. Other models for quantum learning and their advantages, or lack thereof, have also been studied (see, e.g., Refs.~\cite{BPP21-generalization,CHCSSCC22-generalization}). A comprehensive discussion of these models and the broader area is beyond the scope of this work.

The learning models mentioned above are all defined formally in Section~\ref{sec-preliminaries}. Features
correspond to Boolean functions called \emph{concepts\/}, and a
collection of features is called a \emph{concept class\/}.  The goal in
these models is to find \emph{efficient\/} learners. A major figure of merit is
the \emph{sample complexity\/}, the minimum number of examples required
to learn a concept. The sample complexity of learning a concept class~$\scC$
in the above models depends heavily on a combinatorial quantity called the \emph{VC
dimension\/} of~$\scC$, in addition to an approximation
parameter~$\epsilon \in (0,1)$, and the probability of failure~$\delta
\in (0,1]$ of the learner.

In a series of works~\cite{BEHW89-Learn-VC,Han16-opt-PAC}, it has been
shown that the~$(\epsilon,\delta)$-PAC classical sample
complexity of a concept class with VC dimension~$d$ is
\begin{equation}
\label{eq:PAC-sample-complexity}
    \Theta \left( \frac{d}{\epsilon} + \frac{\log (1/\delta)}{\epsilon}\right) \enspace.
\end{equation}
For the agnostic model, the~$(\epsilon,\delta)$-agnostic classical
sample complexity is
\begin{equation}
\label{eq:agn-sample-complexity}
    \Theta \left( \frac{d}{\epsilon^2} + \frac{\log (1/\delta)}{\epsilon^2}\right) \enspace.
\end{equation}
The lower bound was shown by Vapnik and Chervonenkis~\cite{VC74-Agn-lb} and
this bound was shown to be achievable by Talagrand~\cite{Tal94-Agn-ub}.

Although a concept class may be learnt more efficiently using quantum samples as compared with classical samples in certain scenarios (see, e.g.,
Refs.~\cite{BJ99-DNF, SG04-sep-Q-C-learning}), the quantum sample
complexity in the PAC and agnostic models can be at most a
constant factor lower than their classical counterparts. In particular,
Eq.~\eqref{eq:PAC-sample-complexity} and
Eq.~\eqref{eq:agn-sample-complexity} also hold
for~$(\epsilon,\delta)$-PAC \emph{quantum\/} sample complexity
and~$(\epsilon, \delta)$-agnostic \emph{quantum\/} sample complexity,
respectively. The upper bounds carry over directly from the classical
learning algorithms, since a classical labelled example can be derived by
measuring a quantum example in the standard basis. The quantum lower
bounds are due to Arunachalam and de Wolf~\cite{AW18-optimal-sample}.

Arunachalam and de Wolf begin with a simple information-theoretic argument to
re-derive the \emph{classical\/} lower bounds in Eq.\eqref{eq:PAC-sample-complexity} and Eq.\eqref{eq:agn-sample-complexity}. They extend the argument to derive bounds for quantum
sample complexity, but the bounds are smaller by  a factor of~$\log(d/\epsilon)$.
They state that ``the logarithmic loss $\dotsc$ is actually inherent in
this information-theoretic argument'', and argue why this
is the case~\cite[Section~1.2.1]{AW18-optimal-sample}.
They present more sophisticated proofs via the ``pretty good measurement'' for quantum state identification and Fourier analysis for the optimal lower bounds.

From the arguments made by Arunachalam and de Wolf, one might be led to
believe that a logarithmic loss is inherent in the information-theoretic
approach. In this paper, we derive \emph{optimal\/} lower bounds for quantum sample complexity
in both the PAC and agnostic models via an information-theoretic approach.
The proofs are arguably simpler, and the same ideas can potentially be used to derive optimal
bounds for other problems in quantum learning theory.

The proofs we develop refine the information-theoretic argument given in
Ref.~\cite{AW18-optimal-sample}. The approach consists of
three steps. For concreteness, consider the PAC model. In the first
step, using the properties of a PAC learner, they show that the information contained in the quantum samples about the target concept is~$\Omega(d)$.  In the second step, they use
\emph{subadditivity\/} to bound this information by~$t$ times the
entropy of a single quantum sample (averaged over concepts), where~$t$
is the number of samples used by the learner. By bounding the entropy of
a single quantum sample, they conclude an~$\Omega \big(\tfrac{d}{\epsilon
\log(d/\epsilon)} \big)$ bound.  It turns out that the second step, i.e.,
the use of subadditivity, is the origin of the sub-optimality. Although their bound on information is tight for small values of~$t$, we point out that the information does not exceed~$d$, even as~$t$ goes to infinity. This hints at the possibility of a bound of order~$\epsilon t$ for sufficiently large~$t$. This is essentially what we establish.

We estimate the (mutual) information directly. Since the quantum examples are pure states, mutual information simplifies to the von Neumann entropy of a uniformly random example. We determine the spectral decomposition of the corresponding quantum state. The eigenvalues of the state involve combinatorial quantities that do not seem easy to analyse. Nonetheless, we show that the aggregate expression for entropy may be analysed using standard tail inequalities from probability theory. The resulting bound on entropy is~$\order(d)$, when~$t \in \order(d/\epsilon)$. This gives us the optimal lower bound on sample complexity. The arguments for agnostic learning go along the same lines. (On a closer look, Arunachalam and de Wolf encounter similar quantities in their approach, and it is likely that the ideas we use also lead to a simplification of the proofs they give.)

We then turn to the \emph{Coupon Collector\/} problem, a classic problem from probability theory which turns out to be important in the study of PAC learning. In this problem, given integers~$k,n$ (with~$1 < k < n$) the task is to independently draw uniformly distributed elements from an unknown size-$k$ subset~$S$ of~$[n]$, until all~$k$ elements of~$S$ have been observed. It is well-known that~$k \ln k + \Theta(k)$ samples are necessary and sufficient for the task to be accomplished with constant probability of success~\cite[Theorem~5.13]{MU17-probability-computing}. Viewed as a learning problem, the task is to learn the unknown set~$S$. It turns out that with~$k \eqdef n - 1$ the problem gives us an example of a concept class that is~$(1/n, 1/4)$-PAC learnable with~$\Theta(n)$ samples, albeit \emph{improperly\/}, whereas \emph{proper\/} PAC learning with the same parameters (i.e., producing a size-$(n - 1)$ subset as the hypothesis satisfying the PAC learning condition) requires~$\Theta(n \ln n)$ samples. We refer the reader to, e.g., Ref.~\cite[Section~5]{ABCKRW20-quantum-coupon-collector} for the details.

Motivated by the question of proper versus improper \emph{quantum\/} learning, Arunachalam, Belovs, Childs, Kothari, Rosmanis, and de Wolf~\cite{ABCKRW20-quantum-coupon-collector} studied a quantum analogue of the Coupon Collector problem. In the Quantum Coupon Collector problem, the learner has access to quantum samples, i.e., uniform superpositions over the elements of an unknown size-$k$ subset~$S$ of~$[n]$. The task is to learn~$S$ with high probability.
Arunachalam \etal\ showed that the quantum sample complexity of the problem matches that of classical coupon collection until~$k$ gets ``too close'' to~$n$. More precisely, with~$m \eqdef n - k$, they proved that the sample complexity of the Quantum Coupon Collector problem with constant probability of error~$< 1$ is
\begin{equation}
\label{eq-qcc}
\Theta(k \log \min \set{k, m + 1}) \enspace.
\end{equation}
In particular, when~$k = n - 1$, the sample complexity is~$\Theta(n)$. Thus, unlike in the classical case, the quantum version of the problem \emph{does not\/} separate proper from improper quantum PAC learning. 

Arunachalam \etal\ proved the lower bound for the Quantum Coupon Collector problem using the \emph{adversary bound\/} (see, e.g., Ref.~\cite{B12-PhD-thesis}) and properties of the Johnson association scheme. We study potentially simpler information-theoretic techniques for analysing the problem. First, we show that the natural approach we follow in Section~\ref{sec-learning-functions} \emph{provably\/} does not yield the optimal lower bound, at least in its current form (\Cref{sec-qcc-it}). As a by-product, we get a natural ensemble of pure states in arbitrarily high dimensions which are not easily (simultaneously) distinguishable, whereas the ensemble has close to maximal Holevo information. (Since the states in the ensemble are pure, the Holevo information equals the von Neumann entropy of the ensemble average.) This complements early work in quantum information theory due to Jozsa and Schlienz~\cite{JS00-distinguishability}, which presents ensembles of states in three dimensions which can be ``deformed'' so that the von Neumann entropy increases whereas the states become \emph{pairwise\/} less distinguishable. 

Although it fails for the Quantum Coupon Collector problem, we discover that the standard information-theoretic approach yields an asymptotically optimal bound for a possibly harder variant of the problem, namely that of \emph{approximating\/} the set to within a constant number of incorrect elements in expectation (\Cref{thm-qcc-mismatches}). Finally, we take a different route---via the generalised Holevo-Curlander bounds~\cite{Holevo79-distinguishability,Curlander79-distinguishability,ON99-converse-channel-coding,Tyson09-distinguishability} on the distinguishability of an ensemble of states---to derive 
a sharper lower bound for the Quantum Coupon Collector problem (\Cref{cor-qcc-lb}). In fact, we establish a bound of~$(1 - \order_k(1)) k \ln \min \set{k, m+1}$, which has the optimal leading order term for~$k \le n/2$, including the constant factor~$1$. We conjecture that this optimality holds for~$k > n/2$ as well.

All the aspects of the Quantum Coupon Collector problem we study rest on properties of the spectrum of the associated Gram matrix. In particular, we make essential use of a recurrence relation involving the spectrum of the Gram matrix. Similar to the analysis in Ref.~\cite{ABCKRW20-quantum-coupon-collector}, the recurrence allows us to interpret certain spectral quantities as probabilities arising from random walks that resemble the classical Coupon Collector process. In contrast with Ref.~\cite{ABCKRW20-quantum-coupon-collector}, though, the spectral quantities of relevance turn out to be the product of the eigenvalues with the corresponding multiplicity (informally speaking). The corresponding probabilities pertain to different events, and call for fresh analysis. The relationship with the aforesaid random walks lies at the heart of the delicate bounds on the spectral quantities, the corresponding entropy, etc., that are required for the results. These properties of the spectrum, as also the ideas underlying the bounds for PAC and Agnostic learning may be of wider applicability and interest.

\paragraph{Organization of the paper.}
We present basic elements of classical and quantum information theory, and the formal definition of the models and problems in Learning Theory studied in this article in \Cref{sec-preliminaries}. We study sample complexity in the quantum PAC and Agnostic learning models in \Cref{sec-learning-functions}. We describe our approach in detail for the quantum PAC model (\Cref{sec-pac-learning}) and illustrate the versatility of the approach in \Cref{sec-agnostic-learning}, by applying it to the quantum Agnostic model.  

\Cref{sec-coupon-collector} is devoted to the Quantum Coupon Collector problem. We develop properties of the spectrum of the Gram matrix associated with the problem in \Cref{sec-spectrum}. We demonstrate the failure of the standard information-theoretic argument in \Cref{sec-qcc-it}, derive the lower bound for approximation in \Cref{sec-mismatches}, and wrap up with the 
lower bound for the Quantum Coupon Collector problem in \Cref{sec-coupon-collection-lb}.

\paragraph{Acknowledgements.}
We thank Marco Tomamichel for pointing out an error in an earlier version of the proof of Theorem~\ref{thm-PAC-learning}, and Jon Tyson for clarifying the history of the generalised Holevo-Curlander bounds. This research was supported in part by NSERC Canada and a grant from Fujitsu Labs America. S.B.H.\ was also supported by an Ontario Graduate Scholarship. P.S.\ is also supported by a Mike and Ophelia Lazaridis Fellowship.

\section{Preliminaries}
\label{sec-preliminaries}

\paragraph{Some notation.}

For a positive integer~$d$, we denote the set~$\{0,1,\ldots,d\}$ as~$\naturals_{d+1}$ and the set~$\{1,\ldots,d\}$ as~$[d]$. For a string~$a \in \set{0,1}^d$ and~$i \in \naturals_{d+1}$ we use the convention that~$a_i$ is~$0$ if~$i = 0$, and~$a_i$ is the~$i$-th bit of~$a$ otherwise. For~$u \in \naturals_{d+1}^t$ for some~$t \ge 1$, we denote the string~$a_{u_1} a_{u_2} \dotsb a_{u_t}$ by~$a_u$. Let~$r \ge 1$ be a positive integer. The \emph{Hamming weight\/} of a string~$u \in \naturals_{d+1}^{r}$ is the number of non-zero symbols in the string and we denote it as~$|u|$. For~$u \in \naturals_{d+1}^{r}$ and~$x \in \set{0,1}^{r}$, we use~$u_x$ to denote the string~$u_{i_1} u_{i_2} \dotsb u_{i_j}$ where~$i_1, i_2, \dotsc, i_j$ are all the positions~$i$ where~$x_i = 1$, in increasing order. I.e., $u_x$ is ``projection'' of~$u$ onto the subset of indices in~$[r]$ whose characteristic vector is~$x$. We define the \emph{parity signature\/} of a string~$u \in \naturals_{d+1}^{r}$ as the function~$\parity : \naturals_{d+1}^{r} \rightarrow \{0,1\}^{d}$ such that for~$i \in [d]$, the~$i$-th bit of~$\parity(u)$ is the parity of the number of times~$i$ occurs in the string~$u$, i.e., the number of occurrences of~$i$ in~$u$, modulo~$2$. As an example, consider~$d=r=4$ 
and~$a = (0,1,1,2)$. Then,~$\parity(a) = (0,1,0,0)$. We denote the number of strings in the 
set~$[d]^{r}$ with parity signature~$b \in \{0,1\}^{d}$ as~$n_{r,b}$. Note that~$n_{r,b} = n_{r,b'}$ whenever~$|b| = \size{b'}$. For an integer~$h \in \naturals_{d+1}$, we also use~$n_{r,h}$ for the number of strings in~$[d]^{r}$ that have a specific parity signature, say~$b$, with Hamming weight~$\size{b} = h$.

Where the base of the logarithm is not specified, we take the base to be~$2$.

\paragraph{Classical information theory.}

We briefly describe the basics from information theory that we use in this work, and refer the reader to the text~\cite{CT06-info-theory} for a thorough introduction to the subject.

We only consider random variables with finite sample spaces, and denote them with capital letters, like~$X$,~$Y$ and~$Z$. For a random
variable~$X$ over sample space~$\cX$, the \emph{Shannon entropy\/} of~$X$ is defined as
\begin{equation*}
    \entropy(X) \quad \coloneqq \quad \sum_{x \in \cX} p(x) \log \frac{1}{p(x)} \enspace,
\end{equation*}
where~$p$ is the distribution of~$X$. For jointly distributed random variables~$X$ and~$Y$, the \emph{conditional entropy} of~$X$ given~$Y$ is defined as
\begin{equation*}
    \entropy(X | \, Y) \quad \coloneqq \quad \sum_{y \in \cY} \Pr[Y=y] \entropy(X|Y=y) \enspace,
\end{equation*}
where~$\cY$ is the sample space of~$Y$.
By definition, conditional (Shannon) entropy is a non-negative quantity and~$\entropy(X | \, Y) \le \entropy(X)$. Moreover, Shannon entropy satisfies the following \emph{chain rule} :
\begin{equation*}
    \entropy(XY) \quad = \quad \entropy(X) + \entropy(Y| \, X) \enspace.
\end{equation*}
Suppose~$X$ and~$Y$ are two random variables with the same sample space such that~$\Pr[X \neq Y] \leq \epsilon$ for some~$\epsilon \in [0,1]$. The \emph{Fano inequality\/} bounds the conditional entropy of~$X$given~$Y$ as
\begin{equation*}
    \entropy(X| \, Y) \quad \le \quad \bentropy(\epsilon) + \epsilon \log(|\cX|-1) \enspace,
\end{equation*}
where~$\cX$ is the sample space of~$X$ and has cardinality~$|\cX|$,  and~$\bentropy(\epsilon) \coloneqq - \epsilon \log \epsilon - (1-\epsilon) \log (1-\epsilon)$ denotes the binary entropy function.

\paragraph{Tail bounds.}

Let~$X_1, X_2, \ldots, X_n$ be~$n$ independent random variables over~$\set{0,1}$ and~$X$ denote their sum, i.e.,~$X \eqdef \sum_{i=1}^{n} X_i$. Let~$\mu$ be the expected value of~$X$. The \emph{Chernoff bound\/}~\cite{MR95-randomized-algorithms} states that~$X$ is concentrated around~$\mu$. In particular, for~$\delta \ge 0$, we have
\begin{equation}
\label{eq:Chernoff-bound}
    \Pr \left[ X \ge (1+\delta) \mu \right] \quad \leq \quad \exp\left(- \, \frac{\delta^2 \mu}{2+\delta} \right) \enspace.
\end{equation}
The above bound is also known as the \emph{multiplicative\/} Chernoff bound in the literature. 
In the case that~$X_i$ are all Bernoulli random variables with~$\expct X_i = 1/2$, the random variable~$X$ has the binomial distribution~$\sB(n, 1/2)$ with~$\mu = n/2$, and we have a tighter bound. For~$k \le n/2$,
\begin{equation}
\label{eq:concent-binom-1/2}
    \Pr \left[ X \le k\right] \quad = \quad \Pr \left[ X \ge n-k\right] \quad = \quad \frac{1}{2^n} \sum_{r=0}^{k} \binom{n}{r} \quad \leq \quad 2^{-\left(1-\bentropy(k/n)\right) n} \enspace.
\end{equation}
See, e.g., Ref.~\cite[Lemma 16.19, page 427]{FG06-PCT} for this bound.
 
\paragraph{Quantum Information Theory.}

For a thorough introduction to basics of quantum information, we refer the reader to the book by Watrous~\cite{W18-TQI}. We briefly review the notation and some results that we use in this article. 

We denote (finite dimensional) Hilbert spaces either by capital script letters like~$\cA$ and $\cB$, or directly as~$\complex^m$ for a positive integer~$m$.
A \emph{register\/} is a physical quantum system, and we denote it by a capital letter, like~$A$ or~$B$. A quantum register~$A$ is associated with a Hilbert space~$\cA$, and the state of the register is specified by a unit-trace positive semi-definite operator on~$\cA$. The state is called a \emph{quantum state\/}, or also as a \emph{density operator\/}. We denote
quantum states by lower case Greek letters, like~$\rho$ and~$\sigma$.  We use notation such as~$\rho^A$ to indicate that register~$A$ is in state~$\rho$, and 
may omit the superscript when the register is clear from the context. We use \emph{ket\/} and \emph{bra\/} notation to denote unit vectors 
and their adjoints in a Hilbert space, respectively. A quantum state is \emph{pure\/} if it has rank one, i.e.,~$\rho = \density{\phi}$ where~$\ket{\phi}$ is a unit vector. 
Any register~$A$ with Hilbert space~$\complex^S$, for some non-empty finite set~$S$, is called \emph{classical\/} if any joint state with any other register~$B$ is invariant under the operator
\[
\sigma^{AB} \quad \mapsto \quad \sum_{x \in S} \big( \density{x}^A \tensor \id^B \big) \, \sigma \, \big( \density{x}^A \tensor \id^B \big) \enspace,
\]
where~$\set{\ket{x}: x \in S}$ is the canonical basis for~$\complex^S$. In other words, the joint state of~$A$ and any other register~$B$ is always of the form~$\sum_x p_x \density{x}^A \tensor \rho_x^B$, where~$(p_x)$ is some distribution over~$S$. A classical register corresponds to a random variable whose probability distribution is determined by the diagonal entries of the state of the register.  

We denote the set of all linear operators on Hilbert space~$\cA$
by~$\linear(\cA)$, and we denote quantum channels, i.e., completely positive and trace-preserving linear maps from the space of linear operators
on a Hilbert space to another such space, by capital Greek letters
like~$\Psi$.

For a register~$A$ with quantum state~$\rho$, the \emph{von Neumann entropy\/}~$\vnentropy(A)_{\rho}$ of~$A$ is defined as
\begin{equation*}
    \vnentropy(A)_{\rho} \quad \coloneqq \quad - \trace \left( \rho \log \rho \right) \enspace.
\end{equation*}
This definition coincides with the Shannon entropy of the spectrum of~$\rho^{A}$. Hence, the entropy of a quantum state is invariant under the action of unitary transformations. For registers~$A$ and~$B$ with joint quantum state~$\rho^{AB}$, the \emph{mutual Information\/}~$\mi(A:B)_{\rho}$ of~$A$ and~$B$ is defined as
\begin{equation*}
    \mi(A:B)_{\rho} \quad \coloneqq \quad \vnentropy(A)_{\rho} + \vnentropy(B)_{\rho} - \vnentropy(AB)_{\rho} \enspace.
\end{equation*}
In the above notation, the subscript~$\rho$ may be omitted when the state is clear from the context. The mutual information of~$A$ and~$B$ is monotonic under the application of local quantum channels.
In particular, for a quantum channel~$\Psi : \linear(\cB) \rightarrow \linear(\cB')$ mapping states of register~$B$ to those of register~$B'$, we have
\begin{equation*}
    \mi(A:B)_{\rho} \quad \geq \quad \mi(A:B')_{(\id \tensor \Psi)(\rho)} \enspace.
\end{equation*}
This is also known as the \emph{Data Processing Inequality\/}.

\paragraph{Quantum learning theory.}

We refer the readers to the book~\cite{SB14-ML} for an introduction to machine learning theory and the survey~\cite{AdW17-survey} for an introduction to  quantum learning theory. Here,  we briefly review the
notation related to the PAC model and agnostic model that we study in this article.

We refer to a Boolean function~$c: \set{0,1}^n \rightarrow \set{0,1}$ as a \emph{concept}. 
We may also think of a concept as a bit-string in~$\set{0,1}^N$ for~$N\coloneqq 2^n$ which contains 
the value of~$c$ at all possible~$n$-bit strings. A \emph{concept class\/} is a 
subset~$\scC \subseteq \set{0,1}^N$ of Boolean functions. For a concept~$c$, we refer to~$c(x)$ 
as the \emph{label\/} of~$x \in \set{0,1}^n$, and the tuple~$(x,c(x))$ as 
a \emph{labeled example}. We say a concept class~$\scC$ is \emph{non-trivial\/} if it contains two distinct concepts~$c_1,c_2$ such that for some~$x_1, x_2 \in \set{0,1}^n$, we have~$c_1(x_1) = c_2(x_1)$ and~$c_1(x_2) \ne c_2(x_2)$. 

A crucial combinatorial quantity in learning Boolean functions is the \emph{VC dimension\/} of a concept class, introduced by Vapnik and Chervonenkis~\cite{VC71-VC-dim}. We say a set~$S \eqdef \set{s_1,\ldots,s_d} \subseteq \set{0,1}^n$ is \emph{shattered} by a concept class~$\scC$ if for every~$a \in \set{0,1}^{d}$, there exists a concept~$c \in \scC$ such that~$(c(s_1),\ldots,c(s_d)) = a$. The \emph{VC dimension\/} of~$\scC$, denoted as~$\VC(\scC)$, is the size of the largest set shattered by~$\scC$.  

\paragraph{PAC model.}

Consider a concept class~$\scC \subseteq \set{0,1}^{N}$. The PAC (\emph{probably approximately correct\/}) model for learning concepts was introduced---in the classical setting---by Valiant~\cite{val84-PAC}, and was extended to the quantum setting by Bshouty and Jackson~\cite{BJ99-DNF}. In the quantum PAC model, a learning algorithm is given a~\emph{quantum PAC example oracle\/}~$\QPEX(c,D)$ for an unknown concept~$c \in \scC$ and an 
unknown distribution~$D$ over~$ \set{0,1}^n$. The oracle~$\QPEX(c,D)$ does not have any inputs. When invoked, it outputs a superposition of labeled examples of~$c$ with amplitudes given by the distribution~$D$, 
namely, the pure state
\begin{equation*}
    \sum_{x \in \set{0,1}^n} \sqrt{D(x)} \, \ket{x,c(x)} \enspace,
\end{equation*}
which we call a \emph{quantum sample\/}, or simply a \emph{sample\/}. Note that measuring a quantum sample in the computational basis gives us a labeled example distributed according to~$D$, i.e., a classical sample.
We say a Boolean function~$h$, commonly called a \emph{hypothesis\/}, is an~$\epsilon$-\emph{approximation\/} of~$c$ (or has error~$\epsilon$) with respect to distribution~$D$, if 
\begin{equation}
     \Pr_{x \sim D} \left[ h(x) \neq c(x) \right] \quad \leq \quad \epsilon   \enspace.
\end{equation}
Given access to the oracle~$\QPEX(c,D)$, the goal of a quantum PAC learner is to find a hypothesis~$h$ that is an~$\epsilon$-approximation of~$c$ with sufficiently high success probability. 
\begin{definition}
For~$\epsilon , \delta \in [0,1]$, we say that an algorithm~$\scA$ is an~$(\epsilon,\delta)$-PAC quantum learner
for concept class~$\scC$ if for every~$c\in\scC$ and distribution~$D$, given access to~$\QPEX(c,D)$,
 with probability at least~$1-\delta$,~$\scA$ outputs a hypothesis~$h\in \set{0,1}^N$ which is an~$\epsilon$-approximation of~$c$
\end{definition}
The \emph{sample complexity\/} of a quantum learner~$\scA$ is the maximum number of times~$\scA$ invokes the oracle~$\QPEX(c,D)$ for any concept~$c \in \scC$ and any distribution~$D$ over~$\set{0,1}^n$. The~$(\epsilon,\delta)$-\emph{PAC quantum sample complexity\/} of a concept class~$\scC$ is the minimum sample complexity of a~$(\epsilon,\delta)$-PAC quantum learner for~$\scC$. Since we can readily derive classical samples from quantum ones, quantum learning algorithms are at least as efficient as classical ones in terms of sample complexity, as well as other measures such as time and space complexity.

\paragraph{Agnostic model.}

In the PAC model it is assumed that examples are generated perfectly according to some unknown concept~$c \in \scC$. The \emph{agnostic\/} model is a more realistic model where examples correspond to random labeled pairs~$(x,l)$ distributed according to an unknown distribution~$D$ over~$\set{0,1}^{n+1}$, not necessarily derived from a specific concept~$c \in \scC$. This model was introduced in the classical setting by Haussler~\cite{Haus92-generalize-PAC}, and Kearn, Schapire, and Sellie~\cite{KSS94-agnostic}. It was first studied in the quantum setting by Arunachalam and de Wolf~\cite{AW18-optimal-sample}. 
In the agnostic model, the learning algorithm has access to an agnostic quantum oracle~$\QAEX(D)$ for an unknown distribution~$D$ over~$\set{0,1}^{n+1}$. When invoked, the oracle~$\QAEX(D)$ outputs the superposition
\begin{equation*}
    \sum_{(x,l) \in \set{0,1}^{n+1}} \sqrt{D(x,l)} \, \ket{x,l} \enspace.
\end{equation*}
The \emph{error\/} of a hypothesis~$h \in \set{0,1}^N$ under distribution~$D$ is defined as
\begin{equation}
    \err_{D}(h) \quad \coloneqq \quad \Pr_{(x,l)\sim D} \left[ h(x) \neq l \right] \enspace.
\end{equation}
For a concept class~$\scC$, the minimal error achievable is defined as
\begin{equation*}
    \opt_{D}(\scC) \quad \coloneqq \quad \min_{c \, \in \,\scC} \quad \err_{D}(c) \enspace.
\end{equation*}
Given access to a~$\QAEX(D)$ oracle, the goal of quantum agnostic learner is to find a hypothesis~$h \in \scC$ with error not ``much larger'' than~$\opt_{D}(\scC)$.
\begin{definition}
For~$\epsilon , \delta \in [0,1]$, we say a learning algorithm~$\scA$ is an~$(\epsilon,\delta)$-agnostic quantum learner
for~$\scC$ if for every distribution~$D$, given access to~$\QAEX(D)$,
 with probability at least~$1-\delta$,~$\scA$ outputs a hypothesis~$h\in \scC$ such that~$\err_{D}(h) \leq \opt_{D}(\scC) + \epsilon$.
\end{definition}
As in the PAC model, we define the \emph{sample complexity\/} of a quantum learner~$\scA$ as the maximum number of times~$\scA$ invokes the oracle~$\QAEX(D)$ for any distribution~$D$ over~$\set{0,1}^{n+1}$. The~$(\epsilon,\delta)$-\emph{agnostic quantum sample complexity\/} of a concept class~$\scC$ is the minimum sample complexity of an~$(\epsilon,\delta)$-agnostic quantum learner for~$\scC$. Again, since a classical labeled example may be generated from a quantum sample by measurement in the computational basis, quantum learning algorithms in this model are also at least as efficient as their classical counterparts (with respect to the standard complexity measures).

\paragraph{Quantum Coupon Collection.}

Let~$n$ be an integer~$\ge 3$. For a positive integer~$k \in (1,n)$, let~$S$ be a~$k$-element subset of~$[n]$, and let~$\ket{\psi_S}$ denote the uniform superposition over the elements of~$S$:
\begin{equation}
\label{eq-qsample}
 \ket{\psi_S} \quad \eqdef \quad \frac{1}{\sqrt k} \sum_{i \in S} \ket{i} \enspace.   
\end{equation}
This is a quantum analogue of a uniformly random sample from~$S$, and we call this a \emph{quantum sample\/} of~$S$. For ease of notation, we define~$m \eqdef n - k$. 

In the Quantum Coupon Collector problem, we are given~$n,k$ and quantum samples of an arbitrary but fixed, unknown~$k$-element subset~$S$, and we would like to learn the subset using as few samples as possible. By ``learning the subset'', we mean that we would like to determine, with probability at least~$1 - \delta$ for some parameter~$\delta \in [0, 1)$, all the~$k$ elements of the set~$S$. We are interested in the quantum sample complexity of the problem, i.e., the least number of samples required by a quantum algorithm to learn the set with probability of error at most~$\delta$. Observe that by permuting the elements of~$[n]$ by a uniformly random permutation, we can show that the optimal worst-case error equals the optimal average-case error under the uniform distribution over the sets.

We also study a variant of this problem, in which given a parameter~$\Delta \ge 0$ the goal of the learning algorithm is to \emph{approximate\/} the set~$S$ to within~$\Delta$ in expectation, i.e., output a size-$k$ subset~$S'$ such that the expected number of \emph{mismatches\/}~$\size{S \setminus S'}$ is bounded: $\expct \size{S \setminus S'} \le \Delta$. Note that when $\Delta \le 1$, with probability at least~$1 - \Delta$, such an algorithm outputs the set~$S$. So the condition that the expected number of mismatches is bounded by~$1$ is a stronger condition than the learnability of the set~$S$.

\section{Learning Boolean functions}
\label{sec-learning-functions}

In this section, we give simple, information-theoretic lower bounds for the sample complexity of learning Boolean functions in the PAC and agnostic learning models.

\subsection{Sample complexity of PAC learning}
\label{sec-pac-learning}

The lower bound for PAC learning in Eq.~\eqref{eq:PAC-sample-complexity} contains two parts, one depends on VC dimension and the other one is independent of it. The  VC-independent part of the lower bound was shown by At{\i}c{\i} and Servedio~\cite{AS05-bounds-QL}. For completeness, we provide a proof here.

\begin{lemma}
\label{lemma-PAC-no-VC}
Let~$\scC$ be a non-trivial concept class. For every~$\delta \in (0,1/2)$ and~$\epsilon \in (0,1/4)$, an~$(\epsilon, \delta)$-PAC quantum learner for~$\scC$ has quantum sample complexity at least~$\Omega\br{\tfrac{1}{\epsilon} \log \tfrac{1}{\delta}}$.
\end{lemma}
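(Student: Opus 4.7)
The plan is a standard two-hypothesis reduction, combined with the Helstrom bound applied to the $t$-fold tensor product of quantum samples.

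First, I would use the non-triviality hypothesis to extract two concepts $c_1, c_2 \in \scC$ and two inputs $x_1, x_2 \in \set{0,1}^n$ with $c_1(x_1) = c_2(x_1)$ and $c_1(x_2) \ne c_2(x_2)$. I would then fix the hard distribution $D$ supported on $\set{x_1, x_2}$ defined by $D(x_1) \eqdef 1 - 2\epsilon$ and $D(x_2) \eqdef 2\epsilon$. Under this choice, any hypothesis $h$ that is an $\epsilon$-approximation of $c_i$ (under $D$) must satisfy $h(x_2) = c_i(x_2)$, since the sole alternative source of error at $x_2$ contributes $D(x_2) = 2\epsilon > \epsilon$. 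Thus, given the output of an $(\epsilon, \delta)$-PAC learner, one can read off the correct value of $c_i(x_2)$ with probability at least $1 - \delta$, and hence recover whether the unknown target was $c_1$ or $c_2$.

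Next, I would cast this as a quantum state-discrimination problem. The single-sample states are
\begin{equation*}
\ket{\psi_i} \quad = \quad \sqrt{1 - 2\epsilon}\,\ket{x_1, c_i(x_1)} + \sqrt{2\epsilon}\,\ket{x_2, c_i(x_2)} \enspace,
\end{equation*}
for $i \in \set{1,2}$. Using $c_1(x_1) = c_2(x_1)$ and $c_1(x_2) \ne c_2(x_2)$, I compute $\braket{\psi_1}{\psi_2} = 1 - 2\epsilon$, so after $t$ invocations of the oracle the learner holds one of two pure states with overlap $(1 - 2\epsilon)^t$. Averaging over a uniformly random target in $\set{c_1, c_2}$, the success analysis above yields a discrimination procedure with error at most $\delta$. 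The Helstrom bound for discriminating two equiprobable pure states with inner product $\alpha$ gives optimal error $\tfrac{1}{2}\br{1 - \sqrt{1 - \alpha^2}}$, so we must have
\begin{equation*}
\tfrac{1}{2}\br{1 - \sqrt{1 - (1 - 2\epsilon)^{2t}}} \quad \le \quad \delta \enspace,
\end{equation*}
which rearranges to $(1 - 2\epsilon)^{2t} \le 4\delta(1 - \delta)$.

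Finally, I would take logarithms and use the standard estimates $\log(1/(1-2\epsilon)) \le O(\epsilon)$ for $\epsilon < 1/4$ and $\log(1/(4\delta(1-\delta))) = \Omega(\log(1/\delta))$ for $\delta < 1/2$ to conclude $t = \Omega\br{\tfrac{1}{\epsilon} \log \tfrac{1}{\delta}}$. The main conceptual step is recognising that the PAC guarantee forces correct prediction at the single ``heavy'' disagreement point $x_2$, which converts the learner into a state-discrimination procedure; the remaining work is routine. There is no serious obstacle, only the small care needed to justify that the learner's (possibly adaptive, entangled, measured) processing of the $t$ copies cannot do better than the Helstrom bound on the joint pure states, which is immediate since the $t$ samples are produced non-adaptively and independently by the oracle.
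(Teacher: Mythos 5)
Your proposal is correct and follows essentially the same route as the paper: same choice of $c_1, c_2, x_1, x_2$ from non-triviality, same hard distribution $D(x_1) = 1-2\epsilon$, $D(x_2) = 2\epsilon$, same reduction to distinguishing $\ket{\psi_1}^{\tensor t}$ from $\ket{\psi_2}^{\tensor t}$, and same application of the Helstrom bound with $\braket{\psi_1}{\psi_2} = 1-2\epsilon$ yielding $(1-2\epsilon)^{2t} \le 4\delta(1-\delta)$. No meaningful differences.
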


\begin{proof}
Since~$\scC$ is non-trivial, there exist concepts~$c_1,c_2 \in \scC$ and inputs~$x_1,x_2 \in \set{0,1}^n$ such that~$c_1(x_1) = c_2(x_1)$ and~$c_1(x_2) \ne c_2(x_2)$. Consider the distribution~$D$ defined as follows:~$D(x_1) \eqdef 1-2\epsilon$ and~$D(x_2) \eqdef 2\epsilon$. Let~$\ket{\psi_i} \coloneqq \sqrt{1- 2\epsilon} \, \ket{x_1,c_i(x_1)} + \sqrt{ 2\epsilon} \, \ket{x_2,c_i(x_2)}$ for~$i \in \set{1,2}$. 

Under the distribution~$D$, no hypothesis can simultaneously~$\epsilon$-approximate~$c_1$ and~$c_2$. So, the hypotheses output with probability at least~$1 - \delta$ by an~$(\epsilon,\delta)$-PAC quantum learner for~$\scC$, given samples for~$c_1$ and~$c_2$ are different. The learner thus distinguishes the states~$\ket{\psi_1}^{\tensor t}$ and~$\ket{\psi_2}^{\tensor t}$ with success probability at least~$1-\delta$, where~$t$ is its sample complexity. 

On the other hand, by Holevo-Helstrom theorem (see, e.g., Ref.~\cite[Theorem 3.4]{W18-TQI}),~$\ket{\psi_1}^{\tensor t}$ and~$\ket{\psi_2}^{\tensor t}$ are distinguishable with probability at most
\[
\tfrac{1}{2}+\tfrac{1}{2} \sqrt{1-\abs{\braket{\psi_1}{\psi_2}}^{2t}} \enspace.
\]
So, we have~$\braket{\psi_1}{\psi_2}^{t} \le 2\sqrt{\delta (1-\delta)}$. Since~$\braket{\psi_1}{\psi_2} = 1- 2\epsilon$, it follows that~$t \in \Omega\br{\tfrac{1}{\epsilon}\log \tfrac{1}{\delta} }$.
\end{proof}

Next, we prove the VC-dependent part of the lower bound.
\begin{theorem}
\label{thm-PAC-learning}
Let~$n \ge 2$,~$d \ge 1$,~$N \eqdef 2^n$, and let~$\scC \subseteq \set{0,1}^N$ be a concept class with~$\VC(\scC)$ being~$d+1$. There is a universal constant~$d_0$ such that if~$d \ge d_0$,~$\epsilon \in (0,1/4)$ 
and~$\delta \in (0,1/8)$, every~$(\epsilon, \delta)$-PAC quantum learner for~$\scC$ has sample 
complexity~$\Omega (d/\epsilon)$.
\end{theorem}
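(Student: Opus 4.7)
The plan is to lower-bound~$\mi(A : T)$ for a uniformly random target concept indexed by~$A$ using a Fano-style argument, and to upper-bound the same quantity by a direct spectral analysis of the averaged sample state, circumventing the subadditivity that produces the~$\log(d/\epsilon)$ loss in Ref.~\cite{AW18-optimal-sample}.

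For the setup, fix a shattered set~$\set{s_0, s_1, \dotsc, s_d}$ of size~$d+1$ and index by~$a \in \set{0,1}^d$ the~$2^d$ concepts $c_a \in \scC$ satisfying $c_a(s_0) = 0$ and $c_a(s_i) = a_i$ for $i \in [d]$. Take~$D$ concentrated on the shattered set with $D(s_0) = 1 - 4\epsilon$ and $D(s_i) = 4\epsilon/d$, and let~$A$ be uniform on $\set{0,1}^d$. Any $\epsilon$-approximating hypothesis~$h$ of~$c_A$ must satisfy $\sum_{i=1}^d \mathbf{1}[h(s_i) \ne A_i] \le d/4$, so the string $H' \eqdef (h(s_1), \dotsc, h(s_d))$ lies within Hamming distance~$d/4$ of~$A$ with probability at least~$1 - \delta$. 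A volume-counting argument then gives $\vnentropy(A \mid H') \le \br{\bentropy(1/4) + \delta} d + 1$, whence $\mi(A : T) \ge \br{1 - \bentropy(1/4) - \delta}\, d - 1 \ge c_1 d$ for some constant~$c_1 > 0$ depending only on~$\delta < 1/8$ (taking $d \ge d_0$), where~$T$ denotes the learner's $t$-sample register.

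The bulk of the work is to upper-bound $\mi(A : T) = \vnentropy(\rho)$, where $\rho \eqdef 2^{-d} \sum_a \density{\psi_a^{\otimes t}}$. The Gram matrix $G_{a, a'} = \br{1 - \tfrac{4\epsilon}{d}\size{a \oplus a'}}^t$ depends only on $a \oplus a'$ and so is simultaneously diagonalised by Hadamard characters, giving eigenvalues of~$\rho$ of the form
\[
\mu_b \;=\; \expct_{U}\!\left[ 2^{-\size{V(U)}}\, \mathbf{1}\bigl[\support(b) \subseteq V(U)\bigr] \right], \qquad b \in \set{0,1}^d,
\]
where $U \in \naturals_{d+1}^t$ has i.i.d.\ coordinates with $\Pr[U_j = 0] = 1 - 4\epsilon$ and $\Pr[U_j = i] = 4\epsilon/d$ for $i \in [d]$, and $V(U) \eqdef \set{U_j : U_j \ne 0}$. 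The same formula emerges when one expands $\ket{\psi_a}^{\otimes t}$ in the Hadamard basis on the label register and collects the $a$-dependence into characters $(-1)^{a \cdot \parity(u_x)}$, in which the parity signature of subsequences of the sample history plays the central role.

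By the permutation symmetry of $U$, $\mu_b$ depends only on~$\size{b}$. Aggregating eigenvalues of common Hamming weight yields
\[
\vnentropy(\rho) \;=\; \entropy(H) + \expct\!\bigl[\log \tbinom{d}{H}\bigr],
\]
where $H \sim \sB(\size{V(U)}, 1/2)$ conditionally on~$\size{V(U)}$. The main obstacle is to control this expression sharply. I would bound $H \le \size{V(U)} \le Z \eqdef \size{\set{j : U_j \ne 0}}$ and apply the multiplicative Chernoff bound~\eqref{eq:Chernoff-bound} to $Z \sim \sB(t, 4\epsilon)$: for $t \le c\, d/\epsilon$ with small~$c$, this gives $H \le 8cd$ except with probability $\exp(-\Omega(cd))$. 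On the good event $\log \binom{d}{H} \le d\, \bentropy(8c)$; off it the contribution is at most $d \cdot \exp(-\Omega(cd))$. Together with $\entropy(H) \le \log(d+1)$ this yields $\vnentropy(\rho) \le d\, \bentropy(8c) + \order(d)$, which for~$c$ small enough that~$\bentropy(8c) < c_1/2$ (and~$d \ge d_0$ sufficiently large) contradicts $\mi(A : T) \ge c_1 d$. We therefore conclude $t \ge \Omega(d/\epsilon)$. The delicate step is the joint calibration of the Fano constant~$c_1$, the Chernoff deviation, and the cap on~$\bentropy(8c)$ so that the comparison closes cleanly; everything else is routine.
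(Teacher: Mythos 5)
Your proposal is correct and follows essentially the same route as the paper: the hard distribution~$D$, the direct spectral analysis of~$\rho^B$ via the Hadamard/Fourier structure on the label register (your $\lambda_b = \expct_U\bigl[2^{-\size{V(U)}}\,\mathbf{1}[\support(b)\subseteq V(U)]\bigr]$ is equivalent to the paper's formula in terms of~$n_{l,h}$), the decomposition $\vnentropy(\rho)=\entropy(H)+\expct[\log\binom{d}{H}]$, and the Chernoff concentration of the number of nonzero sample coordinates all match Lemmas~\ref{lem-spectrum1} and~\ref{lem-entropy-bound}. The one minor deviation is the $\Omega(d)$ lower bound on $\mi(A:B)$: you use a Fano/Hamming-ball counting argument, whereas the paper reduces to the hardness of the Index function under a uniform input (Ref.~\cite{KNTZ07-pointer-jumping}); both are valid and yours is slightly more self-contained.
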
 

\begin{proof}
Suppose there is an~$(\epsilon, \delta)$-PAC quantum learner~$\scA$ for~$\scC$ 
with quantum sample complexity~$t$. 
Let $S \eqdef \{s_0,\ldots, s_d\} \subseteq \{0,1\}^n$ be 
a maximal set shattered by the concept class~$\scC$. For each~$a \in \{0,1\}^d$, let~$c_a \in \scC$ be a concept such that~$c_a(s_0) = 0$ and~$c_a(s_i) = a_i$ for 
all~$i \in [d]$. Let~$D$ be a probability distribution over~$\{0,1\}^n$ defined as~$D(s_0) \eqdef 1-4\epsilon$,~$D(s_i) \eqdef 4\epsilon/d$ for~$i \in [d]$, and~$D(x) \eqdef 0$ for~$x\in \set{0,1}^n \setminus S$. Since~$D$ is only supported over the set~$S$, in the rest of the proof, we write~$i$ instead of~$s_i$ for the sake of brevity. Hence, we consider~$D$ as a distribution over~$\naturals_{d+1}$ and write the PAC quantum example for~$c_a$ according to~$D$ as
\[
\ket{\psi_a} \quad \eqdef \quad \sum_{i \in \naturals_{d+1}} \sqrt{D(i)} \, \ket{i, a_i} 
    \quad = \quad \sqrt{1-4\epsilon} \; \ket{0,0} + \sqrt{ \frac{4 \epsilon}{ d} } \; \sum_{i=1}^{d} \ket{i,a_i} \enspace.
\]
Let~$\rho_a$ denote~$t$ copies of the example~$\density{\psi_a}$ and let~$\rho^{AB_1 \dotsb B_t}$ be the following classical-quantum state
\begin{equation}
    \rho^{AB_1 \dotsb B_t} \quad \eqdef \quad \frac{1}{2^d} \sum_{a \in \{0,1\}^d} \density{a}^{A} \tensor \rho^{B_1 \dotsb B_t}_{a} \quad 
    = \quad \frac{1}{2^d} \sum_{a \in \{0,1\}^d} \density{a} \tensor \density{\psi_a}^{\tensor t} \enspace.
\end{equation}
Let~$B \coloneqq B_1 \dotsb B_t$. First, we show that~$\mi(A:B)_{\rho} \in \Omega(d)$ using the hypothesis that~$\scC$, and therefore, its subset~$\{c_a: a \in \{0,1\}^d\}$, is~$(\epsilon, \delta)$-PAC quantum learnable using~$t$ quantum examples. Then, we show that if~$t \in \order(d/\epsilon)$, we have~$\mi(A:B)_{\rho} \in \order(d)$. This implies that~$t \in \Omega(d/\epsilon)$.

We begin with the lower bound on the mutual information.
\begin{lemma}
\label{lem-pac-I(A:B)-lb}
$ \mi(A : B) \quad \ge \quad (1 - \bentropy(3/8)) d \, $.
\end{lemma}
\begin{proof}
There are several ways of showing that~$\mi(A:B)_{\rho} \in \Omega(d)$; see the proof of Theorem~16 in Ref.~\cite{AW18-optimal-sample} for one. We give a different argument.

Suppose the learner~$\scA$ is given the~$t$ quantum examples in the 
register~$B$ and produces the (classical) output in register~$F$. Let~$E$ be the a binary random variable indicating whether or not~$\scA$ outputs a desirable hypothesis, i.e., a hypothesis~$f \in \set{0,1}^N$ 
satisfying~$\Pr_{i \sim D} \left[ f(i) \neq c_a(i) \right] \leq \epsilon$. The event~$E = 1$ occurs with probability at least~$1-\delta$. For any~$i \in [d]$,
\begin{IEEEeqnarray*}{rCl}
\Pr[F_i \neq A_i ] \quad 
    & = & \quad \Pr[ E = 0 ] \; \Pr[ F_i \neq A_i \,|\, E = 0] +  \Pr[ E = 1 ] \; \Pr[ F_i \neq A_i \,|\, E = 1] \\
    & \le & \quad \delta + \Pr[ F_i \neq A_i \,|\, E = 1] \enspace.
\end{IEEEeqnarray*}
From the PAC learning condition, we have
\[
\frac{4 \epsilon}{d} \sum_{i = 1}^d \Pr[ F_i \neq A_i \,|\, E = 1]
    \quad \le \quad \epsilon \enspace,
\]
whereby
\[
\frac{1}{d} \sum_{i = 1}^d \Pr[ F_i \neq A_i ]
    \quad \le \quad \delta + \frac{1}{4} \quad \le \quad \frac{3}{8} \enspace.
\]
By the Data Processing Inequality and the hardness of the Index function under the uniform distribution over inputs~\cite[Lemma~V.3]{KNTZ07-pointer-jumping}, we get
\begin{equation*}
\label{eq-pac-I(A:B)-lb}
\mi(A : B) \quad \ge \quad \mi(A : F) \quad \ge \quad (1 - \bentropy(3/8)) d \enspace,
\end{equation*}
as claimed.
\end{proof}

Note that~$t = 0$ is not possible. Assume that~$t = \nu d/\epsilon $ for some~$\nu \in (0,1)$. Otherwise, the bound stated in the theorem holds. We show that there are positive universal constants~$\beta, d_0$ such that if~$\nu < \beta$ and~$d \ge d_0$, the mutual information~$\mi(A:B)$ violates \Cref{lem-pac-I(A:B)-lb}.

Notice that~$\rho^{AB}$ and~$\rho^{A}$ have the same eigenvalues, as~$B$ is in a pure state for a fixed value~$a$ of~$A$. Hence,~$\mi(A:B)_{\rho} = \vnentropy(A)_{\rho} + \vnentropy(B)_{\rho} - \vnentropy(AB)_{\rho} = \vnentropy(B)_{\rho} \,$. Arunachalam and de Wolf~\cite[proof of Theorem~16]{AW18-optimal-sample} bound the entropy~$\vnentropy(B)_{\rho}$ by~$t \vnentropy(B_1)_{\rho} \,$, using subadditivity. Since~$\vnentropy(B_1)_{\rho}$ is of the order of~$\epsilon \log d$, they get a lower bound that is a factor~$\log d$ smaller than optimal. Although this bound is tight for small values of~$t$, note that the state~$\rho^B$ has rank at most~$2^d$. Hence, its entropy does not exceed~$d$, even as~$t$ goes to infinity. This hints at the possibility of a bound on entropy of order~$\epsilon t$ for sufficiently large~$t$. This is essentially what we establish, by a direct analysis of the spectrum of~$\rho^B$.

We extend the distribution~$D$ to~$\naturals_{d+1}^k$ for~$k > 1$ by defining~$D(w) \eqdef \prod_{i \in [k]} D(w_i)$, for any~$w \in \naturals_{d+1}^k \,$. We have
\begin{equation*}
    \rho^{B} \quad = \quad \frac{1}{2^d} \sum_{a \in \{0,1\}^d} 
    \density{\psi_a}^{\tensor t} \quad = \quad \frac{1}{2^d} 
    \sum_{a \in \{0,1\}^d} \ \sum_{u,v \in \naturals_{d+1}^{t}} 
    \sqrt{D(u) D(v)} \; \ketbra{u}{v}^{I} \tensor 
    \ketbra{a_{u}}{a_{v}}^{L} \enspace,
\end{equation*}
where we partition register~$B$  into registers~$I$ and~$L$ which contain the sequences of indices and labels from the~$t$ samples, respectively. 

By considering small~$t$ (e.g., $t = 1,2$), we see that~$\rho^{B}$ is diagonalized when we express register~$L$ in the Hadamard basis. This generalizes to all~$t$, and allows us to obtain an explicit spectral decomposition for the state. Recall from \Cref{sec-preliminaries} that~$n_{l,h}$ is the number of strings in~$ [d]^l$ that have a specific parity signature, say~$b$, with Hamming weight~$ |b| = h$.

\begin{lemma}
\label{lem-spectrum1}
The state~$\rho^{B}$ has~$\min \set{d+1, t+1}$ non-zero eigenvalues~$\lambda_h$ given by
\begin{align}
\label{eq-lambda-h}
\lambda_h \quad \eqdef \quad \sum_{l=0}^t \binom{t}{l} \left( \frac{2\epsilon}{d}\right)^{l}(1-2\epsilon)^{t-l} \, n_{l,h}
\end{align}
for~$h \in \naturals_{d+1} \intersect \naturals_{t+1}$. The eigenvalue~$\lambda_h$ has multiplicity~$\binom{d}{h}$.
\end{lemma}
\begin{proof}
Let~$\sigma^{B}$ be the quantum state after applying Hadamard operator~$H^{\tensor t}$ on register~$L$, i.e.,
\begin{IEEEeqnarray*}{rCl}
    \sigma^{B} \quad  
    & \coloneqq & \quad \frac{1}{2^d} \sum_{a \in \{0,1\}^d} \ \sum_{u,v \in \naturals_{d+1}^{t}}
    \sqrt{D(u) D(v)} \; \ketbra{u}{v}^{I} \tensor H^{\tensor t} 
    \ketbra{a_{u}}{a_{v}}^{L} H^{\tensor t} \\
    & = & \quad \frac{1}{2^t 2^d} 
    \sum_{u,v \in \naturals_{d+1}^{t}} \sqrt{D(u) D(v)} \; \ketbra{u}{v}^{I} \tensor \sum_{a \in \{0,1\}^d} \ \sum_{x,y\in \{0,1\}^t} ( -1)^{x \cdot a_{u} + y\cdot a_{v}} \ketbra{x}{y}^{L} \enspace.
\end{IEEEeqnarray*}
We have
\[
x \cdot a_u \quad = \quad \sum_{i \in [t] \,:\, x_i = 1} a_{u_i}
    \quad = \quad \sum_{j = 1}^d \parity( u_x)_j \cdot a_j 
    \quad = \quad \parity( u_x) \cdot a \enspace,
\]
where~$\parity :\naturals_{d+1}^{t} \rightarrow \{0,1\}^d$ is the parity signature function defined in \Cref{sec-preliminaries}. So, for fixed~$x,y\in \{0,1\}^d$ and~$u , v \in \naturals_{d+1}^{t}$, we have
\begin{equation*}
    \sum_{a \in \{0,1\}^d} 
    ( -1)^{x \cdot a_{u} + y\cdot a_{v}} \quad = \quad
    \begin{cases}
        2^{d} & \text{if} \   \parity (u_{x}) = \parity (v_{y}) \enspace, \\
        0 & \text{otherwise.}
    \end{cases}
\end{equation*}
From this we get
\begin{equation*}
\sigma^{B} \quad = \quad  \frac{1}{2^t} \sum_{x,y\in \{0,1\}^t} 
    \sum_{\substack{u,v \in \naturals_{d+1}^{t} \\ \parity (u_{x}) = \parity (v_{y})}} 
    \sqrt{D(u) D(v)} \; \ketbra{u}{v}^{I} \tensor \ketbra{x}{y}^{L} \enspace.
\end{equation*}
We can identify the eigenvectors of the state~$\sigma^B$ from this expression. We may verify that for each~$b \in \{0,1\}^d$, the vector
\[
\ket{\phi_b} \quad \coloneqq \quad \sum_{x \in \{0,1\}^t} \sum_{\substack{u \in \naturals_{d+1}^{t} \\ \parity (u_{x}) = b } } \sqrt{D(u)} \; \ket{u}\ket{x}
\]
is an eigenvector of~$\sigma^{B}$ with eigenvalue
\begin{IEEEeqnarray*}{rCl}
\lambda_b \quad 
    & \coloneqq & \quad \frac{1}{2^t} \sum_{x \in \{0,1\}^t} \sum_{\substack{u \in \naturals_{d+1}^{t} \\ \parity (u_{x}) = b } } D(u)
\end{IEEEeqnarray*}
whenever~$\size{b} \le t$, where~$\size{b}$ is the Hamming weight of~$b$. The states~$\ket{\phi_b}$ are readily seen to be orthogonal for distinct~$b$. Since
\[
\frac{1}{2^t} \sum_{b \in \set{0,1}^d, \; \size{b} \le t} \density{\phi_b} \quad = \quad  \sigma^B \enspace,
\]
we have all the non-zero eigenvalues of~$\sigma^B$.

The eigenvalues do not seem to be easy to analyse, as they involve the combinatorial quantity~$n_{l,b} \,$, the number of strings of length~$l$ over~$[d]$ with parity signature~$b$. Nonetheless, we show that the aggregate expression we get by considering the entropy of~$\sigma^B$ may be analysed using standard tail inequalities from probability theory. We express the eigenvalues in a form that enables such analysis. 

Let~$\overline{x}$ denote the bit-wise complement of~$x \in \set{0,1}^t$. Then~$D(u) = D(u_x) \cdot D(u_{\overline{x}})$, and for a fixed~$x$
\[
\sum_{\substack{u \in \naturals_{d+1}^{t} \\ \parity (u_{x}) = b } } D(u)
    \quad = \quad \sum_{\substack{v \in \naturals_{d+1}^{\size{x}} \\ \parity (v) = b } } \sum_{ w \in \naturals_{d+1}^{t - \size{x}}} D(v) \cdot D(w) 
    \quad = \quad \sum_{\substack{v \in \naturals_{d+1}^{\size{x}} \\ \parity (v) = b } } D(v) \enspace.
\]
Using this, setting~$r \eqdef \size{x}$, noting that the parity signature of a string~$v \in \naturals_{d+1}^r$ only depends on its non-zero coordinates, and denoting Hamming weight~$\size{v}$ by~$l$, we rewrite~$\lambda_b$ as follows. 
\begin{IEEEeqnarray*}{rCl}
\lambda_b \quad 
    & = & \quad \frac{1}{2^t} \sum_{x \in \{0,1\}^t} 
    \sum_{\substack{v \in \naturals_{d+1}^{|x|} \\ \parity (v) = b } } D(v) \\
    & = & \quad  \frac{1}{2^t} \sum_{r=0}^{t} \binom{t}{r}
    \sum_{\substack{v \in \naturals_{d+1}^{r} \\ \parity (v) = b } } (1-4\epsilon)^{r-|v|} \left( \frac{4\epsilon}{d}\right)^{|v|} \\
    & = & \quad \frac{1}{2^t} \sum_{r=0}^{t} \binom{t}{r} \sum_{l=0}^{r} \binom{r}{l}  (1-4\epsilon)^{r-l} \left( \frac{4\epsilon}{d}\right)^{l} n_{l,|b|} \\
    & = & \quad  \sum_{l = 0}^t \sum_{r = l}^{t - l} \binom{t}{l}  \binom{t-l}{r-l} \frac{1}{2^{t-r}} \left(\frac{1-4\epsilon} 2\right)^{r-l} \left( \frac{4\epsilon}{2d}\right)^{l} n_{l,|b|} \qquad \left( \text{as } \binom t r \binom rl = \binom tl \binom {t-l}{r-l}\right)  \\
	& = & \quad  \sum_{l=0}^t \binom{t}{l} \left( \frac{2\epsilon}{d}\right)^{l} n_{l,|b|}\left(\sum_{r=l}^{t-l} \binom{t-l}{r-l} \frac{1}{2^{t-r}} \left(\frac{1-4\epsilon} 2\right)^{r-l}\right)\\
	& = & \quad  \sum_{l=0}^t \binom{t}{l} \left( \frac{2\epsilon}{d}\right)^{l}(1-2\epsilon)^{t-l} \, n_{l,|b|}  \enspace.
\end{IEEEeqnarray*}
Note that the eigenvalue~$\lambda_{b}$ only depends on the Hamming weight~$|b|$. Thus its multiplicity is~$\binom{d}{|b|}$ and we denote it by~$\lambda_h$ when the string~$b$ has Hamming weight~$h$.
\end{proof}

For convenience, we define~$\lambda_h$ as in \Cref{lem-spectrum1} even when~$h > t$. (Observe that~$\lambda_h = 0$ in this case.) Define~$\mu_h \eqdef \binom{d}{h} \lambda_h \,$, for~$h \in \naturals_{d+1} \,$. This is a distribution over~$\naturals_{d+1} \,$. 

Since the Hadamard operator is unitary, we have~$\vnentropy(B)_{\rho}  =  \vnentropy(B)_{\sigma}$, and therefore
\begin{IEEEeqnarray}{rCl's}
\vnentropy(B)_{\rho} \quad  
& = & \quad \sum_{h=0}^{d} \binom{d}{h} \lambda_h \log 
\frac{1}{\lambda_h} \nonumber\\
& = & \quad \sum_{h=0}^{d} \mu_h \log \frac{1}{\mu_h} + \sum_{h=0}^{d} \binom{d}{h} \lambda_h \log \binom{d}{h}   \nonumber\\
& \leq & \quad \log (d + 1) + \sum_{h=0}^{d} \binom{d}{h} \lambda_h 
\log \binom{d}{h}  
& \qquad (since $\entropy(\mu) \le \log(d+1)$) \nonumber \\
& = & \quad \log (d + 1) +  \sum_{h=0}^{d} \mu_h \log \binom{d}{h}  \enspace. \label{eq-S(B)}
\end{IEEEeqnarray}
Denote the second term in Eq.~\eqref{eq-S(B)} by~$S_{t,d}$. In the rest of the proof, we bound~$S_{t,d}$ by interpreting it as an expectation of~$\log \binom{d}{h}$ with respect to a distribution that is tightly concentrated around its mean. 

\begin{lemma}
\label{lem-entropy-bound}
For~$\nu \le \frac 1 {12} \,$, we have $ S_{t,d} \; \le \;  d \bentropy(6 \nu) + d \exp(- 2\nu d) \,$.
\end{lemma}
\begin{proof}
We define a distribution~$\tD$ on $\naturals_{d+1}$ motivated by the form of $\lambda_h$: $\tD(0) \eqdef 1-2\epsilon$, and~$\tD(i) \eqdef \frac{2\epsilon }{d}$ for $i\in [d]$. Let~$Y_t$ be a random string in~$\naturals_{d+1}^{t}$ each symbol of which is chosen independently according to the distribution~$\tD$. Let~$Z_{tj}$ be the binary random variable corresponding to the parity of the number of occurrences of the symbol~$j\in[d]$ in~$Y_t$. Then~$Z_t \coloneqq \sum_j Z_{tj}$ is the random variable corresponding to the Hamming weight of the parity signature of~$Y_t \,$. 

Observe that~$\mu_h$ is the probability that the parity signature of~$Y_t$ has Hamming weight~$h$, i.e.,~$\mu_h = \Pr[Z_t = h]$. Also,~$|Y_t| \geq Z_t \,$, as the Hamming weight of the parity signature of~$Y_t$ is at most the number of non-zero symbols in~$Y_t \,$. Since~$|Y_t|\sim \mathrm{Binomial}(t, 2\epsilon)$, we have $\expct |Y_t|= 2 \epsilon t = 2\nu d$. By the multiplicative Chernoff bound in Eq.~\eqref{eq:Chernoff-bound}, we have
\begin{equation*}
\label{ham_weight_tail_PAC}	
\Pr[\, |Y_t| \; \geq \; 6\nu d \,] \quad \leq \quad \exp\left(- 2 \nu d \right) \enspace.
\end{equation*}
We use this to bound~$S_{t,d} \,$. Since~$6 \nu d \le d/2$, we have
\begin{IEEEeqnarray*}{rCl}
S_{t,d} \quad & = & \quad  \sum_{h = 0}^d \mu_h \log \binom{d}{h} \\
    & = & \quad  \sum_{h \,<\, 6\nu d} \mu_h \log \binom{d}{h} + \sum_{h \,\ge\, 6\nu d} \mu_h \log \binom{d}{h} \\
    & \le & \quad  d \bentropy(6\nu) + d \Pr[ Z_t \; \ge \; 6\nu d \,] \\
    & \le & \quad  d \bentropy(6\nu) + d \Pr[ \, |Y_t| \; \ge \; 6\nu d \,] \qquad (\text{since } Z_t \le \size{Y_t}) \\
    & \le & \quad  d \bentropy(6\nu) + d \exp( -2 \nu d) \enspace,
\end{IEEEeqnarray*}
as claimed.
\end{proof}

Combining Eq.~\eqref{eq-S(B)} and \Cref{lem-entropy-bound}, we get
\begin{IEEEeqnarray*}{rCl}
	\mi(A:B)_{\rho} \quad = \quad \vnentropy(B)_{\rho} \quad 
	& \leq & \quad \log (d + 1) + d \bentropy(6 \nu) + d \exp(- 2\nu d)
\end{IEEEeqnarray*}
for $\nu \le \frac 1 {12} \,$. This contradicts \Cref{lem-pac-I(A:B)-lb} for small enough~$\nu$ and large enough~$d$.
\end{proof}

\subsection{Sample complexity of agnostic learning}
\label{sec-agnostic-learning}

As in the case of the PAC model, the lower bound for the sample complexity of agnostic quantum learning in Eq.~\eqref{eq:agn-sample-complexity} has two parts. For completeness, we provide the proof of the VC-independent part due to Arunachalam and de Wolf~\cite{AW18-optimal-sample}.
\begin{lemma}
\label{lemma-agn-no-VC}
Let~$\scC$ be a non-trivial concept class. For every~$\delta \in (0,1/2)$,~$\epsilon \in (0,1/4)$, an~$(\epsilon,\delta)$-agnostic quantum learner for~$\scC$ has quantum sample complexity at least~$\Omega(\tfrac{1}{\epsilon^2} \log \tfrac{1}{\delta})$.
\end{lemma}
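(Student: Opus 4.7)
The plan is to mirror the structure of the proof of \Cref{lemma-PAC-no-VC}, i.e., reduce to the task of distinguishing two pure states via the Holevo-Helstrom bound, but choose a pair of distributions whose quantum samples are closer---at Euclidean distance of order~$\epsilon$ rather than~$\sqrt \epsilon$---so as to extract the stronger~$1/\epsilon^2$ dependence characteristic of the agnostic setting. Concretely, by non-triviality of~$\scC$, fix two concepts~$c_1, c_2 \in \scC$ and two inputs~$x_1, x_2 \in \set{0,1}^n$ with~$c_1(x_1) = c_2(x_1)$ and~$c_1(x_2) \ne c_2(x_2)$. For~$b \in \set{1,2}$, I would define a distribution~$D_b$ on~$\set{0,1}^{n+1}$ supported on three points: $D_b(x_1, c_1(x_1)) \eqdef 1/2$, $D_b(x_2, c_b(x_2)) \eqdef 1/4 + \epsilon$, and~$D_b(x_2, 1 - c_b(x_2)) \eqdef 1/4 - \epsilon$. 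A direct calculation gives~$\err_{D_b}(c_b) = 1/4 - \epsilon$, so~$\opt_{D_b}(\scC) \le 1/4 - \epsilon$, whereas any hypothesis~$h$ with~$h(x_2) = 1 - c_b(x_2)$ has~$\err_{D_b}(h) \ge 1/4 + \epsilon > \opt_{D_b}(\scC) + \epsilon$.

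Consequently, a successful~$(\epsilon, \delta)$-agnostic learner receiving samples from~$D_b$ must, with probability at least~$1 - \delta$, output a hypothesis~$h$ with~$h(x_2) = c_b(x_2)$. Reading off the bit~$h(x_2)$ therefore gives a procedure that, given~$t$ copies of the quantum sample
\[
\ket{\psi_b} \quad \eqdef \quad \sqrt{\tfrac 1 2} \, \ket{x_1, c_1(x_1)} \; + \; \sqrt{\tfrac 1 4 + \epsilon} \, \ket{x_2, c_b(x_2)} \; + \; \sqrt{\tfrac 1 4 - \epsilon} \, \ket{x_2, 1 - c_b(x_2)} \enspace,
\]
distinguishes~$\ket{\psi_1}^{\tensor t}$ from~$\ket{\psi_2}^{\tensor t}$ with success probability at least~$1 - \delta$.

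The crucial step is then to compute
\[
\braket{\psi_1}{\psi_2} \quad = \quad \tfrac 1 2 \; + \; 2 \sqrt{\br{\tfrac 1 4 + \epsilon}\br{\tfrac 1 4 - \epsilon}} \quad = \quad \tfrac 1 2 \; + \; \tfrac 1 2 \sqrt{1 - 16 \epsilon^2} \quad = \quad 1 - \Theta(\epsilon^2) \enspace,
\]
for~$\epsilon \in (0, 1/4)$, which is the source of the~$1/\epsilon^2$ scaling. Applying the Holevo-Helstrom bound as in the proof of \Cref{lemma-PAC-no-VC}, the success condition~$\ge 1-\delta$ forces~$\abs{\braket{\psi_1}{\psi_2}}^t \le 2 \sqrt{\delta(1 - \delta)}$, and taking logarithms yields~$t \in \Omega \br{\tfrac 1 {\epsilon^2} \log \tfrac 1 \delta}$. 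I do not anticipate a serious obstacle here; the main thing to check carefully is that the distribution~$D_b$ makes the error gap between ``right'' and ``wrong'' hypotheses strictly exceed the slack~$\epsilon$ allowed by the agnostic learning condition, so that the learner is genuinely forced to distinguish~$\ket{\psi_1}$ from~$\ket{\psi_2}$.
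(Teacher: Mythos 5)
Your proof is correct and uses the same core technique as the paper---reducing to distinguishing two pure quantum states via the Holevo–Helstrom bound and exploiting that the overlap is $1 - \Theta(\epsilon^2)$---but your hard instance differs. The paper's proof needs only a single disagreement point $x$ (where $c_1(x) \ne c_2(x)$) and places \emph{all} of the distribution's mass on the two labelled pairs $(x, c_1(x))$ and $(x, c_2(x))$, with weights $(1 \pm 2\epsilon)/2$ swapped between the two distributions; the resulting samples are two-term superpositions with $\braket{\psi_+}{\psi_-} = \sqrt{1 - 4\epsilon^2}$. You instead use the full non-triviality hypothesis (an agreement point $x_1$ \emph{and} a disagreement point $x_2$), park half the mass on the uncontested $(x_1, c_1(x_1))$, and split the rest as $1/4 \pm \epsilon$; your samples are three-term superpositions with overlap $\tfrac{1}{2}\bigl(1 + \sqrt{1 - 16\epsilon^2}\,\bigr)$. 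Both overlaps are $1 - \Theta(\epsilon^2)$, so the asymptotics agree. The paper's construction is slightly leaner in that it needs only two distinct concepts (not full non-triviality) and produces a smaller overlap, but nothing in your argument fails; your identification of the error gap ($\err_{D_b}(h) \ge 1/4 + \epsilon$ vs.\ $\opt_{D_b}(\scC) + \epsilon = 1/4$ when $h(x_2) \ne c_b(x_2)$) is checked correctly and forces the learner to distinguish the two ensembles.
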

\begin{proof}
Since~$\scC$ is non-trivial, there exists two distinct concepts~$c_1,c_2 \in \scC$ and an input~$x \in \set{0,1}^n$ such that~$c_1(x) \ne c_2(x)$. Define distributions~$D_{+}$ and~$D_{-}$ as follows. 
\[
D_{\pm}(x,c_1(x)) \quad \eqdef \quad \frac{1\pm 2\epsilon}{2} \qquad \text{and} \qquad D_{\pm}(x,c_2(x)) \quad \eqdef \quad \frac{1\mp2\epsilon}{2} \enspace.
\]
Let~$\ket{\psi_{\pm}} \coloneqq \sqrt{\br{1 \pm 2\epsilon}/2} \; \ket{x,c_1(x)} + \sqrt{\br{1 \mp 2\epsilon}/2} \; \ket{x,c_2(x)}$ be the outputs of agnostic quantum oracles $\QAEX(D_{\pm})$. 

Under the above distributions,~$\opt_{D_\pm}(\scC) = \br{1- 2\epsilon}/2$. Any hypothesis that has error at most~$1/2$ with respect to~$D_+$, i.e., error at most~$\epsilon$ larger than~$\opt_{D_+}(\scC)$ agrees with~$c_1$ at~$x$, and any hypothesis that has error at most~$1/2$ with respect to~$D_-$ agrees with~$c_2$ at~$x$.
So, an~$(\epsilon,\delta)$-agnostic quantum learner for~$\scC$ with sample complexity~$t$ can be used to distinguish~$\ket{\psi_+}^{\tensor t}$ from~$\ket{\psi_-}^{\tensor t}$ with probability at least~$1-\delta$. As in Lemma~\ref{lemma-PAC-no-VC}, we conclude that~$\braket{\psi_+}{\psi_-}^{t} \le 2\sqrt{\delta (1-\delta)}$ whereas~$\braket{\psi_+}{\psi_-} = \sqrt{1- 4\epsilon^2}$. This implies that~$t \in \Omega\br{\tfrac{1}{\epsilon^2} \log \tfrac{1}{\delta}}$.
\end{proof}

Next, we prove the VC-dependent part of the lower bound, using the same ideas as in the proof of Theorem~\ref{thm-PAC-learning} --- via information theory, spectral analysis, and concentration of measure. We refer the reader to that proof for the intuition behind the proof below, as also for some details.
\begin{theorem}
Let~$n \ge 2$, $d \ge 1$, $N \eqdef 2^n$, and let~$\scC \subseteq \set{0,1}^N$ be a concept class with~$\VC(\scC)$ equal to~$d$. There is a universal constant~$d_0$ such that for any~$d \ge d_0$, $\epsilon \in (0,1/4)$, 
and~$\delta \in (0,1/8)$, every~$(\epsilon, \delta)$-agnostic quantum learner for~$\scC$ has sample complexity~$\Omega (d/\epsilon^2)$.
\end{theorem}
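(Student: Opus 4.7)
The plan is to follow the template of \Cref{thm-PAC-learning} closely, with the distribution and quantum sample adapted to agnostic learning. Let $S \eqdef \set{s_1, \dotsc, s_d}$ be shattered by $\scC$ and, for each $a \in \set{0,1}^d$, pick $c_a \in \scC$ with $c_a(s_i) = a_i$. Setting $\beta \eqdef 2\epsilon$, I would use the distribution
\[
D_a(s_i, a_i) \eqdef \frac{1+2\beta}{2d}\,, \qquad D_a(s_i, 1-a_i) \eqdef \frac{1-2\beta}{2d}
\]
on $\set{0,1}^{n+1}$, supported on $S\times\set{0,1}$. A direct computation gives $\err_{D_a}(h) = (1-2\beta)/2 + 2\beta k/d$, where $k \eqdef \size{\set{i : h(s_i)\ne a_i}}$, so $\opt_{D_a}(\scC) = (1-2\beta)/2$ (attained by $c_a$), and the agnostic guarantee $\err \le \opt + \epsilon$ forces $k \le d/4$ with probability at least $1 - \delta$. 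As in the proof of \Cref{lem-pac-I(A:B)-lb}, averaging over coordinates and invoking the hardness of Index under the uniform distribution then yields $\mi(A:B) \ge (1 - \bentropy(3/8))\,d$ for the classical-quantum state $\rho^{AB} = 2^{-d}\sum_a \density{a}^A \tensor \density{\psi_a}^{\tensor t}$, where $\ket{\psi_a} \eqdef d^{-1/2}\sum_{i \in [d]} \ket{i}\,(\alpha\ket{a_i} + \gamma\ket{1-a_i})$ with $\alpha \eqdef \sqrt{(1+2\beta)/2}$ and $\gamma \eqdef \sqrt{(1-2\beta)/2}$.

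To upper-bound $\vnentropy(B)_\rho = \mi(A:B)_\rho$, I would diagonalise $\rho^B$ by applying a Hadamard to each label qubit. Since $H(\alpha\ket{a_i}+\gamma\ket{1-a_i}) = \sqrt{p}\ket 0 + (-1)^{a_i}\sqrt{q}\ket 1$ with $p \eqdef (1+\sqrt{1-4\beta^2})/2$ and $q \eqdef (1-\sqrt{1-4\beta^2})/2$, and crucially $q = \Theta(\beta^2) = \Theta(\epsilon^2)$, expanding the $t$-fold tensor product over $x \in \set{0,1}^t$ (indicating which label positions contribute $\ket 1$) and running the Fourier-averaging over $a$ as in the proof of \Cref{lem-spectrum1} yields orthogonal eigenvectors $\ket{\phi_b}$ indexed by $b \in \set{0,1}^d$, with eigenvalues depending only on $\size b$:
\[
\lambda_h \quad = \quad \sum_{r=0}^t \binom{t}{r} p^{t-r}\br{\frac{q}{d}}^{r} n_{r,h},
\]
of multiplicity $\binom{d}{h}$. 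This is precisely the formula of \Cref{lem-spectrum1} under the substitutions $1-2\epsilon \mapsto p$ and $2\epsilon \mapsto q$.

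From here the argument of \Cref{lem-entropy-bound} transfers essentially verbatim. Define $\tD$ on $\naturals_{d+1}$ by $\tD(0) \eqdef p$ and $\tD(i) \eqdef q/d$ for $i \in [d]$, let $Y_t$ be a string of $t$ i.i.d.\ symbols from $\tD$, and let $Z_t$ be the Hamming weight of its parity signature. Then $\mu_h \eqdef \binom{d}{h}\lambda_h = \Pr[Z_t = h]$ and $Z_t \le \size{Y_t} \sim \mathrm{Binomial}(t, q)$. Assuming for contradiction that $t = \nu d/\epsilon^2$ for a sufficiently small constant $\nu > 0$, we get $\expct\size{Y_t} = qt = \Theta(\nu d)$, so the multiplicative Chernoff bound of Eq.~\eqref{eq:Chernoff-bound} makes $\Pr[\size{Y_t} \ge C\nu d]$ exponentially small in $\nu d$ for any constant $C > 1$. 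Copying the calculation of \Cref{lem-entropy-bound} then gives $\vnentropy(B)_\rho \le \log(d+1) + d\,\bentropy(C\nu) + d\exp(-\Theta(\nu d))$, which is strictly less than $(1-\bentropy(3/8))\,d$ once $\nu$ is small enough and $d$ is large enough, contradicting the lower bound on $\mi(A:B)$ and thereby forcing $t \in \Omega(d/\epsilon^2)$.

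The main obstacle I anticipate is verifying that the Hadamard diagonalisation really does reduce the agnostic spectral problem to the PAC form with $q$ in the role of $2\epsilon$; once the analogue $\ket{\psi_a'} = d^{-t/2}\sum_b (-1)^{b\cdot a}\ket{\phi_b}$ is established in the post-Hadamard basis, the rest is a straightforward transcription of \Cref{sec-pac-learning}. The key quantitative point is that the effective noise parameter after Hadamard is $q = \Theta(\epsilon^2)$ rather than $\Theta(\epsilon)$, which is precisely what replaces $d/\epsilon$ by $d/\epsilon^2$ in the final bound.
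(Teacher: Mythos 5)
Your proposal is correct and takes essentially the same route as the paper's proof: the same hard distribution (your $\beta = 2\epsilon$ gives exactly the paper's $\pm 4\epsilon$ bias), the same Hadamard diagonalisation of $\rho^B$ (your $q$ coincides with the paper's $\alpha = (1-\sqrt{1-16\epsilon^2})/2 = \Theta(\epsilon^2)$, and your eigenvalue formula matches \Cref{lem-spectrum2}), and a Chernoff-based entropy bound that mirrors \Cref{lem-entropy-bound}.
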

\begin{proof}
Suppose there is an~$(\epsilon, \delta)$-agnostic quantum learner~$\scA$ for~$\scC$ using~$t$ quantum examples. Let~$S \eqdef \{s_1,\ldots, s_d\} \subseteq \{0,1\}^n$ be a set of size~$d$ shattered by~$\scC$. For~$a \in \{0,1\}^d$, let~$c_a \in \scC$ be a concept such that~$c_a(s_i) = a_i$ for all~$i \in [d]$. Let~$D_a$ be a distribution over~$\{0,1\}^n \times \{0,1\}$ such that~$D_a(s_i \,,b) \eqdef \left(1+(-1)^{a_i+b} 4\epsilon\right)/2d$ for all~$i\in [d]$ and~$b \in \{0,1\}$, and~$D_{a}(x,b) \eqdef 0$ if~$x \not\in S$. 
Since the distributions~$D_a$ are only supported over the set~$S \times \{0,1\}$, in the rest of the proof, we use~$i$ to mean~$s_i$ for the sake of brevity. Thus we think of~$D_a$ as a distribution over~$[d] \times \set{0,1}$. The corresponding agnostic quantum sample for~$c_a$ is~$\ket{\psi_a} \coloneqq \sum_{i=1}^{d} \sum_{l \in \set{0,1}} \sqrt{D_a(i,l)} \ket{i,l}$. Denote~$t$ copies of~$\ket{\psi_a}$ by~$\rho_a$ and define
\begin{equation}
    \rho^{AB_1\dotsb B_t} \quad \eqdef \quad \frac{1}{2^d} \sum_{a \in \{0,1\}^d} \density{a}^{A} \tensor \rho^{B_1 \dotsb B_t}_{a} \quad 
    = \quad \frac{1}{2^d} \sum_{a \in \{0,1\}^d} \density{a} \tensor \density{\psi_a}^{\tensor t} \enspace.
\end{equation}

Suppose the learner~$\scA$ is given the~$t$ quantum examples~$\rho_a$ in register~$B$. With probability at least~$1-\delta$,~$\scA$ outputs a random) hypothesis~$H_a \in \set{0,1}^N$ satisfying 
\[
\err_{D_a} \! (H_a) \quad \leq \quad \opt_{D_a} \! (\scC) + \epsilon \enspace.
\]
By construction,~$c_a$ is the minimal-error concept from~$\scC$ with respect to the distribution~$D_a$ and~$\err_{D_a} \! (c_a) = (1 - 4 \epsilon)/2 $. We may verify that
\[
\err_{D_a} \! (H_a) \quad = \quad \err_{D_a} \! (c_a) + \frac{4 \epsilon}{d} \sum_{i = 1}^d \Pr[ H_a(i) \neq c_a(i) ] \enspace.
\]
By the agnostic learning criterion the additional error over~$\err_{D_a} \! (c_a) $ is at most~$\epsilon$. Therefore, we have
\[
\frac{1}{d} \sum_{i = 1}^d \Pr[ H_a(i) \neq c_a(i) ] \quad \leq \quad \frac{1}{4} \enspace.
\]
Through the same reasoning as in \Cref{lem-pac-I(A:B)-lb}, we get
\begin{IEEEeqnarray}{rCl}
\label{eq-agn-I(A:B)-lb}
   \mi(A:B)_{\rho} \quad \ge \quad \left(1- \bentropy(3/8) \right) d \enspace.
\end{IEEEeqnarray}

Let~$\alpha \coloneqq \frac{ 1-\sqrt{1-16\epsilon^2}}{2}$.
We show that when~$t = \nu d/2 \alpha$ with~$\nu$ chosen to be a sufficiently small positive constant, the mutual information~$\mi(A:B)_{\rho}$ violates the inequality~\eqref{eq-agn-I(A:B)-lb}. Therefore, we get~$t \in \Omega (d/\epsilon^2)$. 

For~$u \in [d]^t$ and~$l \in \set{0,1}^t$, denote~$\prod_{j = 1}^t D_a(u_j, l_j)$ by~$D_a(u,l)$. We separate the indices and the corresponding bits in register~$B$ into subregisters~$I$ and~$L$, respectively. We then have
\begin{IEEEeqnarray}{rCl}
    \rho^{B} \quad & = & \quad \frac{1}{2^d} 
    \sum_{a \in \{0,1\}^d} \ \sum_{u,v \in [d]^{t}} \sum_{l,k \in \{0,1\}^t}
    \sqrt{D_a(u,l) D_a(v,k)} \, \ketbra{u}{v}^{I} \tensor 
    \ketbra{l}{k}^{L} \enspace.
\end{IEEEeqnarray}

We first derive the spectrum of~$\rho^B$. 
\begin{lemma}
\label{lem-spectrum2}
The state~$\rho^{B}$ has~$\min \set{d+1, t+1}$ non-zero eigenvalues~$\lambda_h$ given by
\begin{align*}
\lambda_h \quad
    & \eqdef \quad \sum_{r=0}^{t}  \binom{t}{r}\frac{n_{r,h}}{d^r} \, \alpha^r (1 - \alpha)^{t-r}
\end{align*}
for~$h \in \naturals_{d+1} \intersect \naturals_{t+1}$. The eigenvalue~$\lambda_h$ has multiplicity~$\binom{d}{h}$.
\end{lemma}
\begin{proof}
Define~$\sigma^{B}$ to be the state obtained by applying the~$t$-qubit Hadamard transformation to register~$L$ in~$\rho^B$. We have
\begin{IEEEeqnarray*}{rCl}
\sigma^{B} \quad 
    & = & \quad \frac{1}{2^d2^t} 
     \sum_{u,v \in [d]^{t}} \ketbra{u}{v}^{I} \tensor \sum_{a \in \{0,1\}^d}  \sum_{\substack{l,k \in \{0,1\}^t \\ x,y \in \{0,1\}^t}} (-1)^{x \cdot l + y \cdot k}
    \sqrt{D_a(u,l) D_a(v,k)} \, \ketbra{x}{y}^{L} \enspace.
\end{IEEEeqnarray*}
We simplify this expression as follows. Consider fixed~$a,u,v,x,y$ as in the expression. Define
\[
\beta_x \quad \coloneqq \quad 
    \frac{1}{(2d)^{t/2}} \left( \sqrt{1 + 4 \epsilon} + \sqrt{1 - 4 \epsilon} \, \right)^{t - \size{x}} \left( \sqrt{1 + 4 \epsilon} - \sqrt{1 - 4 \epsilon} \, \right)^{\size{x}} \enspace,
\]
and note that
\[
\beta_x^2 \quad = \quad 
    \left( \frac{2}{d} \right)^t \alpha^{ \size{x}} (1 - \alpha)^{t - \size{x}} \enspace.
\]
We have
\begin{align*}
\sum_{l \in \set{0,1}^t} (-1)^{x \cdot l} \sqrt{D_a(u,l)} \quad
    & = \quad \prod_{j = 1}^t \sum_{l_j \in \set{0,1}} (-1)^{x_j l_j} \sqrt{D_a(u_j, l_j)} \\
    & = \quad \prod_{j = 1}^t \left( \sqrt{D_a(u_j, 0)} + (-1)^{x_j} \sqrt{D_a(u_j, 1)} \right) \\
    & = \quad \frac{1}{(2d)^{t/2} } \prod_{j = 1}^t (-1)^{a_{u_j} x_j} \left( \sqrt{1 + 4 \epsilon} + (-1)^{x_j} \sqrt{1 - 4 \epsilon} \, \right) \\
    & = \quad (-1)^{a_u \cdot x} \beta_x  \enspace.
\end{align*}
Hence,
\begin{IEEEeqnarray}{rCl}
\nonumber
\sum_{a \in \{0,1\}^d} \sum_{l,k \in \{0,1\}^t} (-1)^{x \cdot l + y \cdot k}
\sqrt{D_a(u,l) D_a(v,k)} \quad 
    & = & \quad \sum_{a \in \{0,1\}^d} (-1)^{a_u \cdot x + a_v \cdot y} \beta_x \beta_y \\
    & = & \quad 
    \begin{cases}
        2^d \beta_x \beta_{y} & \text{if } \parity (u_x) = \parity (v_{y}) \\
        0 & \text{otherwise, }
    \end{cases}
\end{IEEEeqnarray}
and
\begin{IEEEeqnarray*}{rCl}
    \sigma^{B} \quad 
    & = & \quad \frac{1}{2^t} 
     \sum_{ x,y \in \{0,1\}^t} \beta_x \beta_{y}\sum_{\substack{u,v \in [d]^{t} \\ \parity (u_x) = \parity (v_{y}) } } \ketbra{u}{v}^{I} \tensor  \ketbra{x}{y}^{L} \enspace.
\end{IEEEeqnarray*}
We may verify that for each~$b \in \{0,1\}^d$ with~$\size{b} \le t$, the vector
\begin{equation*}
    \ket{\phi_b} \quad \eqdef \quad \sum_{ x \in \{0,1\}^t} \beta_x \sum_{\substack{u \in [d]^{t} \\ \parity (u_x) = b} } \ket{u}\ket{x}
\end{equation*}
is an eigenvector of~$\sigma^{B}$ with corresponding eigenvalue~$\lambda_b$, where
\begin{IEEEeqnarray*}{rCl}
\lambda_b \quad 
    &\coloneqq& \quad \frac{1}{2^t} \sum_{x \in \{0,1\}^{t}} \sum_{\substack{u \in [d]^{t} \\ \parity (u_x) = b}} \beta_{x}^2 \\
    & = & \quad \frac{1}{2^t} \sum_{x \in \{0,1\}^{t}} d^{t-|x|}   \; \beta_{x}^2 \; n_{|x|,|b|} \\
    & = & \quad \sum_{x \in \{0,1\}^{t}} \frac{n_{|x|,|b|}}{d^{|x|}}\; \alpha^{|x|} ( 1 - \alpha)^{t-|x|} \\
    & = & \quad \sum_{r=0}^{t}  \binom{t}{r} \frac{n_{r,|b|} }{d^r} \;  \alpha^r ( 1 - \alpha)^{t - r} \enspace.
\end{IEEEeqnarray*}
The eigenvalue~$\lambda_{b}$ only depends on the Hamming weight~$|b|$. Thus its multiplicity is~$\binom{d}{h}$, and we write~$\lambda_h$ instead of~$\lambda_{b}$ for any string~$b$ with Hamming weight~$h$. Since
\[
\frac{1}{2^t} \sum_{b \in \set{0,1}^d, \; \size{b} \le t} \density{\phi_b} \quad = \quad \sigma^B \enspace,
\]
we have all the non-zero eigenvalues of~$\sigma^B$, and hence of~$\rho^B$.
\end{proof}

For convenience, we define~$\lambda_h$ as in \Cref{lem-spectrum2} even when~$h > t$. (Observe that~$\lambda_h = 0$ in this case.) 
We may now bound the entropy of~$\rho^B$ as
\begin{align}
\nonumber
\vnentropy(B)_{\rho} \quad 
    & \le \quad \log (d + 1) + \sum_{h=0}^{d} \binom{d}{h} \lambda_h \log \binom{d}{h} \\
\label{eq-agn-entropy-bd}
    & = \quad \log (d + 1) + \sum_{r=0}^{t} \binom{t}{r} \alpha^{r} (1 - \alpha)^{t-r} \sum_{h=0}^{d} \binom{d}{h} \frac{n_{r,h}}{d^r} \log \binom{d}{h} \enspace.
\end{align}
Note that~$q(h) \eqdef \binom{d}{h} \tfrac{n_{r,h}}{d^r}$ is a probability distribution over~$h \in [d]$. Furthermore, the Hamming weight of the parity signature of a string of length~$r$ is at most~$r$, i.e.,~$n_{r,h} = 0$ for~$h > r$. Hence, for every~$r \le \tfrac{d}{2}$, we have
\begin{equation*}
    \sum_{h=0}^{d} \binom{d}{h} \frac{n_{r,h}}{d^r} \log \binom{d}{h} \quad \leq \quad \log \binom{d}{r} \enspace.
\end{equation*} 
Since the binomial distribution~$\sB(t,\alpha)$ is concentrated around its mean~$\alpha t$, by the multiplicative Chernoff bound (Eq.~\eqref{eq:Chernoff-bound}) we have
\begin{IEEEeqnarray*}{rCl's}
\sum_{r=\frac{3\alpha t}{2}}^{t} \binom{t}{r}  \alpha^{r} ( 1-\alpha )^{t-r} \quad
    & \le & \quad \exp \br{- \, \frac{\alpha t}{10}} \enspace.
\end{IEEEeqnarray*}
We take~$\nu < 1/2$ so that~$3\alpha t/2 = 3\nu d/4 < d/2$. To bound the second term in Eq.~\eqref{eq-agn-entropy-bd}, we partition the sum over~$r$ into two intervals,~$r < 3\alpha t/2$ and~$r \ge 3\alpha t/2$. Using the bounds derived above, we have
\begin{IEEEeqnarray*}{rCl's}
\vnentropy(B)_{\rho} \quad 
    & \le & \quad \log (d + 1) + \log \binom{d}{\frac{3\alpha t}{2}} + d \; \exp \br{- \, \frac{\alpha t}{10}} \enspace.
\end{IEEEeqnarray*}
Let~$t = \nu d / 2 \alpha$.
If the last term on the right hand side above is at least~$1$, we have~$\alpha t/10 \le \ln d$, i.e.,~$\nu d /20 \le \ln d$. For any fixed positive constant~$\nu$, this does not hold for sufficiently large~$d$. So for sufficiently large~$d$, the last term is at most~$1$ and
\begin{IEEEeqnarray*}{rCl's}
\vnentropy(B)_{\rho} \quad 
    & \le & \quad \log (d + 1) + \log \binom{d}{\frac{3\alpha t}{2}} + 1 \\
    & \le & \quad \log (d + 1) + d \bentropy \! \left( \frac{3\nu}{4} \right) + 1 \enspace.
\end{IEEEeqnarray*}
This violates the bound on mutual information in Eq.~\eqref{eq-agn-I(A:B)-lb} for sufficiently small positive~$\nu$, and large enough~$d$.
\end{proof}

\section{Quantum Coupon Collection}
\label{sec-coupon-collector}

In this section, we study the Quantum Coupon Collector problem, in particular the sample complexity of both its standard and approximation versions.

\subsection{Technical preparations}
\label{sec-prep}

We begin by introducing the notation and technical background on which we build. The spectrum of a quantum state~$\rho^B$ defined formally below plays an important role in the sequel. Most of the section is devoted to presenting properties of the spectrum that may be inferred directly from prior work.

We use the following notation in the remaining subsections. Let~$k,n,t$ be positive integers such that~$1 < k < n$. Let~$\rho_S \eqdef (\density{\psi_S})^{\tensor t}$, where~$\ket{\psi_S}$ is the quantum sample corresponding to the subset~$S \subset [n]$ (cf.\ Eq.~\eqref{eq-qsample}). Let~$\rho^{A B_1 \dotsb B_t}$ be the following classical-quantum state
\begin{equation}
\label{eq-cq-state}
    \rho^{AB_1\dotsb B_t} \quad \eqdef \quad \frac{1}{ \binom{n}{k}} \sum_{ \substack{S \subseteq [n] \\ \size{S} = k}} \density{S}^{A} \tensor \rho_S^{B_1 \dotsb B_t} \quad 
    = \quad \frac{1}{ \binom{n}{k}} \sum_{ \substack{S \subseteq [n] \\ \size{S} = k}} \density{S} \tensor (\density{\psi_S})^{\tensor t} \enspace.
\end{equation}
Let~$B \coloneqq B_1 \dotsb B_t \,$, with each subregister~$B_i$ holding one quantum sample. The learning algorithms we consider are given the~$t$ copies of the quantum sample~$\ket{\psi_S}$ in register~$B$ as input, corresponding to the set~$S$ in register~$A$. The register~$F$ contains the algorithm's output.

The following observation allows us to focus on a smaller range of parameters when analysing the sample complexity of the learning algorithms.
\begin{lemma}
\label{lem-hard-instances}
Suppose that any learning algorithm with probability of error at most~$\delta \in [0,1)$ for the Quantum Coupon Collector problem with parameters~$k,n_0$ (with~$1 < k < n_0$) requires at least~$f(k,n_0,\delta)$ samples. Then the sample complexity of the problem with parameters~$k,n,\delta$ for any~$n \ge n_0$ is also at least~$f(k,n_0,\delta)$. The analogous property also holds for the approximation version of the problem.
\end{lemma}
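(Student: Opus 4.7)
The plan is to prove the lemma by a direct reduction: an algorithm that solves the problem on the larger universe $[n]$ can be converted into an algorithm for the smaller universe $[n_0]$ using the same number of samples, with at least as good success probability. This shows that the sample complexity for parameters $(k,n)$ cannot be smaller than for $(k,n_0)$.

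Concretely, suppose $\scA$ is a learning algorithm for the Quantum Coupon Collector problem with parameters $k, n, \delta$ that uses $t$ samples. I construct an algorithm $\scA_0$ for parameters $k, n_0, \delta$ as follows. Given $t$ copies of $\ket{\psi_S}$ for an unknown size-$k$ subset $S \subseteq [n_0]$, note that via the natural isometric embedding $\complex^{n_0} \hookrightarrow \complex^n$, each $\ket{\psi_S} = \tfrac{1}{\sqrt k}\sum_{i \in S}\ket{i}$ is already a valid quantum sample for the problem on~$[n]$ when the hidden subset is~$S$ (viewed as a subset of $[n]$). Run $\scA$ on these $t$ samples to obtain a size-$k$ subset $S' \subseteq [n]$. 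If $S' \subseteq [n_0]$, output~$S'$; otherwise output any fixed size-$k$ subset of $[n_0]$.

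For the exact version, since $S \subseteq [n_0]$, whenever $S' = S$ we have $S' \subseteq [n_0]$, so $\scA_0$ outputs~$S$. The worst-case guarantee for~$\scA$ applied to the particular subset~$S$ yields $\Pr[S' = S] \ge 1-\delta$, and hence $\scA_0$ succeeds with probability at least $1-\delta$ using $t$ samples. Therefore $t \ge f(k,n_0,\delta)$, as required. For the approximation version, we post-process~$S'$ into some~$S'' \subseteq [n_0]$ with $|S''| = k$ and $S' \cap [n_0] \subseteq S''$, by replacing each element of $S' \setminus [n_0]$ with an arbitrary distinct element of $[n_0] \setminus S'$. This is always possible because $|[n_0] \setminus S'| = n_0 - |S' \cap [n_0]| \ge k - |S' \cap [n_0]| = |S' \setminus [n_0]|$, where we used $n_0 \ge k$. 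Since $S \subseteq [n_0]$, one has $S \cap S' = S \cap S' \cap [n_0] \subseteq S''$, so $|S \setminus S''| \le |S \setminus S'|$, and taking expectations gives $\expct|S \setminus S''| \le \expct|S \setminus S'| \le \Delta$. Hence $\scA_0$ solves the approximation variant on $[n_0]$ with the same guarantee and the same number of samples.

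There is essentially no obstacle here; the whole content is the observation that the quantum sample $\ket{\psi_S}$ depends only on $S$ and not on the ambient universe, and that the natural embedding preserves the sample. The minor care needed is in the approximation case, where one must check that the post-processing step~$S' \mapsto S''$ both produces a size-$k$ subset of $[n_0]$ and can only decrease the number of mismatches with~$S$; both facts follow from a straightforward counting argument using $n_0 \ge k$.
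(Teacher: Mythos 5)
Your proof is correct and follows essentially the same route as the paper, which disposes of the claim in a single line by observing that an algorithm for parameters $(k,n,\delta)$ automatically handles the subproblem where the hidden subset happens to lie in $[n_0]$. You have usefully made explicit what the paper leaves implicit: that the quantum sample $\ket{\psi_S}$ is the same object regardless of the ambient universe, and, for the approximation variant, that the output can be projected back into $[n_0]$ without increasing the number of mismatches (the counting check $n_0 \ge k$ guaranteeing enough room for the replacement, and the inclusion $S\cap S' \subseteq S''$ since $S\subseteq [n_0]$). None of this changes the substance; it is the same reduction, just spelled out.
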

\begin{proof}
Since any algorithm for the problem (or its approximation version) with parameters~$k,n,\delta$ also solves the problem with parameters~$k,n_0,\delta$, the claims follow.
\end{proof}

We write the state~$\rho^B$ as~$\rho_t^B$ (with subscript~$t$) when we wish to make its dependence on~$t$ explicit. Let~$m \eqdef n - k$. We study the spectrum of~$\rho_t^B$ when~$k > m$; this happens to be the most relevant case. We have
\[
\rho_t^B \quad = \quad \frac {1}{\binom n m} \sum_{S\in [n], |S|=k} \left( \ketbra{\psi_S}{\psi_S} \right)^{\otimes t} \enspace.
\]
Consider the matrix~$C_t$ defined as
\[
C_t \quad \eqdef \quad \frac{1}{{\binom n m}^{1/2}} \sum_{S\in [n],|S|=k} \ket{S} \left( \bra{\psi_S} \right)^{\otimes t} \enspace.
\]
We have 
\begin{align*}
C_t^\adjoint C_t \quad & = \quad \rho_t^B \enspace, \qquad \text{and} \\
C_t C_t^\adjoint \quad & = \quad \frac {1}{\binom n m} \sum_{S,S'\in [n], |S|=|S'|=k} \braket{\psi_{S'}}{\psi_S}^{t} \; \ketbra{S}{S'} \enspace.
\end{align*}
So~$\rho_t^B$ and $C_t C_t^\adjoint$ have the same multiset of non-zero eigenvalues. Moreover, we see that~$C_t C_t^\adjoint$ belongs to the Johnson association scheme with parameters~$n,k$, as~$\braket{\psi_{S'}}{\psi_S}^{t}$ is determined by the value of~$|S\cap S'|$. (For the definition and properties of the Johnson association scheme we use, see Ref.~\cite[Sections~3.2 and~3.3]{ABCKRW20-quantum-coupon-collector}.) Let~$M_t \eqdef C_t C_t^\adjoint$ for~$t \ge 1$.

Let~$M_0 \eqdef {\binom{n}{m}}^{-1} \allone$, where~$\allone$ is the all-$1$ matrix of dimension~$\binom{n}{m} \times \binom{n}{m}$. Let~$D \eqdef \binom{n}{m} C_1 C_1^\adjoint$.  We see that~$M_t = M_{t - 1} \circledcirc D = M_0 \circledcirc D^{\circledcirc t}$ for all~$t \ge 1$, where~`$\circledcirc$' denotes the Hadamard (element-wise) product of matrices with the same dimensions.

The following may be inferred directly from the properties of the Johnson association scheme.
\begin{lemma}
\label{lem-johnson}
There are~$m+1$ orthogonal projection operators~$E_0, E_1, \dotsc, E_m$ defined on the space
\[
\Span \set{ \ket{S} : S \subset [n], \; \size{S} = k}
\]
such that~$E_0 = \frac{1}{\binom n m} \allone$, $\trace(E_s) = \binom{n}{s}  - \binom{n}{s-1}$ for~$s \ge 1$, and~$\sum_{s = 0}^m E_s = \id$. Moreover, for any~$t \ge 0$, the matrix~$M_t$ has spectral decomposition~$M_t = \sum_{s = 0}^m \lambda_{s,t} E_s$ for some non-negative,  not necessarily distinct reals~$\lambda_{0,t}\,, \lambda_{1,t}\,, \dotsc, \lambda_{m,t}\,$.
\end{lemma}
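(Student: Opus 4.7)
The plan is to identify $M_t$ as an element of the Bose--Mesner algebra of the Johnson association scheme $J(n,k)$ (in the regime $k>m$), from which the claimed spectral decomposition and all the stated properties of the idempotents follow by invoking standard results.

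First, I would verify that $M_t$ lies in this algebra. By construction, the $(S,S')$ entry of $M_t$ equals $\binom{n}{m}^{-1} \braket{\psi_{S'}}{\psi_S}^{t}$ for $k$-subsets $S,S' \subseteq [n]$, and $\braket{\psi_{S'}}{\psi_S} = |S \cap S'|/k$ depends only on $|S \cap S'|$. Hence $M_t$ is a linear combination of the adjacency matrices $A_0,A_1,\dotsc,A_m$ of $J(n,k)$, where $A_j$ has $(S,S')$ entry $1$ iff $|S \cap S'| = k-j$. Since $k > m$, the scheme has exactly $m+1$ classes, and $M_t$ lies in its Bose--Mesner algebra.

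Next, I would invoke the standard theory of the Johnson scheme (as summarized in Ref.~\cite[Sections~3.2, 3.3]{ABCKRW20-quantum-coupon-collector}). The Bose--Mesner algebra of $J(n,k)$ is commutative and admits a unique system of $m+1$ primitive idempotents $E_0, E_1, \dotsc, E_m$, which are mutually orthogonal projections on $\Span \set{\ket{S} : |S| = k}$ summing to $\id$. The trivial idempotent is $E_0 = \binom{n}{k}^{-1} \allone = \binom{n}{m}^{-1} \allone$, and the remaining multiplicities are $\trace(E_s) = \binom{n}{s} - \binom{n}{s-1}$ for $s \ge 1$, corresponding to the dimensions of the irreducible $S_n$-modules indexed by the partitions $(n-s,s)$ that appear in the decomposition of the natural permutation representation on $k$-subsets. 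Since $M_t$ belongs to the algebra, it is simultaneously diagonalized by the $E_s$, so $M_t = \sum_{s=0}^{m} \lambda_{s,t}\, E_s$ for some scalars $\lambda_{s,t}$; self-adjointness of $M_t$ makes these real, and positive semidefiniteness of $M_t = C_t C_t^\adjoint$ forces $\lambda_{s,t} \ge 0$.

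There is no genuine technical obstacle here; the only point requiring care is bookkeeping of conventions---matching the intersection parameter $|S \cap S'|$ to the scheme's ``distance'' parameter, and confirming that the $E_s$ act on the correct ambient space $\Span\set{\ket{S} : |S| = k}$---and this is routine and already standardized in Ref.~\cite{ABCKRW20-quantum-coupon-collector}, whose conventions the rest of the section adopts. The substantive work of the section, namely deriving recurrences and explicit formulas for the eigenvalues $\lambda_{s,t}$, begins only \emph{after} this lemma.
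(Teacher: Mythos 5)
Your proposal is correct and matches the paper's intent precisely: the paper offers no explicit proof, merely stating that the lemma ``may be inferred directly from the properties of the Johnson association scheme'' with a citation to the same reference you invoke, and you have correctly fleshed out that one-line citation into the standard argument --- $M_t$ has $(S,S')$ entry depending only on $\size{S \cap S'}$, hence lies in the Bose--Mesner algebra of $J(n,k)$ (with $m+1$ classes since $k > m$), and the claimed properties of the primitive idempotents $E_0,\dotsc,E_m$ as well as the real nonnegative eigenvalues (from PSD-ness of $M_t = C_tC_t^\adjoint$, and of $M_0 = \binom{n}{m}^{-1}\allone$) are standard. The only minor clean-up is to note the $t = 0$ case separately, since $M_0$ is defined by fiat rather than as $C_0 C_0^\adjoint$; there $M_0 = E_0$ directly and the decomposition is trivial.
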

As an immediate consequence, we get the spectrum of~$\rho_t^B$.
\begin{corollary}
\label{lem-spectrum}
For any~$t \ge 1$, the quantum state~$\rho_t^B$ has eigenvalues~$\lambda_{0,t}\,, \lambda_{1,t}\,, \dotsc, \lambda_{m,t}$ (not necessarily distinct, and not necessarily non-zero), and~$0$. The multiplicity of~$\lambda_{s,t}$ is~$l_s$ independent of~$t$, where~$l_s \eqdef \binom{n}{s}  - \binom{n}{s-1}$ for~$s \ge 1$ and~$l_0 \eqdef 1$. Finally, $\sum_{s = 0}^m l_s \lambda_{s,t} = \trace( \rho_t^B) = 1$.
\end{corollary}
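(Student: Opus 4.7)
The corollary is essentially a repackaging of \Cref{lem-johnson} through the factorization of $\rho_t^B$ and $M_t$ via $C_t$. The plan is: transfer the spectral decomposition of $M_t$ supplied by the Johnson scheme to $\rho_t^B$ using the standard $C^\adjoint C$ / $C C^\adjoint$ duality, and then read off the normalization $\sum_s l_s \lambda_{s,t} = 1$ from the fact that $\rho_t^B$ is a density operator.

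First, I would recall the elementary linear algebra fact that for any rectangular matrix $C$, the operators $C^\adjoint C$ and $C C^\adjoint$ share the same multiset of nonzero eigenvalues. Applied to $C_t$, which by construction satisfies $C_t^\adjoint C_t = \rho_t^B$ and $C_t C_t^\adjoint = M_t$, this shows that the nonzero spectra of $\rho_t^B$ and $M_t$ agree (including multiplicities). Since $\rho_t^B$ acts on the $n^t$-dimensional register $B$ while $M_t$ acts on the $\binom{n}{k}$-dimensional space spanned by $\set{\ket{S} : \size{S} = k}$, the state $\rho_t^B$ must carry additional zero eigenvalues filling out the ambient dimension, accounting for the ``and $0$'' in the statement.

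Next, I would invoke \Cref{lem-johnson} directly: it provides pairwise orthogonal projections $E_0,\dotsc,E_m$ with $\sum_s E_s = \id$ and $\trace(E_s) = l_s$, together with the decomposition $M_t = \sum_{s=0}^m \lambda_{s,t} E_s$. Since the $E_s$ are mutually orthogonal projections summing to the identity on a space of dimension $\binom{n}{k} = \sum_s l_s$, the multiset of eigenvalues of $M_t$ consists of each $\lambda_{s,t}$ listed with multiplicity $l_s$. Transferring this through the first step gives the asserted multiplicity structure for the nonzero spectrum of $\rho_t^B$.

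Finally, for the trace identity, I would simply observe that $\rho_t^B$ is a convex combination of the unit-trace pure states $\density{\psi_S}^{\tensor t}$, hence $\trace(\rho_t^B) = 1$; equating this with $\trace(M_t) = \sum_{s=0}^m \lambda_{s,t} \trace(E_s) = \sum_{s=0}^m l_s \lambda_{s,t}$ yields the claimed normalization. I do not anticipate any real obstacle: the only subtlety is to note that the ``multiplicity $l_s$'' phrasing refers to the contribution of each index $s$ to the multiset of eigenvalues, even though different $\lambda_{s,t}$ may coincide (in which case the actual eigenvalue multiplicity is the sum of the corresponding $l_s$), and this is precisely what the $E_s$-decomposition delivers.
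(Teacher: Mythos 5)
Your proof is correct and follows exactly the route the paper lays out in Section~4.1: the duality $C_t^\adjoint C_t = \rho_t^B$ versus $C_t C_t^\adjoint = M_t$ gives agreement of the non-zero spectra, and the Johnson-scheme decomposition $M_t = \sum_s \lambda_{s,t} E_s$ of Lemma~4.2 with $\trace(E_s) = l_s$ supplies the multiplicities and, combined with $\trace(\rho_t^B) = 1$, the normalisation. The paper presents the corollary as an immediate consequence without writing out the steps; you have merely made that chain explicit.
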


For~$j = 0, 1, \dotsc, m$, define
\begin{align*}
p_{j,-1} \quad & \eqdef \quad \frac{j(k-j+1)(m-j+1)}{(n-2j+1)(n-2j+2)k} \\
p_{j,0} \quad & \eqdef \quad \frac k n+\frac{j(n-j+1)(k-m)^2}{nk(n-2j)(n-2j+2)} \\
p_{j,+1} \quad & \eqdef \quad \frac{(k-j)(n-j+1)(m-j)}{(n-2j)(n-2j+1)k} \enspace.
\end{align*}
Further, define~$p_{-1,i} \eqdef 0$ and~$p_{m+1,i} \eqdef 0$ for~$i \in \set{ -1, 0, +1}$, and similarly~$E_{-1} \eqdef 0$ and~$E_{m+1} \eqdef 0$. We use the following properties proven in Ref.~\cite{ABCKRW20-quantum-coupon-collector}.
\begin{lemma}[Lemma~7 and Corollary~8 in Ref.~\cite{ABCKRW20-quantum-coupon-collector}]
For each~$j \in \naturals_{m+1}$, we have 
\[
E_j \circledcirc D \quad = \quad p_{j+1,-1} E_{j+1} + p_{j,0} E_{j} + p_{j-1,+1} E_{j-1} \enspace,
\]
and the numbers~$p_{j,0}, p_{j,-1}, p_{j,+1}$ are non-negative and satisfy~$p_{j,0} + p_{j,-1} + p_{j,+1} = 1$.
\end{lemma}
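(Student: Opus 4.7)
The plan is to exploit the Johnson association scheme structure, specifically the fact that the matrix $D$ is supported on only the two lowest-index primitive idempotents. Concretely, I would first establish the decomposition $D = d_0 E_0 + d_1 E_1$ for some non-negative scalars $d_0, d_1$. To see this, observe that $D_{S,S'} = |S \cap S'|/k = \tfrac{1}{k}\sum_{e \in [n]} [e\in S][e\in S']$, so $D = \tfrac{1}{k} \sum_{e \in [n]} X_e X_e^{\top}$, where $X_e \in \complex^{\binom{[n]}{k}}$ is the indicator vector of those $k$-subsets containing $e$. Thus $D$ is a Gram matrix whose column span is the image of the natural $S_n$-equivariant inclusion $\complex^n \to \complex^{\binom{[n]}{k}}$; the source $\complex^n$ decomposes as the trivial plus standard representation of $S_n$, which coincide with the sub-representations onto which $E_0$ and $E_1$ project in the Johnson scheme decomposition of $\complex^{\binom{[n]}{k}}$. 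Since $D$ is symmetric and commutes with the $S_n$-action, Schur's lemma forces $D$ to act as a scalar on each irreducible summand, yielding the claimed form.

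Given this, the tridiagonal shape of $E_j \circledcirc D$ becomes transparent. Since $E_0 = \binom{n}{m}^{-1} \allone$, we have $E_j \circledcirc E_0 = \binom{n}{m}^{-1} E_j$, while the $Q$-polynomial (cometric) property of $J(n,k)$ supplies a three-term identity $E_j \circledcirc E_1 = \alpha_j E_{j+1} + \beta_j E_j + \gamma_j E_{j-1}$ with the usual boundary conventions. Substituting gives
\[
E_j \circledcirc D \quad = \quad d_1 \alpha_j \, E_{j+1} \; + \; \left(\binom{n}{m}^{-1} d_0 + d_1 \beta_j\right) E_j \; + \; d_1 \gamma_j \, E_{j-1} \enspace,
\]
which is the stated three-term identity under the correspondences $p_{j+1,-1} = d_1 \alpha_j$, $p_{j,0} = \binom{n}{m}^{-1} d_0 + d_1 \beta_j$, and $p_{j-1,+1} = d_1 \gamma_j$.

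Non-negativity is immediate from the Schur product theorem: both $D$ (a Gram matrix) and $E_j$ (an orthogonal projection) are positive semidefinite, so $E_j \circledcirc D$ is PSD; since $E_{j-1}, E_j, E_{j+1}$ project onto mutually orthogonal subspaces, the expansion above is a spectral decomposition of $E_j \circledcirc D$ and each coefficient must therefore be non-negative. To obtain the normalization $p_{j,-1} + p_{j,0} + p_{j,+1} = 1$, I would sum the lemma's identity over all $j \in \naturals_{m+1}$. On the left we get $(\sum_j E_j) \circledcirc D = \id \circledcirc D = \id$, since $\sum_j E_j = \id$ and every diagonal entry of $D$ equals $1$; re-indexing on the right produces $\sum_j (p_{j,-1} + p_{j,0} + p_{j,+1}) E_j$, and comparing coefficients of each $E_j$ via the linear independence of the idempotents yields the claim.

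The main obstacle is extracting the explicit rational-function formulas for $p_{j,0}$ and $p_{j,\pm 1}$ displayed in the statement. This requires pinning down the scalars $d_0, d_1$ together with the recurrence coefficients $\alpha_j, \beta_j, \gamma_j$, which amounts to computing entries of the $Q$-matrix of $J(n,k)$ (equivalently, evaluating dual Hahn / Eberlein polynomials). One workable route is to apply both sides of the three-term identity to specific vectors lying in chosen irreducible summands and match against standard tables of $Q$-matrix entries; a second is to fix $d_0, d_1$ and the base cases $j \in \set{0,1}$ by direct counting, and then propagate the coefficients inductively via the recurrence together with the constraint $p_{j,-1}+p_{j,0}+p_{j,+1}=1$. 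Either path is routine bookkeeping, though the resulting binomial manipulations are delicate.
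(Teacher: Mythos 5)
The paper itself does not give a proof of this lemma: it is imported wholesale from Arunachalam, Belovs, Childs, Kothari, Rosmanis, and de Wolf (their Lemma~7 and Corollary~8), so there is no local argument to compare against. Judged on its own terms, your plan is sound in outline: the Gram-matrix identification $D = \tfrac{1}{k}\sum_e X_e X_e^\top$, the Schur-lemma/$S_n$-equivariance argument placing $D$ in $\Span\{E_0, E_1\}$, the reduction of $E_j \circledcirc D$ to the $Q$-polynomial three-term identity for $J(n,k)$, the Schur-product-theorem argument for non-negativity, and the summation-over-$j$ trick for the normalization are all correct and clean.

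However, there is a genuine gap. The lemma does not merely assert that \emph{some} three-term recurrence holds; it asserts the identity with the \emph{specific} rational-function coefficients $p_{j,-1}$, $p_{j,0}$, $p_{j,+1}$ displayed just above it in the paper. Your argument establishes existence of coefficients with the right sign and normalization properties, but you explicitly defer computing $d_0$, $d_1$ and the $Q$-matrix entries $\alpha_j$, $\beta_j$, $\gamma_j$, which is exactly where the content lies. Without pinning those down and checking that $d_1\alpha_j$, $\binom{n}{m}^{-1} d_0 + d_1\beta_j$, and $d_1\gamma_j$ equal the stated expressions, none of the displayed formulas is verified, and neither is it established that the quantities the rest of Section~4 manipulates (which are precisely these formulas) satisfy the recurrence. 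The non-negativity and summation arguments you give, while elegant, only transfer to the stated $p$'s once the identity with those $p$'s is in hand -- so they do not shortcut the missing bookkeeping. As it stands the proposal is a correct reduction to a computation, not a proof.
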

This property helps us derive a recurrence for the eigenvalues~$(\lambda_{s,t})$. For convenience, define~$\lambda_{-1,t}$ and~$\lambda_{m+1,t}$ to be~$0$ for all~$t \ge 0$.
\begin{lemma}
\label{lem-eigenvalue-recurrence}
We have~$\lambda_{0,0} = 1$, and~$\lambda_{s,0} = 0$ for~$s \in [m]$. For~$t \ge 1$, for all~$s \in \naturals_{m+1}$,
\[
\lambda_{s,t} \quad = \quad p_{s,0} \lambda_{s,t-1} + p_{s,-1} \lambda_{s-1,t-1} + p_{s,+1} \lambda_{s+1,t-1} \enspace.
\]
\end{lemma}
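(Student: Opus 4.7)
The proof will be almost mechanical given the machinery already in place, so the plan is to simply assemble the pieces. The starting point is the identity $M_t = M_{t-1} \circledcirc D$ for $t \ge 1$, combined with $M_0 = \frac{1}{\binom{n}{m}} \allone = E_0$, both of which were noted immediately before \Cref{lem-johnson}. The spectral decomposition $M_t = \sum_{s=0}^m \lambda_{s,t} E_s$ guaranteed by \Cref{lem-johnson} is unique in its coefficients, since the $E_s$ project onto pairwise orthogonal subspaces whose direct sum is the whole ambient space (as $\sum_s E_s = \id$). This uniqueness is what lets us read off a recurrence on the coefficients.

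First I would dispose of the base case: since $M_0 = E_0$, comparing with $M_0 = \sum_s \lambda_{s,0} E_s$ and using uniqueness gives $\lambda_{0,0} = 1$ and $\lambda_{s,0} = 0$ for $s \in [m]$, as claimed.

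For the inductive step, substitute the spectral decomposition of $M_{t-1}$ into $M_t = M_{t-1} \circledcirc D$ and use linearity of the Hadamard product in its first argument to obtain
\begin{equation*}
M_t \quad = \quad \sum_{j=0}^{m} \lambda_{j,t-1} \br{ E_j \circledcirc D } \enspace.
\end{equation*}
Now apply the preceding lemma (Lemma~7/Corollary~8 of Ref.~\cite{ABCKRW20-quantum-coupon-collector}, as quoted in the excerpt) to each $E_j \circledcirc D$, and regroup by collecting the coefficient of each $E_s$. The contributions to $E_s$ come from three indices: $j = s$ contributes $p_{s,0} \lambda_{s,t-1}$, $j = s-1$ contributes $p_{s,-1} \lambda_{s-1,t-1}$ (via the $p_{j+1,-1} E_{j+1}$ term), and $j = s+1$ contributes $p_{s,+1} \lambda_{s+1,t-1}$ (via the $p_{j-1,+1} E_{j-1}$ term). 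The boundary conventions $E_{-1} = E_{m+1} = 0$ and $\lambda_{-1,t-1} = \lambda_{m+1,t-1} = 0$ make the two endpoints $s = 0$ and $s = m$ fit the same formula. Comparing with $M_t = \sum_s \lambda_{s,t} E_s$ and invoking uniqueness of the coefficients yields exactly the stated recurrence.

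There is no real obstacle here; the only point requiring any care is the index bookkeeping, i.e.\ making sure that the subscripts on the $p_{j,i}$'s line up correctly with the statement (in particular that the coefficient multiplying $\lambda_{s-1,t-1}$ in the recurrence is $p_{s,-1}$ rather than $p_{s-1,-1}$, which is exactly what falls out of the shift $j \mapsto j+1$ in the $p_{j+1,-1} E_{j+1}$ term).
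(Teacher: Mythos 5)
Your proof is correct and follows essentially the same route as the paper: expand $M_t = M_{t-1} \circledcirc D$ using the spectral decomposition of $M_{t-1}$ and the quoted expansion of $E_j \circledcirc D$, then regroup and compare coefficients of the $E_s$. The extra attention you give to uniqueness of the spectral coefficients and to the subscript bookkeeping is a welcome clarification but does not change the argument.
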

\begin{proof}
We prove this by induction. Recall that~$M_t = \sum_{s = 0}^m \lambda_{s,t} E_s$ for all~$t \ge 0$, and~$M_0 = {\binom{n}{m}}^{-1} \allone = E_0$. So we get the claimed values of~$\lambda_{s,0}$ for all~$s$.

For~$t \ge 0$ we have
\begin{align*}
M_{t+1} \quad = \quad M_t \circledcirc D 
\quad & = \quad \sum_{s=0}^m \lambda_{s,t} E_s \circledcirc D \\
    & = \quad  \sum_{s=0}^m \lambda_{s,t} \left( p_{s+1,-1} E_{s+1} + p_{s,0} E_{s} + p_{s-1,+1} E_{s-1} \right) \\ 
    & = \quad \sum_{s=0}^m \left( p_{s,0} \lambda_{s,t} + p_{s,-1} \lambda_{s-1,t} + p_{s,+1} \lambda_{s+1,t} \right) E_s \\
    & = \quad \sum_{s=0}^m \lambda_{s,t+1} E_s \enspace.
\end{align*}
The recurrence follows. (Since~$p_{0,-1} = p_{m,+1} = 0$, we do not have to consider the boundary points~$0$ and~$m$ separately.)
\end{proof}

\subsection{Properties of the spectrum}
\label{sec-spectrum}

In this section, we develop new properties of the spectrum of the quantum state~$\rho_t^B$ defined in Eq.~\eqref{eq-cq-state} in \Cref{sec-prep}.

In our analysis, the eigenvalues~$\lambda_{s,t}$ always occur in multiples of~$l_s \,$. We derive a recurrence for~$l_s \lambda_{s,t}$ to analyse the quantities we encounter. For convenience, define~$l_{-1} \eqdef 1$ and~$l_{m+1} \eqdef \binom{n}{m+1} - \binom{n}{m}$.
\begin{lemma}
\label{lem-recurrence2}
We have~$l_0 \lambda_{0,0} = 1$, and~$l_s \lambda_{s,0} = 0$ for~$s \in [m]$. For~$t \ge 1$, for all~$s \in \naturals_{m+1}$,
\[
l_s \lambda_{s,t} \quad = \quad p_{s,0} \big( l_s \lambda_{s,t-1} \big) + p_{s-1,+1} \big( l_{s-1} \lambda_{s-1,t-1} \big) + p_{s+1,-1} \big( l_{s+1}\lambda_{s+1,t-1} \big) \enspace.
\]
\end{lemma}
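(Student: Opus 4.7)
The plan is to deduce the recurrence directly from \Cref{lem-eigenvalue-recurrence} by multiplying both sides by~$l_s$ and rearranging. Concretely, starting from
\[
\lambda_{s,t} \quad = \quad p_{s,0} \lambda_{s,t-1} + p_{s,-1} \lambda_{s-1,t-1} + p_{s,+1} \lambda_{s+1,t-1} \enspace,
\]
the desired form arises provided we can establish the two ``detailed balance'' identities
\[
l_s \, p_{s,-1} \quad = \quad l_{s-1} \, p_{s-1,+1} \qquad \text{and} \qquad l_s \, p_{s,+1} \quad = \quad l_{s+1} \, p_{s+1,-1} \enspace,
\]
for all~$s \in \naturals_{m+1}$, together with the boundary conventions~$l_{-1} p_{-1,+1} = 0$ and~$l_{m+1} p_{m+1,-1} = 0$ (these follow from~$p_{-1,+1} = p_{m+1,-1} = 0$). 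Informally, these are the reversibility relations of the birth-death chain with transition probabilities~$p_{s,\pm 1}, p_{s,0}$ with respect to the measure~$(l_s)$.

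The two identities are symmetric to each other (the second is the first with~$s$ replaced by~$s+1$), so it suffices to verify one of them. The verification is a routine computation using the explicit expressions for~$p_{s,\pm 1}$ and the two algebraic identities
\[
(n - s + 1) \, l_s \quad = \quad (n - 2s + 1) \binom{n}{s} \qquad \text{and} \qquad (s+1) \, l_{s+1} \quad = \quad (n - 2s - 1) \binom{n}{s} \enspace,
\]
both of which are obtained from the definition~$l_s = \binom{n}{s} - \binom{n}{s-1}$ (for~$s \ge 1$) together with the elementary relation~$s \binom{n}{s} = (n-s+1) \binom{n}{s-1}$. Substituting these into~$l_s \, p_{s,+1}$ and~$l_{s+1} \, p_{s+1,-1}$ shows that both sides equal
\[
\frac{\binom{n}{s} \, (k-s)(m-s)}{(n-2s) \, k} \enspace,
\]
which gives the identity.

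The base cases~$l_0 \lambda_{0,0} = 1$ and~$l_s \lambda_{s,0} = 0$ for~$s \in [m]$ are immediate from the initial conditions in \Cref{lem-eigenvalue-recurrence}. For the inductive step, multiplying the recurrence of \Cref{lem-eigenvalue-recurrence} by~$l_s$ and then replacing~$l_s p_{s,-1}$ by~$l_{s-1} p_{s-1,+1}$ and~$l_s p_{s,+1}$ by~$l_{s+1} p_{s+1,-1}$ via the detailed balance identities yields precisely the claimed recurrence for~$l_s \lambda_{s,t}$. The only real obstacle is bookkeeping in the verification of the detailed balance relations; everything else is a direct substitution. The boundary cases~$s = 0$ and~$s = m$ require no separate treatment because~$p_{0,-1} = p_{m,+1} = 0$ and our conventions for~$\lambda_{-1,t}, \lambda_{m+1,t}$ make the off-range summands vanish.
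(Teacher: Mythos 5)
Your proposal is correct and follows essentially the same route as the paper: multiply the recurrence from \Cref{lem-eigenvalue-recurrence} by~$l_s$ and absorb the resulting ratios~$l_s/l_{s\pm 1}$ via the identities~$p_{s-1,+1} = p_{s,-1}\, l_s/l_{s-1}$ and~$p_{s+1,-1} = p_{s,+1}\, l_s/l_{s+1}$, which the paper states as facts to be ``verified by direct calculation.'' Your version merely makes that direct calculation explicit (and correctly so: both sides of~$l_s p_{s,+1} = l_{s+1} p_{s+1,-1}$ do reduce to~$\binom{n}{s}(k-s)(m-s)/\bigl((n-2s)k\bigr)$), so there is no substantive difference.
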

\begin{proof}
The values for~$t = 0$ are straightforward to verify. Multiplying the recurrence relation in \Cref{lem-eigenvalue-recurrence} by~$l_s$ on both sides, we get
\[
l_s \lambda_{s,t} \quad = \quad p_{s,0} \big( l_s \lambda_{s,t-1} \big) + p_{s,-1} \frac {l_s}{l_{s-1}} \big( l_{s-1} \lambda_{s-1,t-1} \big) + p_{s,+1} \frac{l_s}{l_{s+1}} \big( l_{s+1} \lambda_{s+1,t-1} \big) \enspace.
\]	
We may verify by direct calculation that	
\begin{align*} 
p_{s-1,+1} & \quad = \quad p_{s,-1} \frac {l_{s}}{l_{s-1}} \enspace, \qquad \text{and} \\ 
p_{s+1,-1} & \quad = \quad p_{s,+1} \frac {l_{s}}{l_{s+1}} \enspace,
\end{align*}
which gives us the recurrence relation.
\end{proof}

Due to the recurrence for~$l_s \lambda_{s,t}$ and the property that~$p_{s,0} + p_{s,-1} + p_{s+1}=1$, we may interpret~$l_s\lambda_{s,t}$ as a probability arising from a suitably defined random walk. Define random variables~$(W_t : t \ge 0)$ inductively as follows. Let~$W_0 \eqdef 0$, and~$W_{t+1}$ be given via the transition probabilities below. For~$t \ge 0$ and~$s \in \naturals_{m+1}$,
\begin{align*}
\Pr[ W_{t+1} = a \,|\, W_t = s]
    \quad = \quad
    \begin{cases}
	p_{s,0} &\text{if } a = s \\
	p_{s,-1} &\text{if } a = s-1 \\
	p_{s,+1} &\text{if } a = s+1 \\
    0 & \text{otherwise.}
	\end{cases}
\end{align*}
This defines a random walk~$W$ on the points~$\naturals_{m+1}$. (Again, since~$p_{0,-1} = p_{m,+1} = 0$, we do not have to consider the transitions at the boundary points~$0$ and~$m$ separately.) Furthermore, using the recurrence in \Cref{lem-recurrence2} we may verify that~$l_s \lambda_{s,t} = \Pr[W_t = s]$.
 
A similar random walk was also used in Ref.~\cite{ABCKRW20-quantum-coupon-collector}, but we do not know of a direct relation between the two. In particular, we are not sure if there is an intrinsic reason why the recurrences for the two quantities~$l_s\lambda_{s,t}$ and~$\lambda_{s,t}$ involve the same set of coefficients. Due to this connection we may also relate~$\lambda_{s,t}$ to the random walk~$(W_t)$ by changing the initial condition~$W_0 = 0$. We may verify that~$\lambda_{s,t} = \Pr[ W_t=0 \,|\, W_0=s ]$. So, $\lambda_{s,t}$ is the probability that the walk arrives at~$0$ in~$t$ steps when we start at~$s$, whereas~$l_s \lambda_{s,t}$ is the probability for the reverse traversal (the probability that it arrives at~$s$ in~$t$ steps when we start at~$0$).

The relationship between the quantity~$ l_s \lambda_{s,t}$ and the random walk~$W$ helps us derive several useful bounds that we present next. We say that a random walk~$U \eqdef (U_t : t \ge 0)$ on~$\integers$ \emph{dominates\/} another random walk~$U' \eqdef (U'_t : t \ge 0)$ on~$\integers$ if for all~$t \ge 0$ and~$i \in \integers$, we have~$\Pr[ U_t \ge i] \ge \Pr[ U'_t \ge i]$. I.e., for any integer~$i$, the first walk has at least as high a probability of being to right of~$i$ as the second walk. We start with some general conditions under which one random walk on the line dominates another. In their simplest form, the conditions state that a nearest-neighbour walk on the integers that has a higher probability of moving to the right and a lower probability of moving to the left than another similar walk dominates the other walk. 

\begin{restatable}{lemma}{domination}
\label{lem-domination}
Let~$U \eqdef (U_t : t \ge 0)$ and~$U' \eqdef (U'_t : t \ge 0)$ be two possibly time-dependent random walks on the integers that start at~$0$, and move by at most~$1$ in either direction in one time step. For~$i,j \in \integers$, let~$q_{t,i,j} \eqdef \Pr[ U_t = j \,|\, U_{t-1} = i]$, and~$q'_{t,i,j}$ be the analogous transition probability for~$(U'_t)$.
Suppose that for all~$t \ge 0$, for all~$i \in \integers$ reachable by both walks at time~$t$, we have
\[
q_{t,i,i+1} \quad \ge \quad q'_{t,i,i+1} \qquad \text{and} \qquad
q_{t,i,i-1} \quad \le \quad q'_{t,i,i-1} \enspace; 
\]
and for all~$i$ such that~$i$ is reachable by~$U$ and~$i-1$ by~$U'$ at time~$t$, we have
\[
q_{t,i,i+1} + q_{t,i,i} \quad \ge \quad q'_{t,i-1,i} \enspace.
\]
Then~$U$ dominates~$U'$, i.e., for all~$t \ge 0$ and~$i \in \integers$, we have $\Pr[ U_t \ge i] \ge \Pr[ U'_t \ge i]$. Moreover, $\expct[ U_t ] \ge \expct[ U'_t ]$.
\end{restatable}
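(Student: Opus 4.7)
My approach is by coupling: I construct a joint probability space carrying both walks with their prescribed marginal laws and such that $U_t \ge U'_t$ almost surely for every $t$. Stochastic domination then follows trivially, and since $\size{U_t}, \size{U'_t} \le t$, linearity of expectation applied to the pointwise inequality yields $\expct[U_t] \ge \expct[U'_t]$.

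The coupling is built inductively in $t$, starting with $U_0 = U'_0 = 0$. Suppose we have a coupling up to time $t$ with $U_t \ge U'_t$ almost surely. Conditioning on arbitrary values $U_t = i$ and $U'_t = j$ with $i \ge j$ (both positions are reachable by their respective walks at time $t$, so the hypotheses of the lemma apply at them), I define the joint law of the next step in three cases by the gap $i - j$. If $i - j \ge 2$, sample $U_{t+1}$ and $U'_{t+1}$ independently from their conditional distributions; since each walk moves by at most one, $U_{t+1} \ge i - 1 \ge j + 1 \ge U'_{t+1}$. If $i = j$, use the standard monotone coupling driven by a single uniform $X \in [0,1]$: set $U_{t+1}$ to $i-1, i, i+1$ according as $X \in [0, q_{t+1,i,i-1})$, $[q_{t+1,i,i-1}, q_{t+1,i,i-1} + q_{t+1,i,i})$, or the remaining interval, and similarly for $U'_{t+1}$ using the primed transition probabilities. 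Condition~1 ensures the CDF of the $U'$-step dominates that of the $U$-step pointwise, forcing $U_{t+1} \ge U'_{t+1}$.

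The substantive case is $i = j + 1$. The invariant can only fail via the event $\{U_{t+1} = i-1\} \cap \{U'_{t+1} = i\}$, and by the Fr\'echet lower bound a coupling achieving zero probability for this event exists iff $q_{t+1,i,i-1} + q'_{t+1,i-1,i} \le 1$, which is a restatement of Condition~2. An explicit such coupling uses one uniform $X \in [0,1]$: define $U_{t+1} \in \{i+1, i, i-1\}$ on consecutive left-to-right intervals of lengths $q_{t+1,i,i+1}, q_{t+1,i,i}, q_{t+1,i,i-1}$, and $U'_{t+1} \in \{i, i-1, i-2\}$ on consecutive left-to-right intervals of lengths $q'_{t+1,i-1,i}, q'_{t+1,i-1,i-1}, q'_{t+1,i-1,i-2}$. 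Then $\{U_{t+1} = i-1\} = \{X \ge 1 - q_{t+1,i,i-1}\}$ and $\{U'_{t+1} = i\} = \{X < q'_{t+1,i-1,i}\}$ are disjoint precisely under Condition~2, so the invariant is preserved almost surely.

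The main obstacle is this adjacent case; the other two are routine. The virtue of the hypotheses is that Conditions~1 and~2 are exactly what one would write down by reverse-engineering the constraints that propagate the invariant $U_t \ge U'_t$ forward in time, and they are needed only at the values of $i$ and $j$ that actually arise in the coupling, matching the reachability qualifiers in the statement of the lemma.
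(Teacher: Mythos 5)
Your proof is correct and takes essentially the same approach as the paper's: both construct a coupling preserving the invariant $U_t \ge U'_t$ by induction on $t$, with the casework on the gap $U_t - U'_t \in \{0, 1, \ge 2\}$ invoking Conditions~1 and~2 exactly where you do. The only cosmetic difference is that the paper drives both walks off a single shared uniform per step with one fixed interval rule (left, stay, right) for all cases, whereas you tailor the interval ordering (and use independence when the gap is $\ge 2$) case by case; the underlying verification is identical.
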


We defer the proof of this lemma to \Cref{dom_proof}. 

The transition probabilities of the walk~$W$ are quite complicated. However, they may be approximated by simpler quantities, and this leads us to walks resembling the classical coupon collection process. For a positive integer~$n' \ge m$ consider a set of $n'$ coupons out of which~$m$ coupons are ``marked''. As in the classical coupon collector problem, suppose we start with no coupons, and at each step, pick a coupon uniformly at random out of the~$n'$ coupons. Let $\tW(n') \eqdef \Big( \tW_t(n') : t \ge 0 \Big)$ be the random walk on~$\naturals_{m+1}$ where $\tW_t(n')$ is the number of distinct marked coupons coupons collected within~$t$ steps. In more detail, the walk is defined by the following initial condition and transition probabilities, where~$t \ge 0$, and~$s \in \naturals_{m+1}$.
\begin{equation}
\label{eq-approx-walk}  
\begin{split}
\tW_0(n') \quad & \eqdef \quad 0 \enspace, \qquad \text{and} \\ 
\Pr \Big[\tW_{t+1}(n') = a \; \Big|\; \tW_t(n') =s \Big] \quad
    & \eqdef \quad \begin{cases}
        1 - \frac {m-s} {n'} & \text{if } a=s \\
        \frac {m-s}{n'}& \text{if } a=s+1 \\
        0 & \text{otherwise.}
\end{cases}
\end{split}
\end{equation}

We use the coupon-collector-like random walk $W(n')$ for deriving the bounds in the next three lemmas. We would like to obtain the tightest possible results for the Quantum Coupon Collector problem, i.e., those matching the performance of the known algorithms. For this, we require bounds on the spectral quantities in the lemmas for values of the parameter~$m$ ranging from~$1$ up to~$\Theta(n)$. For technical reasons that become apparent in the proofs, we are only able to obtain bounds for~$m$ up to~$\Theta(n / \ln n)$ in some of the lemmas. Nevertheless, this suffices to characterise the leading term of the sample complexity tightly.
\begin{lemma}
\label{lem-expct-ub}
Let~$t \eqdef k\ln m +cn$ for some~$c \in \reals$, and suppose~$m \ln m \le \size{c}n/10$, and~$1 \le m \le n/40$. Then $ \expct[W_t] \leq m - \min \set{\e^{-2c}, \e^{-c/3}}$.
\end{lemma}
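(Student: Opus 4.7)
The plan is to interpret $\expct[W_t]$ via the random-walk description $l_s\lambda_{s,t} = \Pr[W_t = s]$ established above, dominate $W$ by a coupon-collector walk $\tW(n')$ (as in Eq.~\eqref{eq-approx-walk}) with $n'$ chosen as large as possible, apply \Cref{lem-domination} to obtain $\expct[W_t] \le \expct[\tW_t(n')]$, and combine the closed form $\expct[\tW_t(n')] = m - m(1-1/n')^t$ (which follows from a standard indicator argument: a specific marked coupon is never drawn in $t$ steps with probability $(1-1/n')^t$) with the hypotheses to verify the claimed bound.

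For the domination, I would take $n' \eqdef k(k-m+1)/(k+1)$. The key condition of \Cref{lem-domination}, namely $p_{s,+1} \le (m-s)/n'$ for all $s \in \naturals_{m+1}$, rearranges to $n' \le f(s) \eqdef \frac{k(n-2s)(n-2s+1)}{(k-s)(n-s+1)}$; the other conditions are automatic since $\tW(n')$ never moves left. My choice $n' = f(m)$ then suffices provided $f$ is minimised at $s = m$ over $[0,m]$. To verify this, I would compute $(\ln f)'(s) = -2/(n-2s) - 2/(n-2s+1) + 1/(k-s) + 1/(n-s+1)$ and bound each term; using $n-2s \le n$ for the negative terms and $k-s \ge n-2m$, $n-s+1 \ge k+1$ for the positive ones, combined with the hypothesis $m \le n/40$ (which gives $n-2m \ge 19n/20$), this makes $(\ln f)'$ strictly negative on $[0,m]$.

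The final inequality $m(1-1/n')^t \ge \min\set{\e^{-2c}, \e^{-c/3}}$ follows by taking logs, substituting $t = k\ln m + cn$ with $k = n-m$, and simplifying into the clean identity $\ln m - t/n' = -\frac{m\ln m}{k-m+1} - \beta c$, where $\beta \eqdef n(k+1)/[k(k-m+1)]$. The hypothesis $m\ln m \le \size{c}n/10$ combined with $k-m+1 \ge 19n/20$ bounds the first term in magnitude by $2\size{c}/19$. The hypothesis $m \le n/40$ yields $\beta = (1 + m/k)(1 + m/(k-m+1)) \le (40/39)(39/38) = 20/19$, and the higher-order terms in the expansion of $\ln(1-1/n')$ contribute only $O((\ln m + \size{c})/n)$, negligible in the regime of interest. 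For $c \ge 0$, the exponent is then bounded below by $-(2/19 + 20/19)c = -(22/19)c \ge -2c$; for $c < 0$, it is bounded below by $(1 - 2/19)\size{c} = (17/19)\size{c} \ge \size{c}/3$, as required.

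The main technical obstacle is the delicate tracking of error terms so that the constants in the hypotheses ($40$ and $10$) translate into the precise constants in the final bound ($2$ and $1/3$). The key tension is that $22/19 < 2$ and $17/19 > 1/3$, which must hold with enough slack to absorb the $O(1/n)$ higher-order corrections; the chosen constants in the hypothesis provide exactly this slack. A secondary subtlety is the monotonicity of $f$, which is essential for replacing the minimisation over $[0,m]$ by evaluation at $s = m$ and which relies on the same quantitative hypothesis $m \le n/40$.
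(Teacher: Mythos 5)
Your proposal follows the same overall strategy as the paper: dominate $W$ by a coupon-collector walk $\tW(n')$ via \Cref{lem-domination} and then compute $\expct[\tW_t(n')]$ via the indicator formula. The difference is in the choice of $n'$. The paper takes the looser $n' = n - 5m$ (cf.\ \Cref{dom_res1}), which keeps the subsequent arithmetic (\Cref{lem_eq_frac}) clean. You instead take the tightest $n'$ for which the domination holds, namely $n' = f(m) = k(k-m+1)/(k+1)$, backed by a monotonicity argument for $f$ on $[0,m]$. Your derivative computation and the bounds using $m \le n/40$ are correct, and $f(m) > n - 5m$ indeed, so your walk is a sharper surrogate for $W$. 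Both routes aim at the same final constants, so the gain here is purely the clean algebraic identity for $\ln m - t/n'$.

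There is, however, a genuine gap in how you handle the Taylor remainder in $\ln(1 - 1/n')$. You need a \emph{lower} bound on $m(1 - 1/n')^t$, hence a lower bound on $\ln m + t\ln(1 - 1/n')$. Since $\ln(1 - 1/n') < -1/n'$, the identity you derive for $\ln m - t/n'$ over-estimates this, i.e.\ the higher-order terms push you in the wrong direction. The paper avoids this by using $\ln(1 - 1/n') \ge -1/(n' - 1)$ (see \Cref{lem_ranwal_estimates}) and then bounding $t/(n' - 1)$ in \Cref{lem_eq_frac}. You dismiss the correction as $O\big((\ln m + |c|)/n\big)$, "negligible", but making that precise is itself a calculation that uses both hypotheses in a specific way: from $m\ln m \le |c|n/10$ one gets $(\ln m)/n \le |c|/(10m) \le |c|/20$ for $m \ge 2$ (and the term vanishes for $m = 1$), and from $n \ge 40m \ge 40$ one gets $|c|/n \le |c|/40$, so the correction is bounded by a small constant multiple of $|c|$ and can be absorbed into the slack you already identified ($16c/19$ for $c \ge 0$, $32|c|/57$ for $c < 0$). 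Flagging "the chosen constants provide exactly this slack" is a plan, not a proof; without the explicit verification the argument as written is incomplete, though it can be patched along the lines above.
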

\begin{proof}
Let~$W'$ be the random walk given by~$W' \eqdef \tW(n-5m)$. Then, by \Cref{dom_res1}, the walk~$W'$ is an ``optimistic'' approximation of~$W$, i.e., it dominates~$W$, and~$\expct [W'_t] \geq \expct [W_t]$.  Further, by \Cref{lem_ranwal_estimates},
\[ 
\expct [W'_t] \quad \leq \quad m - m\exp \! \left( - \, \frac{t}{n-5m-1} \right) \enspace.
\]
We show in \Cref{lem_eq_frac} that
\begin{align*}
\label{eq-frac}
\frac{t}{n - 5m -1} \quad & \le \quad \ln m + \max \set{2c, c/3} \enspace.
\end{align*}
Using this, we get
\begin{align*}
\expct[W_t] \quad & \le \quad  m - m \exp \! \left( - \, \frac{t}{n-5m-1} \right) \\
    & \le \quad m - m \exp\left(- \ln m - \max \set{2c, c/3} \right) \\
    & = \quad m - \min \set{ \e^{-2c}, \e^{-c/3} } \enspace,
\end{align*}
as claimed.
\end{proof}

\begin{lemma}
\label{lem-expct-lb}
Let~$t \eqdef cn$ for some~$c \ge 0$, and~$2 \le m \le n/10$. Then
\[
\expct[W_t] \quad \geq \quad m \left(1 -\e^{-c} - \frac {cm}{n}\right) \enspace.
\]
\end{lemma}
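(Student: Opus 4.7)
The plan is to solve a one-step recurrence for $\expct[W_t]$ directly, rather than comparing $W$ to a coupon-collector walk $\tW(n')$ as in \Cref{lem-expct-ub}. Note that $W$ cannot dominate any walk $\tW(n')$ in the sense of \Cref{lem-domination}, since $W$ has downward steps while every $\tW(n')$ has none; so a direct domination argument is unavailable. Instead, I will track $f_t \eqdef m - \expct[W_t]$ via the identity $\expct[W_{t+1}] - \expct[W_t] = \expct[p_{W_t,+1} - p_{W_t,-1}]$, which follows from the random-walk interpretation $\Pr[W_t = s] = l_s \lambda_{s,t}$ and the recurrence in \Cref{lem-recurrence2}.

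The key pointwise bounds on the transition probabilities of $W$, valid for $0 \le s \le m \le n/10$, are: (i) $p_{s,+1} \ge (m-s)/n$; and (ii) $p_{s,-1} \le sm/(n-2m)^2$. Bound (i) follows from the explicit formula: after clearing denominators, it reduces for $s \ge 1$ to the polynomial inequality $(n-m)(3n - 4s + 2) \ge n(n-s+1)$, comfortably satisfied under $m \le n/10$. Bound (ii) follows from the factorwise estimates $(k-s+1)/k \le 1$, $m - s + 1 \le m$ for $s \ge 1$, and $(n-2s+1)(n-2s+2) \ge (n-2m)^2$. Combined with linearity of expectation and the trivial bound $\expct[W_t] \le m$, these give $\expct[p_{W_t,+1}] \ge f_t/n$ and $\expct[p_{W_t,-1}] \le m^2/(n-2m)^2$, hence the linear recurrence
\[
f_{t+1} \quad \le \quad f_t \, (1 - 1/n) + \frac{m^2}{(n-2m)^2} \enspace.
\]

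Unrolling with $f_0 = m$ and substituting $t = cn$, using $(1-1/n)^{cn} \le \e^{-c}$ and $1 - \e^{-c} \le c$, yields $f_t \le m\e^{-c} + cm^2 n/(n-2m)^2$. For $m \le n/10$, the ratio $n/(n-2m)^2$ is $\Order(1/n)$, so this gives a bound of the form $\expct[W_t] \ge m(1 - \e^{-c}) - \Order(cm^2/n)$, matching the stated inequality up to a multiplicative constant in the correction term.

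The main obstacle is sharpening this constant to the stated value~$1$. The uniform upper bound $\expct[p_{W_t,-1}] \le m^2/(n-2m)^2$ is wasteful, because $p_{s,-1}$ is proportional to~$s$ while $\expct[W_s]$ grows only gradually; one can tighten the argument by tracking $\expct[W_s]$ along the recurrence and summing its contributions, or equivalently by invoking \Cref{lem-domination} with a pessimistic walk $\hat{W}$ whose up-probability is $(m-s)/n$ and whose down-probability is a state-independent constant just large enough to dominate $p_{s,-1}$. Verification of the three conditions of the Domination Lemma for such a walk is routine under $m \le n/10$, and $\expct[\hat{W}_t]$ then satisfies a clean linear recurrence yielding the claimed lower bound.
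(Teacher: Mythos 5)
Your first route is carried out correctly as far as it goes: the drift identity together with the pointwise bounds $p_{s,+1} \ge (m-s)/n$ and $p_{s,-1} \le sm/(n-2m)^2$ (both of which check out) gives the linear recurrence $f_{t+1} \le f_t(1-1/n) + m^2/(n-2m)^2$; unrolling with $1-(1-1/n)^{cn} \le c$ yields $\expct[W_t] \ge m(1-\e^{-c}) - c m^2 n/(n-2m)^2$, which is the claimed bound only up to the multiplicative constant $n^2/(n-2m)^2$ in the correction term, as you acknowledge.

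The gap is in the refinement you sketch and defer as ``routine''. For the unrolled recurrence to produce the correction term $cm^2/n$ --- not a constant multiple of it --- the pessimistic walk's down-probability $q$ must satisfy $q \le m^2/n^2$. But the only pointwise bound you establish on the true walk's down-probability, $p_{s,-1}\le sm/(n-2m)^2 \le m^2/(n-2m)^2$, strictly exceeds $m^2/n^2$ for $m\ge 1$, since $(n-2m)^2 < n^2$. Any state-independent $q$ that you can certify to dominate $p_{s,-1}$ from your estimates is therefore at least $m^2/(n-2m)^2$, and the resulting correction term is again off by the same factor. The ``tracking $\expct[W_s]$'' variant runs into the same obstruction: the tighter recurrence $f_{t+1}\le f_t\left(1-1/n - m/(n-2m)^2\right) + m^2/(n-2m)^2$ does decay faster, but its fixed point $m^2 n / \left((n-2m)^2 + mn\right)$ is still strictly larger than $m^2/n$, so the constant is not recovered.

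What is actually needed --- and what the paper proves in \Cref{dom_res2} --- is the markedly sharper inequality $p_{s,-1} \le \left(1-(m-s)/n\right)\cdot m^2/n^2$ for all $0\le s\le m$, which in particular gives $p_{s,-1}\le m^2/n^2$. Establishing this is genuinely different from your factorwise bound: one controls $s(m-s+1)$ by $m^2$ via an AM–GM step (this is exactly what forces the hypothesis $m\ge 2$), and then verifies the multiplicative inequality~\eqref{eq-prob1} by comparing both sides against fixed numerical constants under $m\le n/10$. The paper packages this as domination of $W$ over $W''-V''$ with $W''=\tW(n)$ and $V''$ an independent right-biased walk with step probability $m^2/n^2$; this is morally your $\hat W$, but the verification of domination is precisely where the work is, not a routine check. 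Until that inequality is in place, your proposal establishes only the weaker bound you already noted, not the lemma as stated.
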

\begin{proof}
Consider the random walk~$W'' \eqdef \tW(n)$ on~$\naturals_{m+1}$, and an independent random walk~$V''$ on the non-negative integers defined as follows. Let~ $V''_t \eqdef 0$, and for~$t \ge 0$ and $r \ge 0$,
\begin{align*}
\Pr[ \, V''_{t+1} = a \,|\, V''_t = r] \quad & \eqdef \quad
    \begin{cases}
	\frac {m^2}{n^2} & \text{if } a = r+1 \\
    1 - \frac {m^2}{n^2} & \text{if } a = r \\
    0 & \text{otherwise.}
    \end{cases}
\end{align*}
        
The random process~$(W''_t - V''_t)$ may be viewed as a ``pessimistic'' version of~$(W_t)$; indeed, as we show in \Cref{dom_res2}, the walk~$(W_t)$ dominates~$(W''_t - V''_t)$. So, 
\[
\expct[W_t] \quad \geq \quad \expct[W''_t-V''_t]
    \quad = \quad \expct[W''_t] - \expct[V''_t]
    \quad = \quad \expct[W''_t] - \frac {m^2}{n^2} t \enspace.
\]
From \Cref{lem_ranwal_estimates},
\[
\expct [W''_t] \quad \geq \quad m\left(1-\mathrm e^{-\frac t n}\right)
    \quad = \quad m(1-\mathrm e^{-c}) \enspace.
\]
Thus,
\[
\expct[W_t] \quad \geq \quad m \left(1-\e^{-c}\right) - \frac {cm^2}{n} \enspace,
\]
as claimed.
\end{proof}

\begin{lemma}
\label{lem-lmlst-bound}
Let~$t \eqdef k\ln m +cn$ for some~$c \ge 0$, $m \ln m \le cn/10$, and~$1 \le m \le n/40$. Then we have 
\[
\left(1-\frac {tm^2}{n^2}\right) \left(1-\e^{-9c/10}\right) \quad \leq \quad l_m \lambda_{m,t} \quad \leq \quad 1-\frac{\e^{-2c}}{2} \enspace,
\]
where the lower bound additionally requires~$m \ge 2$.
\end{lemma}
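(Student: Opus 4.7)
The plan is to interpret $l_m \lambda_{m,t}$ as the probability $\Pr[W_t = m]$ (as established just before the lemma) and to sandwich $W$ between the two coupon-collector-type approximations introduced in \Cref{sec-spectrum}. For the upper bound I would use that $W' \eqdef \tW(n-5m)$ dominates $W$ (via \Cref{dom_res1}), so $\Pr[W_t = m] \le \Pr[W'_t = m]$. For the lower bound I would use that $W$ dominates $W''_t - V''_t$ (via \Cref{dom_res2}), with $W'' \eqdef \tW(n)$ and $V''$ the auxiliary drift walk from the proof of \Cref{lem-expct-lb}; since $W_t \le m$, we get $\Pr[W_t = m] = \Pr[W_t \ge m] \ge \Pr[W''_t = m,\, V''_t = 0]$.

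For the lower bound I would factor using independence of $W''$ and $V''$ as $\Pr[W_t = m] \ge \Pr[W''_t = m]\,\Pr[V''_t = 0]$. Bernoulli's inequality handles the first factor: $\Pr[V''_t = 0] = (1 - m^2/n^2)^t \ge 1 - tm^2/n^2$. A union bound over the $m$ ``coupon missing'' events in $\tW(n)$ handles the second: $\Pr[W''_t = m] \ge 1 - m(1-1/n)^t \ge 1 - m\,\e^{-t/n}$. With $t = k\ln m + cn$ and $k = n-m$, we have $t/n = \ln m + c - (m/n)\ln m$, and the hypothesis $m\ln m \le cn/10$ forces $(m/n)\ln m \le c/10$, so $m\,\e^{-t/n} \le \e^{-9c/10}$. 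Multiplying the two factors yields exactly the claimed lower bound.

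For the upper bound, let $y \eqdef (1 - 1/(n-5m))^t$. Bonferroni's inequality applied to the $m$ events $A_i$ that marked coupon $i$ is missing after $t$ draws in $\tW(n-5m)$ gives
\[
\Pr[W'_t = m] \;=\; \Pr\bigl[{\textstyle\bigcap_i} A_i^c\bigr] \;\le\; 1 - my + \binom{m}{2}\br{1 - \tfrac{2}{n-5m}}^t \;\le\; 1 - my + \binom{m}{2} y^2,
\]
where the last step uses $1 - 2/n' \le (1-1/n')^2$. The key quantitative step is to pinch $y$ from both sides. Combining $\ln(1-x) \ge -x/(1-x)$ with the bound $t/(n-5m-1) \le \ln m + 2c$ from the proof of \Cref{lem-expct-ub} (via \Cref{lem_eq_frac}) gives $y \ge \e^{-2c}/m$, so $my \ge \e^{-2c}$; while $\ln(1-x) \le -x$ together with $t/(n-5m) \ge t/n \ge \ln m + 9c/10$ gives $y \le \e^{-9c/10}/m$, so $(m-1)y \le \e^{-9c/10} \le 1$. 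Plugging in, $\Pr[W_t = m] \le \Pr[W'_t = m] \le 1 - my\bigl(1 - (m-1)y/2\bigr) \le 1 - my/2 \le 1 - \e^{-2c}/2$. The case $m=1$ is absorbed since $\binom{m}{2} = 0$, leaving $1 - y \le 1 - \e^{-2c}$.

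The main obstacle is making the truncated Bonferroni estimate yield the target constant $1/2$: this needs both $my \ge \e^{-2c}$ (to hit the right magnitude) and $(m-1)y \le 1$ (so the quadratic correction cannot swamp the linear main term). Securing both amounts to the two-sided control of $y$ described above, and the hypothesis $m\ln m \le cn/10$ is exactly what lets me pass cleanly between $t/n$, $t/(n-5m)$, and $t/(n-5m-1)$ without losing the $\ln m$ factors that would spoil the constants. Everything else is bookkeeping.
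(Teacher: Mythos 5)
Your proposal is correct and takes essentially the same route as the paper: interpret $l_m\lambda_{m,t}$ as $\Pr[W_t=m]$, sandwich $W$ between $W'=\tW(n-5m)$ and $W''-V''$ via \Cref{dom_res1} and \Cref{dom_res2}, and estimate those walks via a union bound (lower) and a truncated inclusion--exclusion (upper). The only difference is that you re-derive the coupon-collector probability estimates by hand, whereas the paper routes the same Bonferroni and union-bound computations through the packaged \Cref{lem_ranwal_estimates}; the algebra closing the upper bound also differs cosmetically (you show $my \ge \e^{-2c}$ and $(m-1)y \le 1$ and use $1 - my + \binom{m}{2}y^2 \le 1 - my/2$, the paper bounds the linear and quadratic terms separately and adds), but both land on $1 - \e^{-2c}/2$ using the same estimates $t/(n-5m-1)\le \ln m + 2c$ from \Cref{lem_eq_frac} and $t/n \ge \ln m + 9c/10$.
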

\begin{proof}
We use the random walks~$W'$, $W''$, and~$V''$ as defined in the proofs of \Cref{lem-expct-ub,lem-expct-lb}. 

Recall that~$l_m\lambda_{m,t} = \Pr[W_t=m]$; the random variables~$W_t, W'_t$, and~$W''_t - V''_t$ are all bounded by~$m$; by \Cref{dom_res1}, $(W'_t)$ dominates~$(W_t)$; and by \Cref{dom_res2}, when in addition,~$m \ge 2$, the walk~$(W_t)$ dominates~$(W''_t - V''_t)$. So we have
\[
\Pr[W''_t-V''_t = m] \quad \leq \quad l_m\lambda_{m,t} \quad \leq \quad \Pr[W'_t=m] \enspace,
\]
where the lower bound additionally requires~$m \ge 2$.

For the upper bound, we have by \Cref{lem_ranwal_estimates} 
\begin{align*}
\Pr[W'_t=m] \quad & \leq \quad 1 - m \exp \!\left( - \, \frac{t}{n-5m-1} \right) + \frac{m^2}{2} \exp \!\left( - \, \frac{2t}{n-5m} \right) \enspace.\\
\end{align*}
We bound the second term on the right hand side using \Cref{lem_eq_frac}, and the third term by substituting the value of~$t$ to get
\[
\Pr[W'_t=m] \quad \leq \quad 1 - \e^{-2c} + \frac{m^2}{2} \e^{-2\ln m -2c}
    \quad \leq \quad 1 - \frac{\e^{-2c}}{2} \enspace.
\]

For the lower bound in the lemma, we have by \Cref{lem_ranwal_estimates}
\begin{align*}
\Pr[W''_t-V''_t=m] \quad & = \quad \Pr[V''_t=0]\cdot\Pr[W''_t=m] \\
	&\geq \quad \left(1-\frac {tm^2}{n^2}\right) \left(1-m \e^{- t/n}\right)\\
	&\geq \quad \left(1-\frac {tm^2}{n^2}\right) \left(1-\e^{-9c/10}\right) \enspace,
\end{align*}
since~$t = (n - m) \ln m + cn \ge n \ln m - cn/10 + cn$.
\end{proof}

\subsection{Failure of the standard argument}
\label{sec-qcc-it}

In this section, we show that it is not possible to derive optimal lower bound for Quantum Coupon Collection through the standard information-theoretic argument.

Suppose that with probability~$1 - \delta$, for some~$\delta \in (0,1/4]$, a learning algorithm~$\scA$ identifies the set~$S$ in register~$A$,  given the~$t$ copies of the corresponding quantum sample~$\ket{\psi_S}$ in register~$B$. Note that this is equivalent to the assumption that~$\scA$ solves the Quantum Coupon Collector problem with average error~$\delta$ with respect to the uniform distribution over subsets of size~$k$. The algorithm outputs its guess for the set~$S$ in register~$F$. By the Data Processing Inequality, $\mi(A:F) \le \mi(A : B)_\rho$. The standard argument gives a lower bound (say,~$\beta$) for~$\mi(A:F)$ based on the algorithm~$\scA$, and aims to show that if the number of samples~$t$ is~$\order( k \log \min \set{m + 1, k})$, then~$\mi(A : B)_\rho$ is strictly smaller than~$\beta$. This would imply lower bound in Eq.~\eqref{eq-qcc}.

We exhibit adversarial algorithms such that~$\mi(A:F)$ is a constant factor smaller than~$\mi(A:B)$ even when~$t \in \Theta(n)$. So it is not possible to show that~$t$ is asymptotically larger than~$\Omega (n)$ using such an argument. Note that~$\mi(A:B)_\rho = \vnentropy(B)_\rho$. Since the optimal sample complexity is~$\Theta( k \log \min \set{m + 1, k})$, we see that whenever~$\min \set{m, k} \in \upomega(1)$, and~$t \in \Theta(n)$, the uniform ensemble over
\[
\left( \ket{\psi_S}^{\tensor t} : S \subseteq [n], ~ \size{S} = k \right)
\]
has high von Neumann entropy, but there is no measurement which correctly identifies the states with constant probability of success.

We start by describing the aforementioned algorithms.
\begin{lemma}
\label{lem-mi-ub}
Let~$\delta \in (0,1/4]$, and 
let~$\scA'$ be any learner that identifies the set in register~$A$ with probability~$1 - \delta$, given~$t'$ copies of the corresponding quantum sample in the subregister~$B'$ of~$B$. Then there is a learner~$\scA$ which uses~$t$ copies of the sample in register~$B$, with~$t = t' + ck$ for some positive constant~$c$, such that~$\mi(A:F) \leq (1-c_0\delta) \entropy(A)$, where $c_0$ is a positive constant that depends only on $c$.
\end{lemma}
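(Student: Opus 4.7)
The plan is to construct $\scA$ as a deliberately wasteful randomization of $\scA'$ rather than as a cleverly structured algorithm. Fix a constant $c_0 \in (0,1]$ to be chosen in terms of $c$ (for the simplest construction, any $c_0 \le 1$ works). The algorithm $\scA$ flips an internal biased coin $X$ with $\Pr[X = 1] = c_0 \delta$; if $X = 0$ it invokes $\scA'$ on the $t'$ samples in $B'$ and returns $\scA'$'s output $F'$, while if $X = 1$ it discards the samples entirely and outputs a uniformly random size-$k$ subset $F$ of $[n]$. The $ck$ additional samples in $B \setminus B'$ are not actually needed for the mutual-information bound and may simply be ignored; the lemma's sample count $t = t' + ck$ is an allowance, not a requirement.

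To bound $\mi(A:F)$, observe that $F$ always takes values in $\binom{[n]}{k}$, which is the same support as $A$, hence $\entropy(F) \le \log\binom{n}{k} = \entropy(A)$ since $A$ is uniform. For the conditional entropy, condition further on the coin $X$: when $X=1$ the output $F$ is uniform on $\binom{[n]}{k}$ and independent of $A$, so
\[
\entropy(F \mid A) \;\ge\; \entropy(F \mid A, X) \;\ge\; \Pr[X=1]\,\entropy(F \mid A, X=1) \;=\; c_0 \delta \log\tbinom{n}{k} \;=\; c_0 \delta\, \entropy(A).
\]
Subtracting gives $\mi(A:F) = \entropy(F) - \entropy(F \mid A) \le (1 - c_0 \delta) \entropy(A)$, as desired. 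Note that no additive $\bentropy(c_0\delta)$ slack appears because we bound $\entropy(F)$ using only the support.

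There is no substantive technical obstacle; the point of the construction is exactly its simplicity. A more refined variant could use the $ck$ additional samples meaningfully, for instance by measuring them to obtain a random $T \subseteq S$ and outputting a uniformly random size-$k$ superset of $T$ on the ``heads'' branch; this would make $\scA$ still a moderately informative learner while giving the same mutual information bound (via a slightly longer entropy calculation using $\entropy(F \mid A, X=1) \ge \expct \log \binom{n-|T|}{k-|T|}$ and the near-concentration of $|T|$). The real content of the lemma is what it says about the obstacle facing the standard information-theoretic approach: even starting from the best possible learner, an adversarial wrapper can dilute the output with a vanishingly small probability of independent randomness and force $\mi(A:F) \le (1 - \Omega(\delta)) \entropy(A)$. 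Since $\mi(A:B)_\rho$ can be as large as $\entropy(A)$ in the regime $t \in \Theta(n)$, passing from $\mi(A:B)_\rho \ge \mi(A:F)$ to a lower bound on $t$ cannot yield anything better than $\Omega(n)$ via this route.
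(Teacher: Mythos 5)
Your construction proves a weaker claim than what the lemma is actually for, and the weakening is fatal to the argument the lemma supports. The point of this section is to show the ``standard'' information-theoretic route fails: one must exhibit an algorithm $\scA$ that \emph{still identifies $S$ with probability $\ge 1 - \delta$} (i.e., is a legitimate adversarial instance of the hypothetical learner in the contradiction argument) and yet has $\mi(A:F)$ bounded away from $\entropy(A)$. Read the paragraph just before the lemma: the whole setup is ``suppose a learner $\scA$ with error at most $\delta$ exists; the standard argument derives a lower bound $\beta$ on $\mi(A:F)$ from that success probability.'' To show $\beta$ cannot exceed $(1 - c_0\delta)\entropy(A)$, you must find $\scA$ within that class. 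Your $\scA$ is not in that class: with probability $c_0\delta$ it throws away the samples and outputs uniform garbage, so its overall success probability is roughly $(1 - c_0\delta)(1 - \delta)$, strictly less than $1 - \delta$ by about $c_0\delta(1 - \delta)$. An upper bound on $\mi(A:F)$ for a learner with error $\approx \delta + c_0\delta$ says nothing about the infimum over learners with error $\le \delta$, so the refutation evaporates.

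This is precisely where the extra $ck$ samples earn their keep, and why you cannot simply discard them. The paper's $\scA$ uses those samples to \emph{verify} $\scA'$'s output by projecting onto $\density{\psi_{S'}}$ repeatedly; this test passes with certainty when $S' = S$, and with probability at most $\e^{-2c|S\setminus S'|}$ when $S' \ne S$. So $\scA$ randomizes its output \emph{only} on a (large, $\ge 1 - \e^{-2c}$) fraction of the failure event, never disturbing a correct answer, and its success probability stays $\ge 1 - \delta$. That asymmetry --- replace only outputs you have caught to be wrong --- is the content of the lemma, and it is exactly what a blind coin flip cannot achieve. Incidentally, this is also why $c_0$ genuinely depends on $c$ in the paper ($c_0 = 1 - \e^{-2c}$): more verification samples give a sharper detector and hence a larger fraction of the failure mass on which one may safely randomize, pushing $c_0$ closer to $1$. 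Your $c_0$ has no such structure because your construction does not use the verification idea. Your ``more refined variant'' in the last paragraph has the same flaw: on the heads branch it outputs a random superset of a random $T \subseteq S$, which is almost never equal to $S$, so correctness is again sacrificed.
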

\begin{proof}
Essentially, the learner~$\scA$ uses the~$ck$ additional copies of the quantum sample to check whether the output of~$\scA'$ is correct or not. If~$\scA$ believes the answer is \emph{incorrect\/}, then it ignores the answer, and simply outputs a uniformly random subset of size~$k$.
 
The learner~$\scA$ first runs~$\scA'$ using~$t'$ quantum samples~$\ket{\psi_S}$. Suppose~$\scA'$ outputs~$S'$. Then~$\scA$ performs the projective measurement~$\set{ M, \id - M}$, where~$M \eqdef \density{\psi_{S'}}$, on the remaining quantum samples. If the outcome is~$M$ for all~$ck$ copies, $\scA$ outputs~$S'$. Otherwise, it outputs a uniformly random size-$k$ subset of~$[n]$.

If~$S' = S$ (the set in register~$A$), then with probability~$1$ the outcome is~$M$ for all~$ck$ measurements. Otherwise, suppose that~$d \eqdef \size{S \setminus S'}$. Then, the probability that the outcome is~$M$ in all~$ck$ trials is precisely~$( 1 - d/ k)^{2ck}$, which is at most~$\e^{- 2cd}$. Thus, with probability at least $1-\e^{-2c}$, the learner~$ \scA$ confirms that~$S' \neq S$, and outputs a uniformly random size-$k$ subset. 

There are three cases for the output~$S''$ of~$\scA$:
\begin{enumerate}
\item 
$S' = S$, learner~$\scA$ confirms this via the measurement of the additional~$ck$ samples, and~$S'' = S$.

\item
$S' \neq S$, learner~$\scA$ detects this, and~$S''$ is a uniformly random size-$k$ subset.

\item
$S' \neq S$, learner~$\scA$ fails to detect this, and~$S'' = S'$.
\end{enumerate}
Let the random variable~$E$ denote which of the three cases occurs, and let~$\epsilon_i \eqdef \Pr[E = i]$. Since the probability of correctness of~$\scA'$ is exactly~$1 - \delta$, we have~$\epsilon_2 + \epsilon_3 = \delta$. By the above analysis,~$\epsilon_3 \le \delta \e^{-2c}$.

We have
\begin{align*}
\mi(A:F) \quad
    & = \quad \entropy(F) - \entropy(F | A) \\
    & \le \quad \entropy(F) - \entropy(F | AE) \\
    & \le \quad \log \binom{n}{k} - \epsilon_2 \log \binom{n}{k} \\
    & \le \quad (1 - \delta (1 - \e^{-2c})) \log \binom{n}{k} \enspace,
\end{align*}
where the second inequality follows by noting that
\[
\entropy(F | AE) \quad = \quad \sum_{i \in [3]} \epsilon_i \entropy(F | A, E = i) \quad \ge \quad \epsilon_2 \entropy(F | A, E = 2) \quad = \quad \epsilon_2  \log \binom{n}{k} \enspace. 
\]
So the lemma holds with~$c_0 \eqdef 1 - \e^{-2c}$.
\end{proof}

We show that~$\mi(A:B)$ is larger than this expression even when~$t \in \Theta(n)$, for which we give a lower bound on the mutual information~$\mi(A:B)_\rho \,$.
\begin{lemma}
\label{lem-miab-lb}
Let~$2 \le m \le n/10$, $\kappa \in (0, n/m)$, and~$t \eqdef \kappa n$. Then
\[
\mi(A:B)_\rho \quad = \quad \vnentropy(B)_\rho 
    \quad \geq \quad \left( 1 - \e^{-\kappa} - \frac{\kappa m}{n} \right) \log \binom{n}{m}  - 1 \enspace.
\]
\end{lemma}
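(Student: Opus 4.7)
The plan is to work with the spectrum of $\rho_t^B$ given by \Cref{lem-spectrum} and the random-walk interpretation $l_s \lambda_{s,t} = \Pr[W_t = s]$ developed in \Cref{sec-spectrum}. Since $\rho^{AB}$ is a classical-quantum state whose $B$-register is pure conditioned on $A$, one has $\mi(A:B)_\rho = \vnentropy(B)_\rho$, so the task reduces to lower bounding this von Neumann entropy.

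First I would decompose the entropy using $\mu_s \eqdef l_s\lambda_{s,t} = \Pr[W_t = s]$:
\[
\vnentropy(B)_\rho \;=\; \sum_{s=0}^m l_s \lambda_{s,t} \log \frac{1}{\lambda_{s,t}} \;=\; \entropy(\mu) + \expct[\log l_{W_t}] \;\ge\; \expct[\log l_{W_t}],
\]
since the Shannon entropy $\entropy(\mu)$ is non-negative. It then suffices to prove $\expct[\log l_{W_t}] \ge (1 - \e^{-\kappa} - \kappa m/n)\log\binom{n}{m} - 1$.

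The heart of the argument is a pointwise bound $\log l_s \ge (s/m)\log\binom{n}{m} - 1$ for every $s \in \naturals_{m+1}$. For $s = 0$ this reduces to $0 \ge -1$. For $s \ge 1$, I would combine two elementary facts that hold under the hypothesis $m \le n/10$: (i) the ratio $\binom{n}{s-1}/\binom{n}{s} = s/(n-s+1)$ is at most $1/2$, so $l_s \ge \binom{n}{s}/2$ and hence $\log l_s \ge \log\binom{n}{s} - 1$; and (ii) the sequence $((n-j+1)/j)_{j \ge 1}$ is decreasing, so $\binom{n}{s}^{1/s}$, being its geometric mean over $j \in [s]$, is non-increasing in $s$, which yields $\log\binom{n}{s}/s \ge \log\binom{n}{m}/m$, i.e., $\log\binom{n}{s} \ge (s/m)\log\binom{n}{m}$. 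Taking expectations over $W_t$ then gives $\expct[\log l_{W_t}] \ge (\expct[W_t]/m)\log\binom{n}{m} - 1$, and \Cref{lem-expct-lb} applied with $c = \kappa$ (valid since $2 \le m \le n/10$ and $\kappa > 0$) supplies $\expct[W_t] \ge m(1 - \e^{-\kappa} - \kappa m/n)$. Multiplying through by the non-negative quantity $\log\binom{n}{m}$ yields the claim.

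The main subtlety I anticipate is the pointwise comparison between $\log l_s$ and $(s/m)\log\binom{n}{m}$. One needs both that $l_s$ remains within a constant factor of $\binom{n}{s}$ (this is precisely where the hypothesis $m \le n/10$ is used) and the mild log-concavity-at-origin of $\binom{n}{s}$ captured by monotonicity of its $s$-th root. Neither fact is deep, but they must be combined in just the right form to absorb cleanly into the additive $-1$ in the final bound.
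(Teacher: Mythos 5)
Your proof is correct and follows essentially the same route as the paper: drop the Shannon entropy of the distribution $l_s\lambda_{s,t}$, bound $\log l_s$ pointwise from below by $(s/m)\log\binom{n}{m} - 1$, and apply \Cref{lem-expct-lb}. The only cosmetic difference is that you derive $\log\binom{n}{s}/s \ge \log\binom{n}{m}/m$ via monotonicity of the geometric mean of the decreasing factors $(n-j+1)/j$, whereas the paper invokes \Cref{lem-ratio} (arithmetic-mean monotonicity of $\log\tfrac{n-i}{i+1}$); after taking logarithms these are the same argument.
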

\begin{proof}
The spectrum of~$\rho^B$ is given by~$( (l_s, \lambda_{s,t}) : s = 0, 1, \dotsc, m )$, where~$l_s$ and~$\lambda_{s,t}$ are as described in \Cref{sec-prep}. We have
\begin{align*}
\vnentropy(B) \quad & = \quad \sum_{s=0}^m l_s \lambda_{s,t} \log \frac{1}{ \lambda_{s,t}} \\
    & = \quad \sum_{s=0}^m l_s \lambda_{s,t} \log \frac{1}{ l_s \lambda_{s,t}} + \sum_{s=0}^m l_s \lambda_{s,t} \log l_s \\
    & \geq \quad \sum_{s=0}^m l_s \lambda_{s,t} \log l_s \enspace,
\end{align*}
as~$(l_s \lambda_{s,t} : 0 \le s \le m)$ is a distribution. Using the value of~$l_s$,
\begin{align*}
\vnentropy(B) \quad 
    & \geq \quad \sum_{s=0}^m l_s \lambda_{s,t} \log \binom{n}{s} + \sum_{s=0}^m l_s \lambda_{s,t} \log \frac{ \binom{n}{s} - \binom{n}{s-1}}{ \binom{n}{s}} \\
    & = \quad \left[ \sum_{s=0}^m l_s\lambda_{s,t} \frac {\log \binom{ n}{ s}}{\log \binom{n}{m}} \right] \log \binom{n}{m} + \sum_{s=0}^m l_s \lambda_{s,t} \log \frac {n-2s+1}{n-s+1} \enspace.
\end{align*}
Now,
\[
\frac {n-2s+1}{n-s+1} \quad \ge \quad \frac {n-2m+1}{n-m+1} \quad \ge \quad \frac{1}{2} \enspace,
\]
as~$(n - 2s + 1)/(n - s +1)$ is a decreasing function of~$s$. Moreover, by Lemma~\ref{lem-ratio}, 
\[
\frac {\log \binom{ n}{ s}}{\log \binom{n}{m}}
    \quad = \quad \frac {\sum_{i=0}^{s-1} \log \frac{n-i}{i+1} }{ \sum_{i=0}^{m-1} \log \frac{n-i}{i+1}}
    \quad \geq \quad \frac{s}{m} \enspace,
\]
as~$\log \tfrac{n-i}{i+1}$ is decreasing in $i$.
Thus, we have that 
\begin{align*}
\vnentropy(B) \quad 
    & \geq \quad \left[ \frac{1}{m} \sum_{s=0}^m s l_s \lambda_{s,t} \right] \log \binom{n}{m} - 1 \enspace.
\end{align*}
The claim now follows from \Cref{lem-expct-lb}, as~$ \sum_{s=0}^m s l_s \lambda_{s,t} = \expct[ W_t]$.
\end{proof}
Note that by taking~$\kappa$ to be an arbitrarily large constant, we get~$\vnentropy(B)_\rho$ arbitrarily close to~$\entropy(A)$ --- the maximum it can be.

Let~$\delta \eqdef 1/4$.
Taking~$t' \eqdef n/2$ and~$c_0$ to be a suitable constant greater than~$1/2$ in Lemma~\ref{lem-mi-ub}, we see that for the adversarial algorithms described therein, $\mi(A:F) \le \tfrac{7}{8} \log \binom{n}{m}$. On the other hand, taking~$m \eqdef \floor{ \sqrt{n} \,}$, $\kappa$ a sufficiently large constant, and~$n$ sufficiently large, we see that~$\mi(A : B)_\rho \ge \gamma \log \binom{n}{m} = \gamma \log \binom{n}{k}$, for some constant~$\gamma > 7/8$.

\subsection{Sample complexity for approximation}
\label{sec-mismatches}

In this section, we show that the standard argument \emph{does\/} yield a tight lower bound on sample complexity when we are interested in approximating the unknown set. The bound rests on the following estimate.
\begin{lemma}
\label{lem-miab-ub}
Let~$t \eqdef k \ln m + cn$ for some~$c \in \reals$, $m \ln m \le \size{c} n/10$, and~$1 \le m \le n/40$. Then
\begin{align*}
\mi(A:B)_\rho \quad \leq \quad \log \binom{n}{m} - \left(\log n - \log m - \frac{2m}{n} \log \e \right) \min \set{\e^{-2c}, \e^{-c/3}} + \log (m+1) \enspace.  
\end{align*}
\end{lemma}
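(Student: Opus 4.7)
Since $B$ is pure conditional on $A$, we have $\mi(A:B)_\rho = \vnentropy(B)_\rho$, and by \Cref{lem-spectrum} the spectrum of $\rho_t^B$ consists of the eigenvalues $\lambda_{s,t}$ with multiplicities $l_s$ for $s = 0, 1, \ldots, m$. The plan is to decompose the von Neumann entropy as
\[
\vnentropy(B)_\rho \quad = \quad \sum_{s=0}^m l_s \lambda_{s,t} \log \frac{1}{l_s \lambda_{s,t}} + \sum_{s=0}^m l_s \lambda_{s,t} \log l_s \enspace,
\]
and bound the two sums separately. The first sum is the Shannon entropy of the distribution $(l_s \lambda_{s,t})_{s=0}^m$ (recall it \emph{is} a distribution by \Cref{lem-spectrum}), which has support of size $m+1$, so it is at most $\log(m+1)$. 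This already takes care of the additive $\log(m+1)$ term in the statement.

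For the second sum, first use $l_s \le \binom{n}{s}$ (since $l_s = \binom{n}{s} - \binom{n}{s-1}$ for $s \ge 1$), and write
\[
\log\binom{n}{s} \quad = \quad \log\binom{n}{m} - \log \frac{\binom{n}{m}}{\binom{n}{s}} \enspace.
\]
I will check the combinatorial inequality
\[
\log \frac{\binom{n}{m}}{\binom{n}{s}} \quad = \quad \sum_{i=s+1}^{m} \log \frac{n-i+1}{i} \quad \ge \quad (m-s) \log \frac{n-m+1}{m} \enspace,
\]
since each term is at least the last one. Next, since $m \le n/40$, the elementary estimate $\log(1-x) \ge -2x\log\e$ for $x \in [0,1/2]$ (or equivalently $-\ln(1-x) \le 2x$ in that range) gives
\[
\log(n-m+1) \quad \ge \quad \log n - \frac{2m}{n}\log\e \enspace,
\]
so $\log\tfrac{n-m+1}{m} \ge \log n - \log m - \tfrac{2m}{n}\log\e$, which is positive by the assumption on $m$. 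Summing against the distribution $(l_s\lambda_{s,t})$ and using $\sum_s s\, l_s\lambda_{s,t} = \expct[W_t]$ yields
\[
\sum_{s=0}^m l_s \lambda_{s,t} \log l_s \quad \le \quad \log\binom{n}{m} - \bigl(m - \expct[W_t]\bigr)\left(\log n - \log m - \frac{2m}{n}\log\e\right) \enspace.
\]

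The final step is to invoke \Cref{lem-expct-ub}, whose hypotheses are precisely the ones in the present lemma, to conclude that $m - \expct[W_t] \ge \min\set{\e^{-2c}, \e^{-c/3}}$. Since the parenthesized quantity is positive, multiplying by this lower bound gives the desired inequality after adding back the $\log(m+1)$ contribution from the first sum. The only subtlety worth flagging is keeping track of the sign of $c$: the bound on $\expct[W_t]$ covers both signs, and the min expression $\min\set{\e^{-2c}, \e^{-c/3}}$ is precisely what makes the resulting bound meaningful for both positive and negative $c$; no step is particularly delicate once the combinatorial inequality in step three is in hand.
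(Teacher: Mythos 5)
Your proof is correct and follows essentially the same route as the paper: the same decomposition of $\vnentropy(B)_\rho$ into a $\log(m+1)$ entropy term plus $\sum_s l_s\lambda_{s,t}\log l_s$, the same use of $l_s\le\binom{n}{s}$ together with a termwise bound on $\binom{n}{m}/\binom{n}{s}$, the same estimate $1-z\ge\e^{-2z}$ to absorb the $(m-1)/n$ correction, and the same invocation of \Cref{lem-expct-ub} via $\sum_s s\,l_s\lambda_{s,t}=\expct[W_t]$. Your side remark on the sign of $c$ and on the positivity of the bracketed factor is exactly the right thing to check, and it holds under $m\le n/40$; no discrepancies with the paper's argument.
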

\begin{proof}
Note that~$\mi(A:B)_\rho = \vnentropy(B)_\rho \,$. The spectrum of~$\rho^B$ is given by~$( (l_s, \lambda_{s,t}) : s = 0, 1, \dotsc, m )$, where~$l_s$ and~$\lambda_{s,t}$ are as given by \Cref{lem-spectrum}. Since~$\sum_{s=0}^m l_s \lambda_{s,t} = 1$, we have
\begin{align*}
\vnentropy(B)_\rho \quad & = \quad \sum_{s=0}^{m} l_s \lambda_{s,t} \log \frac {1}{\lambda_{s,t}} \\
    & \leq \quad \sum_{s=0}^{m} l_s \lambda_{s,t} \log l_s + \log (m+1) \\
    & = \quad \log \binom{n}{m} - \sum_{s=0}^{m} l_s \lambda_{s,t} \log \frac {\binom{n}{m}}{l_s} + \log (m+1) \enspace.
\end{align*}
Since~$l_s \le \binom{n}{s}$, and the ratio~$\tfrac{n-s}{s+1}$ is decreasing in~$s$, and~$1 - z \ge \e^{-2z}$ when~$0 \le z \le \tfrac{\ln 2}{2}$, we have
\begin{align*}
\frac {\binom{n}{m}}{l_s} \quad 
    & \geq \quad \left( \frac {n-s}{s+1} \right) \left( \frac{n-s+1}{s+2} \right) \dotsb \left( \frac {n-m+1}{m}  \right) \\
    & \geq \quad \left( \frac {n-m+1}{m}  \right)^{m - s} \\
    & = \quad \left( \frac {n}{m}  \right)^{m - s} \left( 1 - \frac {m-1}{n }  \right)^{m - s} \\
    & \geq \quad \left( \frac {n}{m}  \right)^{m - s} \exp \! \left( - \, \frac{2 m (m - s)}{n} \right) \enspace.
\end{align*}
Combining these bounds, we get
\begin{align*}
\vnentropy(B)_\rho \quad
    & \leq \quad \log \binom{n}{m} - \left( \log n - \log m - \frac{2 m}{n} \log \e \right) \sum_{s=0}^{m} (m - s) l_s \lambda_{s,t} + \log (m+1) \enspace.
\end{align*}
By Lemma~\ref{lem-expct-ub}, the expectation~$\sum_{s=0}^{m} s l_s \lambda_{s,t} \le m - \min \set{\e^{-2c}, \e^{-c/3}}$, and the claim follows.
\end{proof}

Suppose given~$t$ copies of the sample~$\ket{\psi_S}$ in register~$B$, an algorithm~$\scA$ outputs a size-$k$ subset such that the expected number of mismatches is bounded by~$\Delta$. In other words, if the output is denoted by the random variable~$F(S)$, then~$\expct \size{S - F(S)} \le \Delta$. We first argue that without loss of generality, we may assume that the algorithm is invariant under permutations of~$[n]$ in the sense that~$\Pr[F(S) = S'] = \Pr[F(\pi(S)) = \pi(S')]$ for all pairs of size-$k$ subsets~$S, S' \subseteq [n]$, and permutations~$\pi$ of~$[n]$.
\begin{lemma}
\label{lemma-pi}
Suppose an algorithm~$\scA$ takes some number of quantum samples corresponding to a size-$k$ subset~$S \subseteq [n]$, and outputs a possibly random size-$k$ subset~$F(S)$ of~$[n]$. Then, there is an algorithm~$\scA'$ with output~$F'(S)$ given quantum samples of~$S$ that satisfies the following properties. The algorithm~$\scA'$ is invariant under permutations of~$[n]$, and has the same sample complexity as~$\scA$. Moreover, for every size-$k$ subset~$S$ and uniformly random size-$k$ subset~$A$, the algorithm~$\scA'$ satisfies
\begin{enumerate}
\item $\Pr[F'(S) = S] = \Pr[F(A) = A] \,$, and
\item $\expct \size{S \setminus F'(S)} = \expct \size{A \setminus F(A)} \,$. 
\end{enumerate}
\end{lemma}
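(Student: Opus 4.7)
The natural approach is symmetrization by a uniformly random permutation of $[n]$. Let $U_\pi$ denote the unitary on $\complex^n$ that maps $\ket{i}$ to $\ket{\pi(i)}$. The key observation is that $U_\pi \ket{\psi_S} = \ket{\psi_{\pi(S)}}$, so the transformation $U_\pi^{\otimes t}$ converts $t$ samples corresponding to $S$ into $t$ samples corresponding to $\pi(S)$. This allows $\scA'$ to locally simulate $\scA$ on a permuted instance.

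The plan is to define $\scA'$ as follows: on input $t$ quantum samples corresponding to $S$, first sample a permutation $\pi$ uniformly at random from the symmetric group on $[n]$; apply $U_\pi$ to each of the $t$ registers, obtaining $t$ quantum samples corresponding to $\pi(S)$; run $\scA$ on these to obtain an output $T \eqdef F(\pi(S))$; and finally output $F'(S) \eqdef \pi^{-1}(T)$. The sample complexity is clearly unchanged since we use the same number of queries. For permutation invariance, for any permutation $\sigma$ and subsets $S, S'$, a change of variables $\pi' \eqdef \pi\sigma$ gives
\[
\Pr[F'(\sigma(S)) = \sigma(S')] \quad = \quad \expct_\pi \Pr[F(\pi\sigma(S)) = \pi\sigma(S')] \quad = \quad \expct_{\pi'} \Pr[F(\pi'(S)) = \pi'(S')] \quad = \quad \Pr[F'(S) = S'] \enspace.
\]

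The two equalities then follow from the elementary fact that if $\pi$ is uniformly random in the symmetric group on $[n]$ and $S$ is a fixed size-$k$ subset, then $\pi(S)$ is uniformly distributed over size-$k$ subsets of $[n]$ (since for any target subset $T$ of size $k$, exactly $k!(n-k)!$ permutations send $S$ to $T$). Applying this with $A \eqdef \pi(S)$,
\[
\Pr[F'(S) = S] \quad = \quad \expct_\pi \Pr[F(\pi(S)) = \pi(S)] \quad = \quad \expct_A \Pr[F(A) = A] \enspace,
\]
yielding property~(1). For property~(2), note that $|S \setminus \pi^{-1}(T)| = |\pi(S) \setminus T|$ for every $T$, so
\[
\expct \size{S \setminus F'(S)} \quad = \quad \expct_\pi \expct \size{\pi(S) \setminus F(\pi(S))} \quad = \quad \expct_A \expct \size{A \setminus F(A)} \enspace.
\]

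There is essentially no obstacle here; the argument is a standard averaging/symmetrization and the only thing to check carefully is that $\pi(S)$ is uniform over size-$k$ subsets, which is immediate from the double-coset count above. The only minor subtlety is that one may wish to note that the random permutation $\pi$ can be drawn by the algorithm using private randomness before any query is made, so the $U_\pi$ operations are well-defined local unitaries on the register containing the samples.
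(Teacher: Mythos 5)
Your proposal is exactly the paper's argument: sample a uniformly random permutation $\pi$, map each quantum sample of $S$ to one of $\pi(S)$ via $U_\pi$, run $\scA$, and output $\pi^{-1}$ of its result; uniformity of $\pi(S)$ then gives all three properties. You simply fill in the verification steps that the paper leaves as "we may verify", and the reasoning is correct.
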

\begin{proof}
The algorithm~$\scA'$ samples a uniformly random permutation~$\pi$. Then it maps each copy of $\ket{\psi_S}$ to~$\ket{\psi_{\pi(S)}}$ and runs the algorithm $\scA$ on these samples. Suppose the output of~$\scA$ is $S'$. The algorithm~$\scA'$ outputs $\pi^{-1}(S')$. Permutation invariance follows from the uniform choice of $\pi$, and we may verify that the other properties claimed in the lemma are also satisfied.
\end{proof}

Permutation invariance helps us prove the lower bound for approximation.
\begin{theorem}
\label{thm-qcc-mismatches}
Let~$\Delta \ge 0$ be a constant. Given~$n,k$ and quantum samples for an unknown size-$k$ subset~$S \subseteq [n]$, any algorithm that approximates~$S$ to within~$\Delta$ in expectation uses~$\Omega( k \log \min \set{m + 1, k} )$ samples.
\end{theorem}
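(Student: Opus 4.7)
My plan is to follow the information-theoretic template of \Cref{sec-learning-functions}, combining the upper bound on~$\mi(A:B)_\rho$ from \Cref{lem-miab-ub} with a lower bound on~$\mi(A:F)$ derived from the approximation guarantee, via the Data Processing Inequality. First I would normalise: \Cref{lemma-pi} lets me assume the learner~$\scA$ is permutation-invariant, and \Cref{lem-hard-instances} lets me replace~$n$ by~$n_0 \eqdef k + m_0$ with~$m_0 \eqdef \min\set{m, \floor{k/40}}$. In the reduced problem~$m_0 \le n_0/40$ (so \Cref{lem-miab-ub} applies), and the target~$k \log(m_0+1)$ matches~$k \log \min\set{m+1,k}$ up to constants in either case of the minimum.

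Next I would lower-bound~$\mi(A:F) = \log \binom{n_0}{m_0} - \entropy(A|F)$. With~$A$ uniform and output~$F$, set~$J \eqdef \size{A \setminus F}$, $T \eqdef A \setminus F$, and~$T' \eqdef F \setminus A$. Since~$A = (F \setminus T') \cup T$, the chain rule gives
\[
\entropy(A | F) \quad \le \quad \entropy(J) + \expct \log \binom{m_0}{J} + \expct \log \binom{k}{J}.
\]
The approximation guarantee~$\expct J \le \Delta$ with~$\Delta$ constant, the bound~$\log \binom{N}{j} \le j \log N$, and the standard entropy bound for a non-negative integer random variable of bounded mean together yield~$\entropy(A|F) = \Order(\log n_0)$ (with hidden constant depending on~$\Delta$), so that~$\mi(A:F) \ge \log \binom{n_0}{m_0} - \Order(\log n_0)$.

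I would then parameterise~$t = k \ln m_0 + c n_0$ and invoke \Cref{lem-miab-ub} to get
\[
\mi(A:B)_\rho \quad \le \quad \log \binom{n_0}{m_0} - \left( \log \tfrac{n_0}{m_0} - \Order\!\left(\tfrac{m_0}{n_0}\right) \right) \min\set{\e^{-2c},\, \e^{-c/3}} + \log(m_0+1).
\]
The Data Processing Inequality~$\mi(A:B)_\rho \ge \mi(A:F)$ then forces~$(\log(n_0/m_0) - \Order(1)) \min\set{\e^{-2c}, \e^{-c/3}} \le \Order(\log n_0)$. I would split by comparing~$m_0$ with~$\sqrt{n_0}$: when~$m_0 \le \sqrt{n_0}$ the factor~$\log(n_0/m_0) \ge \tfrac{1}{2}\log n_0$ absorbs the slack, giving~$c \ge -\Order(1)$ and hence~$t \ge k \ln m_0 - \Order(k)$ (using~$n_0 \le (41/40)k$); when~$\sqrt{n_0} < m_0 \le n_0/40$, the bound~$\ln m_0 \ge \tfrac{1}{2}\ln n_0$ dominates the weaker~$\Order(\log \log n_0)$ slack in~$c$, still yielding~$t \ge \Omega(k \ln m_0)$. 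In both sub-regimes, for~$m_0$ above a constant threshold depending on~$\Delta$, this is the desired~$t \ge \Omega(k \log(m_0+1))$.

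The main obstacle will be small~$m_0$, below the aforementioned constant threshold: there~$\log \binom{n_0}{m_0}$ is of the same order as the~$\Order(\log n_0)$ slack in the lower bound for~$\mi(A:F)$ and the information-theoretic argument becomes vacuous. Fortunately the target~$k \log(m_0+1) = \Order(k)$ is small in this regime, and I would establish it by a separate Holevo-Helstrom argument (mirroring \Cref{lemma-PAC-no-VC}) on two adversarial size-$k$ subsets~$S_1, S_2$ differing in exactly~$m_0$ positions: the approximation guarantee with small enough constant~$\Delta$ forces the induced output distributions for the two cases to be distinguishable, and~$\abs{\braket{\psi_{S_1}}{\psi_{S_2}}}^{t}$ bounded away from~$0$ then forces~$t \ge \Omega(k)$.
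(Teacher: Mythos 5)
Your high-level strategy is the paper's: normalise via \Cref{lemma-pi} and \Cref{lem-hard-instances}, lower-bound $\mi(A:F)$ from the approximation guarantee, upper-bound $\mi(A:B)_\rho$ via \Cref{lem-miab-ub}, and force a contradiction via the Data Processing Inequality when $t = k\ln m_0 + cn_0$ with $c$ too small. Your decomposition of $\entropy(A|F)$ via $J$, $T$, $T'$ gives essentially the same $O(\log n_0)$ slack as the paper's permutation-invariance calculation of $\entropy(F|A)$.

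The difference that causes trouble is the normalisation. The paper reduces to $m \le n^\alpha$ for a constant $\alpha<1$, so $m\ln m = o(n)$ and the hypothesis $m\ln m \le \abs{c}n/10$ of \Cref{lem-miab-ub} holds for any fixed constant $c$ once $n$ is large. You keep $m_0$ as large as $\Theta(k)$, and in the sub-regime $\sqrt{n_0} < m_0 \le n_0/40$, this hypothesis is no longer automatic: when $m_0 = \Omega(n_0/\log n_0)$, \Cref{lem-miab-ub} is silent unless $\abs{c} \ge 10m_0\ln m_0/n_0$, which reaches $\Theta(\ln m_0)$ at the top of the range. Your accounting attributes only an $O(\log\log n_0)$ slack to $c$ in this sub-regime; that is the slack coming from the DPI inequality alone, and it ignores the constraint needed even to invoke the lemma. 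The conclusion $t = \Omega(k\ln m_0)$ can still be recovered --- $10 m_0 \ln m_0 \le (n_0/4)\ln m_0 < k\ln m_0$ since $m_0 \le n_0/40$ implies $k \ge 39n_0/40$ --- but you must track this constant and verify $k - 10 m_0 > 0$; as written the estimate in the proposal is wrong and the argument has a hole.

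The proposed fallback for small $m_0$ also does not work as stated. If $\Delta \ge 1$ and $m_0 \le \Delta$, two size-$k$ sets $S_1$, $S_2$ at mismatch distance $m_0$ cannot be forced apart by the approximation guarantee: an algorithm that ignores its samples and always outputs $S_1$ has expected mismatch at most $m_0 \le \Delta$ for both, so no lower bound on $t$ follows. This is consistent with the paper's observation that the second-order term becomes $-\Theta(n)$ when $\Delta \ge 1$, so the bound is vacuous until $\ln m$ exceeds a $\Delta$-dependent constant --- which is all an asymptotic $\Omega(\cdot)$ claim requires. In short, the separate small-$m_0$ case is unnecessary, and the Holevo--Helstrom patch you propose only works for small enough $\Delta < 1$.
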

\begin{proof}
Let~$\alpha \in (0,1)$ be constant to be specified later. By Lemma~\ref{lem-hard-instances}, it suffices to prove an~$\Omega( k \log (m + 1))$ bound on the sample complexity when~$m \le n^\alpha$ and~$n^{(1 - \alpha)} \ge 2$. (For the details, we refer the reader to \Cref{cor-qcc-lb}, where we make a similar argument.)

Suppose an algorithm uses~$t$ quantum samples and produces a~$\Delta$-approximation to the unknown set. We follow the notation for the registers used by the learning algorithm introduced in \Cref{sec-prep}. By the Data Processing Inequality we have
\[
\mi(A:B)_\rho \quad \geq \quad \mi(A:F) \quad = \quad \entropy(F) - \entropy(F|A) \enspace.
\]
We show that this inequality is violated if~$t = k \ln m + cn$ for a suitable constant~$c \in \reals$. By the Data Processing Inequality,~$\mi(A:B)_\rho$ is non-decreasing in~$t$, so the inequality is also violated for~$t < k \ln m + cn$. We thus conclude a~$k \ln m + cn$ lower bound.

We start with a tight lower bound on~$\mi(A:F)$. By permutation invariance, $\entropy(F) = \entropy(A) = \log \binom{ n}{ m}$ and~$\entropy(F|A) = \sum_{i=0}^m q_i u_i \log \frac{ 1 }{u_i}$, where~$q_i$ is the number of subsets~$S'$ such that for a fixed size-$k$ set~$S$, the pair~$S,S'$ have exactly~$i$ mismatches, and~$u_i$ is the probability of getting a particular such $S'$ as output given samples for~$S$. Note that~$q_i$ and~$u_i$ are independent of~$S$ and~$S'$, and~$q_i = \binom{ k}{ i} \binom{ m}{ i}$.
Now
\[
\log q_i \quad = \quad \log \binom{ k}{ i} + \log \binom{m}{i} \quad \leq \quad i (\log k +\log m) \enspace.
\]	 
Thus
\begin{align*}
\mi(A:F) \quad & = \quad \log \binom{ n}{ m} - \sum_{i=0}^m q_i u_i \log \frac {1}{u_i} \\
    & = \quad \log \binom{ n}{ m} - \sum_{i=0}^m q_i u_i \log q_i - \sum_{i=0}^m q_i u_i \log \frac{1}{ q_i u_i} \\
    & \geq \quad \log \binom{ n}{ m} - \sum_{i=0}^m i q_i u_i (\log k + \log m) - \log (m+1) \\
    & = \quad \log \binom{ n}{ m} - \Delta (\log k + \log m) - \log (m+1) \enspace.
\end{align*}
Combining this with the upper bound on~$\mi(A:B)$ given by  Lemma~\ref{lem-miab-ub} when~$t = k \ln m + cn$, we get
\begin{align*}
\log \binom{n}{m} - \left(\log n - \log m - \frac{2m}{n} \log \e \right) & \min \set{\e^{-2c}, \e^{-c/3}} + \log (m+1) \\
    & \geq \quad \log \binom{n}{m} - \Delta (\log k +\log m) - \log (m+1) \enspace.
\end{align*}
This is equivalent to
\begin{align*}
- \left(\log n - \log m - \frac{2m}{n} \log \e \right) \min \set{\e^{-2c}, \e^{-c/3}} + 2 \log (m+1) + \Delta \log m \quad \geq \quad - \Delta \log k \enspace.
\end{align*}
Using~$1 \le m \le n^\alpha$ and~$k < n$, we get
\begin{align}
\label{eq-mm-bds}
- \left( (1 - \alpha) \log n - \frac{2}{n^{1 - \alpha}} \log \e \right) \min \set{\e^{-2c}, \e^{-c/3}} + 2 \alpha \log n + 2 + \alpha \Delta \log n \quad \geq \quad - \Delta \log n \enspace.
\end{align}
Dividing throughout by~$\log n$ and taking~$n \rightarrow \infty$, we get
\[
\min \set{\e^{-2c}, \e^{-c/3}} \quad \leq \quad \Delta + \frac{2 \alpha}{1 - \alpha} \quad \leq \quad \frac{\Delta + 2 \alpha}{1 - \alpha} \enspace.
\]
Suppose~$\Delta \in [0,1)$. Taking~$\alpha < (1 - \Delta)/3$ and taking~$c \ge 0$ such that
\[
c \quad < \quad \frac{1}{2} \log \frac {1-\alpha}{\Delta + 2 \alpha} \enspace,
\]
and~$n$ large enough, we see that the inequality~\eqref{eq-mm-bds} is violated. (The choice of~$\alpha$ ensures that there is such a~$c \ge 0$.) If~$\Delta \ge 1$, we take~$c < 0$ such that
\[
-c \quad > \quad 3 \log \frac{\Delta + 2 \alpha}{1-\alpha}
\]
for any choice of~$\alpha \in (0,1)$. The right hand side is positive as~$\Delta \ge 1$. Taking~$n$ large enough, we again see that Eq.~\eqref{eq-mm-bds} is violated. This proves the claimed bound. Note that the second order term in the resulting lower bound is~$\Theta(n)$ when~$\Delta < 1$, and~$-\Theta(n)$ when~$\Delta \ge 1$.
\end{proof}

\subsection{Sample complexity of Quantum Coupon Collection}
\label{sec-coupon-collection-lb}

In this section, we prove 
a sharper lower bound on the sample complexity of the Quantum Coupon Collector problem. 

To characterize sample complexity, we bound the decoding error of any learning algorithm by studying the \emph{distinguishability\/} of the ensemble given by a fixed number of samples. The \emph{distinguishability\/} of an ensemble is the probability that an optimal measurement correctly identifies a state selected at random from the ensemble. 
\begin{definition}
Let~$\cF \eqdef \big( ( p_i, \sigma_i ) : \sigma_i \in \qstate( \complex^d), i \in [l] \big) $ be an ensemble of states in which state~$\sigma_i$ occurs with probability~$p_i$. 
We define the \emph{distinguishability\/}~$\dist$ of~$\cF$ as
\[
	\dist(\cF) \quad \eqdef \quad \max_{\textrm{measurement } M} \sum_{i = 1}^l p_i \trace(M_i \sigma_i) \enspace,
\]
where the maximization is over all measurements~$M \eqdef ( M_1, M_2, \dotsc, M_l)$ with~$l$ outcomes.
\end{definition}
We can estimate the distinguishability of an ensemble of states via the generalised Holevo-Curlander bounds stated below~\cite{Holevo79-distinguishability,Curlander79-distinguishability,ON99-converse-channel-coding,Tyson09-distinguishability}. The upper bound on distinguishability was shown by Curlander~\cite{Curlander79-distinguishability} in the case of ensembles of equiprobable \emph{linearly independent\/} pure states. In the ensemble at hand, however, the states are not necessarily linearly independent for arbitrary~$t \ge 1$. The upper bound in the case of ensembles of equiprobable, possibly mixed states was derived by Ogawa and Nagaoka~\cite[Lemma~1]{ON99-converse-channel-coding}. The proof they gave also extends with minor modifications to non-uniform ensembles. The two bounds were proven --- re-proven independently in the case of the upper bound~\cite{Tyson09-erratum} --- by Tyson~\cite[Theorem~10]{Tyson09-distinguishability}. Tyson later gave another proof of the bounds which also generalizes to error-recovery~\cite[Section~III]{Tyson10-error-recovery}. 
\begin{theorem}[generalised Holevo-Curlander bounds; \cite{ON99-converse-channel-coding,Tyson09-distinguishability}]
\label{thm:holevo-curlander}
Let $ \cF \eqdef \big( ( p_i, \sigma_i ) : \sigma_i \in \qstate( \complex^d), i \in [l] \big)$
be an ensemble of~$l$ quantum states. Then the distinguishability of~$\cF$ satisfies
\[
	\left( \hc(\cF) \right)^2 \quad \leq \quad \dist(\cF) \quad \leq \quad \hc(\cF) \enspace,
\]
where
\[
	\hc(\cF) \quad \eqdef \quad \trace \left( \sum_{i = 1}^l p_i^2 \sigma_i^2 \right)^{1/2} \enspace.
\]
\end{theorem}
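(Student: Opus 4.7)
The plan is to prove the two inequalities separately. Both ultimately rest on concrete constructions and analyses of the \emph{pretty good measurement\/} (PGM) associated with the ensemble~$\cF$, combined with operator-theoretic Cauchy--Schwarz-type inequalities.

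For the lower bound~$\hc(\cF)^2 \le \dist(\cF)$, I would exhibit a measurement whose success probability is at least~$\hc(\cF)^2$. The natural candidate is a square-root (PGM) measurement. Writing~$W \eqdef \sum_i p_i^2 \sigma_i^2$, I would consider the POVM obtained by symmetrically ``normalising'' the weighted states~$p_i \sigma_i$ with~$W^{-1/2}$ (with the appropriate completion on the kernel of~$W$), and compute its success probability directly. For pure-state ensembles~$\sigma_i = \density{\phi_i}$, this reduces to analysing the Gram matrix with entries~$\sqrt{p_i p_j} \braket{\phi_i}{\phi_j}$, and a scalar Cauchy--Schwarz inequality applied to the diagonal entries of~$G^{1/2}$ delivers the bound~$\trace(W^{1/2})^2$. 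For general mixed states one passes to a purification~$\ket{\Phi_i}$ of each~$\sigma_i$ in~$\cH \tensor \cH$, observes that distinguishing the restrictions cannot be harder than distinguishing the purifications, and reduces to the pure-state case.

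For the upper bound~$\dist(\cF) \le \hc(\cF)$, let~$M \eqdef (M_1, \dotsc, M_l)$ be an arbitrary POVM with~$\sum_i M_i = \id$. I would write the success probability as
\[
\sum_i p_i \trace(M_i \sigma_i) \quad = \quad \sum_i \trace\bigl(M_i^{1/2} (p_i \sigma_i) M_i^{1/2}\bigr),
\]
and view this as the trace of the product of two block-diagonal positive operators~$\hat M \eqdef \bigoplus_i M_i$ and~$\hat\sigma \eqdef \bigoplus_i p_i \sigma_i$ acting on~$\bigoplus_i \cH$. Applying the matrix Cauchy--Schwarz inequality and using the partial-trace constraint~$\sum_i M_i = \id$ globally (rather than termwise), together with operator concavity of the square root, the right-hand side collapses to~$\trace(W^{1/2})$.

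The main obstacle is the upper bound, and specifically the manipulation that produces~$\trace(W^{1/2})$ rather than the weaker~$\sqrt{\trace W}$. A termwise application of matrix Cauchy--Schwarz followed by scalar Cauchy--Schwarz yields only the latter; obtaining the sharper trace-of-square-root form requires a careful block-matrix argument exploiting the concavity of~$X \mapsto X^{1/2}$ (or an operator Jensen-type inequality). This is precisely where the proofs due to Curlander, Ogawa--Nagaoka and Tyson diverge in technique, and it is the delicate step I expect to be the crux of the argument.
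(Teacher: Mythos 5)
This theorem is not actually proved in the paper: it is imported as a black box from Ogawa--Nagaoka~\cite{ON99-converse-channel-coding} and Tyson~\cite{Tyson09-distinguishability}, with only a historical discussion preceding the statement. So there is no ``paper's own proof'' to compare against; what follows is an assessment of your sketch on its own terms.

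Your high-level plan (a square-root measurement for the lower bound, a Cauchy--Schwarz-type argument for the upper bound) is in the right spirit, but the details as written contain genuine errors. For the lower bound, the POVM obtained by conjugating the \emph{linear} weighted states~$p_i\sigma_i$ with~$W^{-1/2}$ does not sum to the identity, since~$\sum_i W^{-1/2}(p_i\sigma_i)W^{-1/2}=W^{-1/2}\bar\rho\,W^{-1/2}$ with~$\bar\rho\eqdef\sum_ip_i\sigma_i\neq W$ in general; the correct square-root measurement normalises the \emph{squares}~$(p_i\sigma_i)^2$, giving~$E_i=W^{-1/2}(p_i\sigma_i)^2W^{-1/2}$. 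Correspondingly, in the pure-state case the Gram matrix whose nonzero spectrum matches that of~$W=\sum_ip_i^2\density{\phi_i}$ has entries~$p_ip_j\braket{\phi_i}{\phi_j}$, not~$\sqrt{p_ip_j}\braket{\phi_i}{\phi_j}$; with your entries the Gram matrix reproduces~$\bar\rho$, and the computation yields~$(\trace\bar\rho^{1/2})^2/l$ instead of~$(\trace W^{1/2})^2$. The purification step is also inverted: tracing out the purifying system only loses information, so purifications are \emph{at least} as distinguishable as their reduced states, i.e.,~$\dist(\text{purified})\geq\dist(\cF)$. A lower bound for the purified ensemble therefore does not yield a lower bound for the mixed ensemble; moreover, the natural Gram matrix of the purifications involves~$\trace(\sigma_i^{1/2}\sigma_j^{1/2})$ rather than~$\trace(\sigma_i\sigma_j)$, so the target quantity~$\trace W^{1/2}$ would not appear. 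The general mixed-state lower bound in the literature is proved directly through the square-root measurement without reducing to pure states.

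For the upper bound, invoking operator concavity of the square root is not what makes the argument go through, and a termwise trace Cauchy--Schwarz only yields the weaker~$\sqrt{d\cdot\trace W}$. The standard Ogawa--Nagaoka/Tyson argument introduces an auxiliary positive operator~$\Phi\succ 0$ and applies a Cauchy--Schwarz/AM--GM inequality of the form
\[
2\,\trace(M_i\,p_i\sigma_i)\ \leq\ \trace(M_i\Phi)+\trace\!\left(M_i\,(p_i\sigma_i)\Phi^{-1}(p_i\sigma_i)\right),
\]
sums over~$i$ using~$\sum_iM_i=\id$ for the first piece and~$M_i\preceq\id$ for the second to obtain~$\trace\Phi+\trace(\Phi^{-1}W)$, and finally optimises by choosing~$\Phi=W^{1/2}$. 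Identifying this minimisation step as the source of the square root is the missing ingredient in your description.
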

Using the upper bound on distinguishability above, we first derive a bound for a small range of parameters for which the Quantum Coupon Collector problem appears to be the hardest. (With some additional arguments, we can also prove the theorem below using the original bound due to Curlander.) 
\begin{theorem}
\label{thm-qcc-lb1}
Let~$\delta\in (0,1/40]$ be a constant.
Any algorithm for the Quantum Coupon Collector problem with parameters~$n,k$, and error probability at most~$\delta$ has sample complexity at least
\[
k \ln m + c_0 n
\]
when~$1 \le m \le \delta n$ and~$m\ln m \le c_0 n/20$, where~$c_0 \eqdef \tfrac{1}{2} \ln \big(\tfrac{1 - \delta}{32 \delta } \big)$.
\end{theorem}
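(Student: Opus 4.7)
The plan is to apply the Holevo-Curlander upper bound (\Cref{thm:holevo-curlander}) to the ensemble
\[
\cF_t \quad \eqdef \quad \Big\{ \Big( \tfrac{1}{\binom{n}{k}}, (\density{\psi_S})^{\tensor t} \Big) : S \subset [n], \size{S} = k \Big\} \enspace.
\]
If a learner succeeds with probability at least $1 - \delta$ using $t$ samples, then $\hc(\cF_t) \ge \dist(\cF_t) \ge 1 - \delta$. I will prove the contrapositive: $\hc(\cF_t) < 1 - \delta$ whenever $t = k\ln m + cn$ with $c < c_0$. Using \Cref{lem-spectrum} and the fact that the states in $\cF_t$ are pure and equiprobable, one computes
\[
\hc(\cF_t) \quad = \quad \binom{n}{k}^{-1/2} \trace \big( (\rho_t^B)^{1/2} \big) \quad = \quad \sum_{s = 0}^{m} \sqrt{r_s \, q_{s,t}} \quad = \quad \fidelity(r, q_t) \enspace,
\]
where $r_s \eqdef l_s / \binom{n}{m}$ and $q_{s,t} \eqdef l_s \lambda_{s,t}$ are probability distributions on $\naturals_{m+1}$, and $\fidelity$ denotes the classical Bhattacharyya coefficient (fidelity). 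This identification is the crux of the argument: it converts a quantum distinguishability question into the overlap of two explicit classical distributions.

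Both $r$ and $q_t$ are concentrated near $s = m$. A short calculation gives $r_m = 1 - m/(n - m + 1) \ge 1 - \delta/(1-\delta)$ since $m \le \delta n$, while the upper bound in \Cref{lem-lmlst-bound} gives $q_{m,t} \le 1 - \e^{-2c}/2$ (its hypothesis $m \ln m \le cn/10$ follows from $m \ln m \le c_0 n/20$ for $c$ close to $c_0$, and the case $c \le 0$ is handled by padding, since success probability is non-decreasing in $t$). Isolating the $s = m$ summand of $\fidelity(r, q_t)$ and applying Cauchy-Schwarz on the remaining indices,
\[
\fidelity(r, q_t) \quad \le \quad \sqrt{r_m q_{m,t}} + \sqrt{(1-r_m)(1-q_{m,t})} \enspace.
\]
Writing $\alpha \eqdef 1 - q_{m,t}$ and $\beta \eqdef 1 - r_m$, a direct expansion yields the identity $u^2 + v^2 = 1$ with $u \eqdef \sqrt{(1-\alpha)(1-\beta)} + \sqrt{\alpha \beta}$ and $v \eqdef \sqrt{\alpha(1-\beta)} - \sqrt{\beta(1-\alpha)}$; thus $\fidelity(r, q_t) \le u < 1 - \delta$ is equivalent to $v > \sqrt{2\delta - \delta^2}$. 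The hypothesis $c < c_0 = \tfrac{1}{2}\ln((1-\delta)/(32\delta))$ forces $\alpha > 16\delta/(1-\delta)$, which together with $\beta \le \delta/(1-\delta)$ gives $v \ge \sqrt{\delta}\,(4\sqrt{1-2\delta} - \sqrt{1-17\delta})/(1-\delta)$; a routine numerical check using $\delta \le 1/40$ confirms this exceeds $\sqrt{2\delta}$ with a constant factor of slack.

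The main obstacle is obtaining a tight enough estimate in the final algebraic step for the whole range $\delta \in (0, 1/40]$. A bound via total variation (using only the coordinate $s = m$) yields merely $1 - \fidelity \gtrsim \delta^2$, which fails for small $\delta$, and Cauchy-Schwarz over the full index set collapses to the trivial $\fidelity \le 1$. The two-term Bhattacharyya estimate combined with the identity $u^2 + v^2 = 1$ is what captures the fact that $r$ is nearly a point mass at $m$ whereas $q_t$ is only moderately so, and preserves the correct linear-in-$\delta$ gap needed for the claim.
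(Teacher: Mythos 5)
Your proposal is correct and follows essentially the paper's route: the generalised Holevo--Curlander upper bound, isolating the $s = m$ eigenvalue block with Cauchy--Schwarz on the remainder, and invoking \Cref{lem-lmlst-bound}, with your Bhattacharyya-coefficient reading of $\hc(\cF_t)$ (via $r_s = l_s/\binom{n}{m}$, $q_{s,t} = l_s\lambda_{s,t}$) and the identity $u^2 + v^2 = 1$ merely repackaging the paper's closing arithmetic (there $r_m = 1-w$ and $q_{m,t} = v$, and the paper instead compares $1 - \e^{-2c}/8$ against $1 - \e^{-2c}/16$). Just tighten the padding remark: \Cref{lem-lmlst-bound}'s hypothesis $m\ln m \le cn/10$ needs $c \ge c_0/2$, so you should pad up to $c = c_0$ for \emph{every} $t < k\ln m + c_0 n$, not only for $c \le 0$.
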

\begin{proof}
We follow the notation introduced in Section~\ref{sec-prep}. Suppose that for a constant~$\delta$ as in the statement, with probability~$1 - \delta$, a learner~$\scA$ identifies the unknown~$k$-element subset~$S$ of~$[n]$ in register~$A$, given~$t$ copies of the corresponding quantum sample~$\ket{\psi_S}$ in register~$B$. The algorithm outputs its guess for the set~$S$ in register~$F$. 

Let~$\cE$ be the ensemble consisting of states~$(\density{\psi_{S}})^{\otimes t}$ for a uniformly random size-$k$ subset~$S \subset [n]$. By \Cref{thm:holevo-curlander}, we have
\begin{align*}
1 - \delta \quad & \le \quad \Pr[ F = A] \\
    & \le \quad \hc(\cE) \\
    & = \quad \frac{1}{{\binom n m}^{1/2}} \trace \sqrt{ \rho^B } \enspace. 
\end{align*}
	
Suppose that~$t = k\ln m +cn$, for some parameter~$c$. Since the probability of error may only decrease if the algorithm has access to more quantum samples, we assume that~$c \ge 0$. We show that if~$c$ is ``too small'' a constant, then~$\hc(\cE) < 1 - \delta$, which is a contradiction.

By \Cref{lem-spectrum}, the state~$\rho^B$ has an eigenvalue~$\lambda_{s,t}$ with multiplicity~$l_s$ for each~$s = 0, 1, \dotsc, m$. Using the value of~$l_s$, the Cauchy-Schwarz Inequality, and~$\trace(\rho^B) = 1$, we have
\begin{align*}
\hc(\cE) \quad & = \quad {\binom n m}^{-1/2} \sum_{s=0}^m l_s \sqrt{\lambda_{s,t}} \\
    & = \quad {\binom n m}^{-1/2} \left( \sqrt{l_m} \sqrt{l_m \lambda_{m,t}} + \sum_{s=0}^{m-1} \sqrt{l_s} \sqrt{l_s \lambda_{s,t}} \right) \\
    & \le \quad {\binom n m}^{-1/2} \left( \sqrt{l_m} \sqrt{l_m \lambda_{m,t}} + \left( \sum_{s=0}^{m-1} l_s \right)^{1/2} \left( \sum_{s=0}^{m-1} l_s \lambda_{s,t} \right)^{1/2} \right) \\
    & = \quad {\binom n m}^{-1/2} \left( \left( \binom{n}{m} - \binom{n}{m-1} \right)^{1/2} \sqrt{l_m \lambda_{m,t}} + {\binom{n}{m-1}}^{1/2} ( 1 - l_m \lambda_{m,t} )^{1/2} \right)  \enspace.
\end{align*}
Defining~$ v \eqdef l_m \lambda_{m,t}$ and~$w \eqdef \tfrac{\binom{n}{m-1}}{\binom n m}$, the above inequality simplifies to
\[
\hc(\cF) \quad \leq \quad \sqrt{v(1-w)}+\sqrt {w(1-v)} \enspace.
\]
Intuitively, we have~$ 1-v = 1-l_m\lambda_{m,t} \ll 1$ and~$w = \frac {m}{n-m+1} \ll 1$. So if the upper bound on~$l_m \lambda_{m,t}$ in \Cref{lem-lmlst-bound} were tight, we would have
\[
\hc(\cF) \quad \lesssim \quad  \sqrt{l_m\lambda_{m,t} } \left( 1-\frac{m}{n-m+1} \right)^{1/2}
    \quad \approx \quad 1 - \frac {e^{-2c}}{4} - \frac {m}{2 (n-m+1)} \enspace.
\]
When combined with the lower bound of~$1 - \delta$ on~$\hc(\cE)$, this would give us~$c \in \Omega \big(\ln \tfrac{1}{4 \delta} \big)$. However, the lower bound we have on~$l_m \lambda_{m,t}$ does not suffice for such an argument, and we analyse the bound on distinguishability differently.

Observe that~$w\in (0,1)$ and that as a function in~$x$, the expression~$\sqrt{x(1-w)} + \sqrt {w(1-x)}$ is increasing in the range $[0,1-w]$. Also, from the bound~$m \le \delta n$, we have that~$ w \leq \tfrac{\delta}{1 - \delta}$. Combining these with the upper bound from \Cref{lem-lmlst-bound}, we see that for any~$c \ge c_0/2$,
\begin{align*}
\hc(\cE) \quad & \leq \quad \left(1-\frac {{\e}^{-2c}}2\right)^{1/2} \sqrt{1-w} + \left(\frac {w \e^{-2c}}{2}\right)^{1/2} \\
    & \leq \quad  1 - \frac{\e^{-2c}}{4} + \left( \frac {\delta \e^{-2c}}{2( 1 - \delta)} \right)^{1/2} \\
    & \leq \quad  1 - \frac{\e^{-2c}}{8} \enspace,
\end{align*}
if~$\tfrac{\delta}{1 - \delta} \leq \tfrac{\e^{-2c}}{32}$, or equivalently, $c \le \tfrac{1}{2} \ln\left(\tfrac{1 - \delta} {32 \delta}\right) = c_0$. This is a contradiction as
\[
\hc(\cE) \quad \ge \quad 1 - \delta \quad \ge \quad \left( 1 + \frac{\e^{-2c}}{32} \right)^{-1}
\quad \ge \quad 1 - \frac{\e^{-2c}}{16} \enspace.
\]
This completes the proof.
\end{proof}

Finally, we explain how this implies a more general bound, when~$m$ may be~$\Omega(n/\log n)$. (The~$\order_k(1)$ term in the statement below stands for~$\order(1)$ with respect to the parameter~$k$.)
\begin{corollary}
\label{cor-qcc-lb}
Let~$\delta\in (0,1/40]$ be a constant.
Any algorithm for the Quantum Coupon Collector problem with parameters~$n,k$, and error probability at most~$\delta$ has sample complexity at least
\begin{equation}
\label{eq-cc-lb}
(1 - \order_k(1)) \; k \ln \left( \min\set{k, m + 1} \right)
\end{equation}
whenever~$(\ln k)/k \le \min\set{ c_0 /20, \;\delta \ln 2}$, where~$c_0$ is as defined in \Cref{thm-qcc-lb1}.
\end{corollary}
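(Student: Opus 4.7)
The plan is to reduce the given instance with parameters $(k, n)$ to one with smaller parameters via \Cref{lem-hard-instances}, and then invoke \Cref{thm-qcc-lb1} on the reduced instance. Setting $m_0 \eqdef n_0 - k$ for the reduced set-universe size $n_0 \le n$, the theorem's lower bound $k \ln m_0 + c_0 n_0$ is monotonically increasing in $m_0$, so one would like to take $m_0$ as large as possible subject to the two hypotheses of the theorem, namely $m_0 \le \delta n_0$ and $m_0 \ln m_0 \le c_0 n_0/20$. The binding constraint turns out to be the second, which caps $m_0$ at approximately $c_0 k/(20 \ln k)$.

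Accordingly, I would define the threshold $m^\star \eqdef \lfloor c_0 k/(20 \ln k) \rfloor$. The hypothesis $(\ln k)/k \le c_0/20$ guarantees $m^\star \ge 1$. Both conditions of \Cref{thm-qcc-lb1} hold whenever $m_0 \le m^\star$ and $n_0 = k + m_0$: the bound $m_0 \ln m_0 \le m^\star \ln m^\star \le c_0 k/20 \le c_0 n_0/20$ follows from $\ln m^\star \le \ln k$, and $m_0/n_0 \le m^\star/k \le c_0/(20 \ln k) \le \delta$ uses the hypotheses on $(\ln k)/k$ together with $\delta \le 1/40$. I would split the analysis on whether $m \ge m^\star$ or $m < m^\star$.

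In the first case, set $m_0 \eqdef m^\star$ and use \Cref{lem-hard-instances} to pass to the instance with parameters $(k, n_0)$, where $n_0 = k + m^\star \le n$. \Cref{thm-qcc-lb1} then gives sample complexity at least $k \ln m^\star \ge k(\ln k - \ln \ln k - O(1)) = (1 - \order_k(1)) k \ln k$, which dominates $(1 - \order_k(1)) k \ln \min\set{k, m+1}$. In the second case, I would apply the theorem directly to the original $(k, n)$ instance, whose hypotheses are inherited since $m < m^\star$; this yields a bound of $k \ln m + c_0 n \ge k \ln m + c_0 k$. Here $\min\set{k, m+1} = m+1$, since $m < m^\star < k$. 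When $m$ grows with $k$, the ratio $\ln m/\ln(m+1)$ tends to $1$ and the claim follows; when $m$ remains bounded, the additive term $c_0 n$ is $\Omega(k)$, and combined with the auxiliary hypothesis $(\ln k)/k \le \delta \ln 2$, suffices to absorb the constant-factor gap between $k \ln m + c_0 k$ and $k \ln(m+1)$ into the $\order_k(1)$ correction.

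I expect the main obstacle to be verifying the bounded-$m$ sub-case in the second regime above. Concretely, one must show that $\ln m + c_0 \ge (1 - \order_k(1)) \ln(m+1)$, which amounts to a careful accounting of how the explicit value of $c_0$ from \Cref{thm-qcc-lb1} interacts with the hypothesis $(\ln k)/k \le \delta \ln 2$. This step is a fiddly but routine calculation rather than a conceptually new idea, so the substantive content of the proof should lie in the reduction step plus the crafting of the threshold $m^\star$ that simultaneously obeys both hypotheses of \Cref{thm-qcc-lb1} while producing a leading term of the correct order $k \ln k$.
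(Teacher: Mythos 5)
Your approach --- reduce to a smaller universe via \Cref{lem-hard-instances}, invoke \Cref{thm-qcc-lb1} at the reduced instance, and split on whether $m$ is above or below a threshold of order $k/\ln k$ --- is exactly the paper's. The gap is in your choice of threshold and the attendant condition check.

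You set $m^\star \eqdef \lfloor c_0 k/(20\ln k)\rfloor$ and assert that $m^\star/k \le c_0/(20\ln k) \le \delta$ follows from the hypotheses. The last inequality amounts to $\ln k \ge c_0/(20\delta)$, which is \emph{not} implied by $(\ln k)/k \le \min\set{c_0/20, \delta\ln 2}$ when $\delta$ is small. For instance, with $\delta = 10^{-3}$ one has $c_0 \approx 1.72$ so $c_0/(20\delta) \approx 86$, forcing $k \ge e^{86}$; yet the corollary's hypothesis already holds at $k \sim 10^5$, where $\ln k \approx 11.5$. In that regime your $m^\star$ exceeds $\delta n_0$ and the first hypothesis of \Cref{thm-qcc-lb1} fails at the reduced instance. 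The fix, which is what the paper does, is to use $\beta \eqdef \min\set{\delta\ln 2, \, c_0/20}$ and set the threshold to be $n_0 - k$ with $n_0 \eqdef \floor{k(1+\beta/\ln k)}$; then $m_0 \le \beta k/\ln k \le (\delta\ln 2)k/\ln k \le \delta k$ holds simply from $\ln k \ge \ln 2$, while $m_0 \ln m_0 \le \beta k + (\beta k/\ln k)\ln(\beta/\ln k) \le c_0 n_0/20$ still goes through since the second summand is non-positive once $\ln k > \beta$. The corollary's hypothesis $(\ln k)/k \le \beta$ is exactly what guarantees $m_0 \ge 1$.

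Your flagged ``fiddly'' reservation about the bounded-$m$ sub-case is more serious than you suggest: a constant-factor deficit cannot be absorbed into an $\order_k(1)$ term. With $m=1$ and $\delta = 1/40$, \Cref{thm-qcc-lb1} gives only $c_0 n \approx 0.099\,n$, whereas $(1-\order_k(1))\,k\ln 2 \approx 0.693\,k$; so $\ln m + c_0 < (1-o(1))\ln(m+1)$. There is no clean arithmetic repair using the given $c_0$. The paper's own proof silently asserts that applying \Cref{thm-qcc-lb1} for $k < n < n_0$ yields Eq.~\eqref{eq-cc-lb} without checking this arithmetic. In the paper's framing the leading constant~$1$ is claimed tight only for $k \le n/2$, where $\min\set{k,m+1}=k$; the bounded-$m$ regime gives $\Theta(k\ln(m+1))$ but not with the $(1-\order_k(1))$ factor, so you should not expect to close this sub-case as stated, and it is not just a routine calculation.
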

\begin{proof}
Consider a fixed~$k > 1$. Let~$n_0 \eqdef \floor{ k ( 1 + \beta / \ln k)}$ for a positive constant~$\beta \le 1$ to be specified later, and let~$m_0 \eqdef n_0 - k$. We have~$n_0 > k$ if~$\beta \ge (\ln k)/k$. Secondly, $1 \le m_0 \le \beta k/ \ln k \le \delta k \le \delta n_0$ if~$\beta \le \delta \ln 2$. Finally, 
\begin{align*}
m_0 \ln m_0 \quad & \le \quad \frac{ \beta k}{\ln k} \ln \left( \frac{ \beta k}{\ln k} \right) \\
    & = \quad \beta k + \frac{ \beta k}{\ln k} \ln \left( \frac{ \beta}{\ln k} \right) \\
    & \le \quad \frac{ c_0 n_0}{ 20} \enspace,
\end{align*}
if, in addition,~$\beta \le c_0/20$. A parameter~$\beta$ (depending on the constant~$\delta$) satisfying all three conditions exists as long as~$(\ln k)/k$ is at most~$c_0/20$ and~$\delta \ln 2$.

With~$\beta$ as above, the bound given by Theorem~\ref{thm-qcc-lb1} applies, and the sample complexity of the Quantum Coupon Collector problem with parameters~$n_0, k, \delta$ is at least
\begin{align*}
k \ln m_0 + c_0 n_0 \enspace.
\end{align*}
Up to lower order terms, this is at least~$k \ln k - k \ln\ln k$. By Lemma~\ref{lem-hard-instances}, this lower bound continues to hold for all~$n \ge n_0$. 

For a fixed~$k$, and~$n$ such that~$k < n < n_0$, both~$m/n$ and~$(m \ln m)/n$ decrease as~$n$ decreases. So for such~$n$ we have~$m \le \delta n$ and~$m\ln m \le c_0 n/20$, and the lower bound given by Theorem~\ref{thm-qcc-lb1} holds. Thus we get the bound claimed in Eq.~\eqref{eq-cc-lb} for all~$n > k$.
\end{proof}

Note that \Cref{thm-qcc-lb1} gives us a stronger lower bound when the parameter~$m$ is ``small'' in comparison with~$n$. We lose some of this tightness in extending the lower bound to all parameter regimes.

\bibliographystyle{plain}
\bibliography{bibl}

\appendix

\nopagebreak

\section{Coupon-Collector-like random walks}
\label{ran_walk_section}

In \Cref{sec-spectrum}, we analyse properties of the random walk~$(W_t)$ by comparing it with the random walks~$(W'_t)$ and~$(W''_t - V''_t)$ defined in the proofs of \Cref{lem-expct-ub} and \Cref{lem-expct-lb}. Recall that~$W' \eqdef \tW(n-5m)$ and~$W'' \eqdef \tW(n)$, with~$n,m$ and~$\tW(n')$ as in that section. Here we derive a few properties of the random walk~$\tW(n')$ that are used in the analysis. The walk is very similar to the standard coupon collector process.

\begin{lemma}
\label{lem_ranwal_estimates}
The expectation of~$\tW_t(n')$ is given by the exact relation
\[
\expct \big[ \, \tW_t(n') \, \big] \quad = \quad m\left(1-\left(1-\frac 1 {n'}\right)^t\right) \enspace,
\]
and is bounded as
\[ 
m\left(1-\exp \! \left( - \, \frac t {n'-1} \right) \right) \quad \geq \quad \expct \big[ \,\tW_t(n') \, \big] 
    \quad \geq \quad  m\left(1-\exp \! \left( - \, \frac t {n'} \right) \right) \enspace.
\]
The probability that~$ \tW(n')$ reaches~$m$ at time~$t$ is bounded as 
\[ 
1-m \exp \! \left( -\frac t {n'} \right) \quad \leq \quad  \Pr \big[ \, \tW_t(n')=m \, \big] 
    \quad \leq \quad  1-m\exp \! \left( - \, \frac {t}{n'-1} \right) + \frac {m^2} 2 \exp \left( - \, \frac {2t}{n'} \right) \enspace.
\]
\end{lemma}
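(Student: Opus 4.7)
The plan is to view $\tilde W(n')$ as a standard coupon collection process restricted to the $m$ marked coupons. For each marked coupon $j \in [m]$, let $X_j$ be the indicator that $j$ has been drawn at least once in the first $t$ steps, so that $\tilde W_t(n') = \sum_{j=1}^m X_j$. Since at each step coupon $j$ is drawn with probability $1/n'$ independently across steps, we have $\Pr[X_j = 0] = (1 - 1/n')^t$. By linearity of expectation, $\expct[\tilde W_t(n')] = m(1 - (1-1/n')^t)$, giving the exact identity.

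For the two-sided bound on the expectation I would invoke the standard elementary inequalities
\[
\exp\!\left(-\frac{1}{n'-1}\right) \;\leq\; 1 - \frac{1}{n'} \;\leq\; \exp\!\left(-\frac{1}{n'}\right),
\]
where the right-hand inequality is $1-x \le \e^{-x}$ at $x = 1/n'$, and the left-hand one follows from $\ln(1-x) \ge -x/(1-x)$ (easy to verify by comparing the power series $-\sum_{k\ge 1} x^k/k$ and $-\sum_{k\ge 1} x^k$). Raising these to the $t$-th power and substituting into the exact formula yields the claimed bounds on $\expct[\tilde W_t(n')]$.

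For the tail probability $\Pr[\tilde W_t(n') = m]$, the natural tool is inclusion–exclusion on the events $\{X_j = 0\}$. The probability that a specified set $S \subseteq [m]$ of marked coupons is entirely missed in the first $t$ draws equals $(1 - |S|/n')^t$, so
\[
\Pr[\tilde W_t(n') = m] \;=\; \sum_{k=0}^{m}(-1)^k \binom{m}{k}\left(1-\frac{k}{n'}\right)^t.
\]
The lower bound $1 - m\,\exp(-t/n')$ follows from a union bound (equivalently, Bonferroni truncation at $k=1$): $\Pr[\tilde W_t(n') < m] \le m(1 - 1/n')^t \le m\,\e^{-t/n'}$. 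For the matching upper bound, Bonferroni truncated at $k = 2$ gives
\[
\Pr[\tilde W_t(n') = m] \;\le\; 1 - m\left(1-\tfrac{1}{n'}\right)^t + \binom{m}{2}\left(1-\tfrac{2}{n'}\right)^t,
\]
and applying $1 - 1/n' \ge \exp(-1/(n'-1))$ to the middle term and $1 - 2/n' \le \exp(-2/n')$ to the last term produces the stated expression (using $\binom{m}{2} \le m^2/2$).

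I do not expect a genuine obstacle; the only delicate point is getting the Bonferroni inequalities in the right direction, which is straightforward since the alternating partial sums of the inclusion–exclusion expansion bracket the true probability (odd truncations give upper bounds on $\Pr[\tilde W_t(n') < m]$ and hence lower bounds on $\Pr[\tilde W_t(n') = m]$, and even truncations bracket from the other side). Care is also needed to use the sharper lower estimate $1 - 1/n' \ge \e^{-1/(n'-1)}$ (rather than the cruder $\e^{-1/n'}$) in the subtracted term, as that is what produces the exponent $t/(n'-1)$ appearing in the final upper bound.
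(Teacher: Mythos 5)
Your proposal mirrors the paper's proof exactly: the same decomposition of $\tW_t(n')$ into indicators $X_j$ of whether marked coupon $j$ has been collected, linearity of expectation for the exact formula, the standard inequalities $\exp(-1/(n'-1)) \le 1 - 1/n' \le \exp(-1/n')$, and Bonferroni truncations of inclusion--exclusion (union bound for the lower bound, first two terms for the upper bound). The Bonferroni directions and the choice of which exponential estimate to apply to the subtracted versus added terms are all handled correctly and match the paper.
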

\begin{proof}
Consider the binary random variables $X_{i,t}$ for $i\in [m]$ and $t\in \naturals$ which is $1$ if the $i$-th coupon is sampled at least once within the first $t$ steps in the random process underlying~$\tW(n')$. Then, 
\[
\tW_t(n') \quad = \quad \sum_{i=1}^m X_{i,t} \enspace.
\]
For a fixed~$t$, the random variables~$X_{i,t}$ are identically distributed for all $i$, and 
\[
\expct[X_{i,t}] \quad = \quad \Pr[X_{i,t}=1] \quad = \quad 1-\Pr[X_{i,t}=0] \quad = \quad 1-\left(1-\frac 1 {n'}\right)^t \enspace.
\]
Thus, $\expct \big[ \, \tW_t(n') \,] = m\left(1-\left(1-\frac 1 {n'}\right)^t\right)$.
Furthermore,
\[
\exp \!\left( - \, \frac {t}{n'-1}\right) \quad \leq  \quad \left(1-\frac{1}{n'}\right)^t
    \quad \leq \quad \exp \!\left( - \, \frac{t}{n'}\right) \enspace,
\]
so we get the inequalities 
\[ 
m\left(1-\exp\left( - \, \frac{t}{n'-1}\right)\right) \quad \geq  \quad \expct \big[ \, \tW_t(n') \, \big] \quad \geq  \quad m\left(1-\exp\left( - \, \frac{t}{n'}\right)\right) \enspace.
\]

For the final part of the lemma, we would like to bound the probability of collecting all~$m$ marked coupons, i.e., of the event that~$\tW_t(n') = m$. By the Union Bound, this probability is bounded from below as
\begin{align*}
\Pr \big[ \, \tW_t(n')=m \, \big] \quad & = \quad \Pr \big[ \, \sum_{i=1}^m X_{i,t} = m \, \big] \\
    & \ge \quad 1-\sum_{i=0}^m \Pr[X_{i,t}=0] \\
    & = \quad 1-m\left(1-\frac 1 {n'}\right)^t \\
    & \geq \quad 1-m\exp\left( - \, \frac{t}{n'}\right) \enspace.
\end{align*}
By considering the first two terms in the Principle of Inclusion-Exclusion, we may also upper bound the probability as
\begin{align*}
\Pr \big[ \, \tW_t(n')=m \, \big] \quad & \le \quad 1-\sum_{i=0}^m \Pr[X_{i,t}=0] + \sum_{1\leq i<j\leq m} \Pr[X_{i,t}=X_{j,t}=0] \\
    & = \quad 1-m\left(1-\frac 1 {n'}\right)^t + \binom m 2\left(1-\frac 2 {n'}\right)^t \\
    & \leq \quad  1-m\exp\left( - \, \frac {t}{n'-1}\right) + \frac {m^2} 2\exp\left( - \, \frac {2t}{n'}\right) \enspace,
\end{align*}
as claimed.
\end{proof}

\section{Random walk domination}

Here, we relate the random walk~$W \eqdef (W_t)$ with the random walks~$W' \eqdef (W'_t)$ and~$Z \eqdef (W''_t - V''_t)$ defined in the proofs of \Cref{lem-expct-ub} and \Cref{lem-expct-lb}. In particular, we show that~$W'$ dominates~$W$, and~$W$ dominates~$Z$. 

We begin with some general conditions under which one random walk dominates another, as first stated in \Cref{sec-spectrum}.
\domination*
\begin{proof}
\label{dom_proof}
The initial condition and the relationship between the transition probabilities allow us to define a coupling between~$(U_t)$ and~$(U'_t)$ from which the stated properties follow. I.e., we define a pair of correlated random walks~$(Q_t, R_t)$ on the integers such that~$(Q_t)$ has the same distribution as~$(U_t)$, and~$(R_t)$ has the same distribution as~$(U'_t)$. Moreover, we ensure that~$Q_t \ge R_t$ for all~$t \ge 0$. As~$ Q_t \ge R_t$, we get
\begin{align*}
\Pr[ U_t \ge i] \quad & = \quad \Pr[ Q_t \ge i] \\
    & \ge \quad \Pr[ Q_t \ge i \text{ and } R_t \ge i ] \\
    & = \quad \Pr[ R_t \ge i] \\
    & = \quad \Pr[ U'_t \ge i] \enspace,
\end{align*}
and
\[
\expct[ U_t ] \quad = \quad \expct[ Q_t ] \quad \ge \quad \expct[ R_t ] \quad = \quad \expct[ U'_t ] \enspace.
\]

It remains to define the coupling. Let~$(H_t : t \ge 0)$ be a sequence of i.i.d.\ random variables distributed uniformly over the unit interval~$[0,1]$. We define~$(Q_t, R_t) \in \integers^2$ inductively using~$(H_t)$ while ensuring that~$\size{Q_t - Q_{t-1}} \le 1$, $\size{R_t - R_{t-1}} \le 1$, and~$Q_{t-1} \ge R_{t-1}$ for all~$t \ge 1$. 

We set~$Q_0 = R_0 \eqdef 0$. For~$t \ge 1$, suppose we have defined~$Q_{t-1}$ and~$R_{t-1}$ satisfying the properties stated above. For~$i,j \in \integers$, conditioned on the event that~$Q_{t-1} = i$ we define
\begin{align*}
Q_t \quad \eqdef \quad \begin{cases}
    i - 1 & \text{if } H_t \le q_{t,i,i-1} \\
    i + 1 & \text{if } H_t > 1 - q_{t,i,i+1} \\
    i & \text{otherwise,}
\end{cases}
\end{align*}
and similarly, conditioned on the event that~$R_{t-1} = j$, we define
\begin{align*}
R_t \quad \eqdef \quad \begin{cases}
    j - 1 & \text{if } H_t \le q'_{t,j,j-1} \\
    j + 1 & \text{if } H_t > 1 - q'_{t,j,j+1} \\
    j & \text{otherwise.}
\end{cases}
\end{align*}

We show~$Q_t \ge R_t$ by induction on~$t$. It holds for~$t = 0$. Assume~$Q_{t-1} \ge R_{t-1}$ for some~$t \ge 1$. Since the two walks move by at most~$1$ at every time step, we need only consider the case when~$Q_{t-1} - R_{t-1} \le 1$. When~$Q_{t-1} = R_{t-1} = i$, the two conditions~$ q_{t,i,i+1} \ge q'_{t,i,i+1}$ and~$ q_{t,i,i-1} \le q'_{t,i,i-1}$ on the transition probabilities ensure that
\begin{itemize}
\item whenever~$R_t$ moves right, $Q_t$ also moves right (i.e., if $R_t = i+1$, then~$Q_t = i+1$), and
\item whenever~$Q_t$ moves left, $R_t$ also moves left (i.e., if~$Q_t = i-1$, then~$R_t = i-1$).
\end{itemize}
When~$Q_{t-1} = i = R_{t-1} + 1$, the condition~$q_{t,i,i+1} + q_{t,i,i} \ge q'_{t,i-1,i}$ on the transition probabilities ensures that whenever~$R_t$ moves right, $Q_t$ either moves right or stays at the same point (i.e., if~$R_t = i$, then~$Q_t \in \set{i, i+1}$). In all the cases, we obtain~$Q_t \ge R_t$.
\end{proof}

The above conditions help us prove concrete domination results. Recall that~$W' \eqdef \tW(n-5m)$, with~$n,m$ and~$\tW(n')$ as defined in \Cref{sec-spectrum}. 
\begin{lemma}
\label{dom_res1}
When~$1\leq m\leq n/40 $, the random walk~$W'$ dominates~$W$.
\end{lemma}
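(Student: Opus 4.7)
The plan is to invoke \Cref{lem-domination} with $U \eqdef W'$ and $U' \eqdef W$. Both walks start at $0$ and change by at most $1$ per step, so the basic hypothesis of the lemma is met. Since $W'$ has no leftward transitions, the condition $q_{t,s,s-1} = 0 \le p_{s,-1} = q'_{t,s,s-1}$ is automatic, and the third condition reduces to $q_{t,s,s+1} + q_{t,s,s} = 1 \ge q'_{t,s-1,s}$, which is trivial. Thus the only substantive step is to establish the rightward dominance $(m-s)/(n-5m) \ge p_{s,+1}$ for every $s \in \naturals_{m+1}$.

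Both sides vanish at $s=m$. For $0 \le s < m$, cancelling the common factor $(m-s)$ and using $k = n-m$, the inequality is equivalent to
\[
(n-m)(n-2s)(n-2s+1) \;\ge\; (n-m-s)(n-s+1)(n-5m).
\]
My plan is to expand both sides as polynomials in $s$. The cubic-in-$n$ contributions cancel, and the difference (LHS $-$ RHS) can be organised as the upward-opening quadratic
\[
P(s) \;\eqdef\; (3n+m)\,s^{2} \;-\; \bigl(2n^{2} + n + 7mn - 5m^{2} + 3m\bigr)\,s \;+\; 5m(n-m)(n+1).
\]

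Direct substitution yields $P(0) = 5m(n-m)(n+1) > 0$ and $P(m) = m(n-2m)\bigl[3(n-m)+4\bigr] > 0$, both positive in our parameter range (note $m \le n/40 < n/2$). It remains to rule out an interior minimum on $[0,m]$. The vertex of $P$ lies at $s^{\ast} = -b/(2a)$; a short calculation shows that $s^{\ast} \le m$ would force $7m^{2} \ge 2n^{2} + n + mn + 3m$, which is incompatible with $m \le n/40$, since then $7m^{2} \le 7n^{2}/1600 < 2n^{2}$. Hence $s^{\ast} > m$, so $P$ is strictly decreasing on $[0,m]$ and attains its minimum there at $s=m$, where it is positive. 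This settles the rightward dominance and thereby the three hypotheses of \Cref{lem-domination}. The main obstacle is really just the quadratic-in-$s$ bookkeeping above; the constraint $m \le n/40$ is used precisely to push the vertex of $P$ strictly to the right of $m$, so that the sign of $P$ on $[0,m]$ is controlled by its values at the two endpoints.
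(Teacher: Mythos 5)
Your proof is correct and complete, but takes a noticeably more computational route than the paper. You and the paper both reduce the task via \Cref{lem-domination} to the single inequality $p_{s,+1} \le (m-s)/(n-5m)$ (the other two hypotheses being vacuous because $W'$ never moves left), and from there you diverge. You clear denominators, expand everything, and organise the difference as an upward-opening quadratic $P(s)$ in $s$, which you then control on $[0,m]$ by showing the vertex lies to the right of $m$ (using $m\le n/40$) so the minimum is attained at the endpoint $s=m$, where $P(m)=m(n-2m)[3(n-m)+4]>0$. The paper instead bounds $p_{s,+1}$ factor-by-factor --- $(k-s)/k\le 1$, $n-s+1\le n+1$, $n-2s+1 \ge n-2s \ge n-2m$ --- obtaining the intermediate upper bound $(n+1)(m-s)/(n-2m)^2$, and then checks the single scalar inequality $(n+1)(n-5m)\le (n-2m)^2$, which reduces to $n(1-m) \le 5m + 4m^2$ and holds for any $m\ge 1$. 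The paper's argument is shorter, needs no vertex analysis, and in fact works for all $m < n/5$; your quadratic bookkeeping reaches the same conclusion but leans more heavily on the $m\le n/40$ hypothesis than is strictly necessary. I did verify your polynomial coefficients, the factorisation of $P(m)$, and the vertex criterion $s^\ast\le m\Longleftrightarrow 7m^2\ge 2n^2+n+mn+3m$; they are all correct.
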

\begin{proof}
We refer the reader to \Cref{sec-spectrum} for the transition probabilities of the walks~$W$ and~$W'$. Since~$W'$ never moves left, we need only compare the probabilities of moving right.

Since~$(k-s)/k \le 1$, $n - s + 1 \le n + 1$, and~$n - 2s + 1 \ge n - 2s \ge n - 2m$, we see that
\[
p_{s,+1} \quad = \quad \frac{(k-s)(n-s+1)(m-s)}{(n-2s)(n-2s+1)k} \quad \le \quad \frac{(n + 1) (m - s)}{(n - 2m)^2} \enspace.
\]
We may verify that the right hand side is at most~$\tfrac{m - s}{ n - 5m}$. By \Cref{lem-domination} the random walk~$W'$ dominates~$W$ and~$ \expct[W'_t] \geq \expct[W_t]$.
\end{proof}

Recall that~$W'' \eqdef \tW(n)$, and that~$V''$ is the random walk on~$\integers$ defined in the proof of \Cref{lem-expct-lb}.
\begin{lemma}
\label{dom_res2}
The random process~$(W''_t-V''_t)$ is a time-dependent random walk, and if~$2\leq m \leq n/10$, the random walk~$(W_t)$ dominates~$(W''_t-V''_t)$.
\end{lemma}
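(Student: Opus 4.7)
The plan is to adapt the coupling construction of \Cref{lem-domination} to a joint process. I would first observe that since $W''_t$ and $V''_t$ each change by at most one per time step, the process $X_t \eqdef W''_t - V''_t$ has increments in $\{-1, 0, +1\}$ and is therefore a nearest-neighbour (time-dependent) random walk on the integers. However, $X_t$ is not Markov on its own: its transition probabilities at time $t$ depend on $W''_{t-1}$, not just on $X_{t-1}$. For this reason I would not try to apply \Cref{lem-domination} directly to $W$ and $X$; instead I would build a joint process $(Q_t, R_t, S_t)$ on a common probability space with $(Q_t) \sim (W_t)$, $(R_t, S_t) \sim (W''_t, V''_t)$, $R$ and $S$ independent, and establish the invariant $Q_t \ge R_t - S_t$ by induction on $t$.

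At each step of the induction, given $(Q_{t-1}, R_{t-1}, S_{t-1}) = (q, r, s)$ with $q \ge r - s$, I would sample $R_t$ and $S_t$ independently using their marginal rules (so $R_t = r + 1$ with probability $(m-r)/n$ and $S_t = s + 1$ with probability $m^2/n^2$), and then couple $Q_t$ with the outcome of $(R_t, S_t)$ using auxiliary randomness while preserving the marginal law of $Q_t$ dictated by $(p_{q,-1}, p_{q,0}, p_{q,+1})$. As in the proof of \Cref{lem-domination}, the invariant can break only in two cases: $q = r - s$ (where $Q$ must move right whenever $X \eqdef R - S$ does, and $Q$ must move left only when $X$ does) and $q = r - s + 1$ (where the weaker condition $p_{q,+1} + p_{q,0} \ge \Pr[X_t = X_{t-1} + 1 \mid \text{history}]$ suffices). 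Using the fact that $R_{t-1} \ge X_{t-1}$ to uniformly bound the conditional transition probabilities of $X$, the invariant reduces to the two inequalities
\[
p_{q,+1} \;\ge\; \frac{m-q}{n}\left(1 - \frac{m^2}{n^2}\right) \qquad \text{and} \qquad p_{q,-1} \;\le\; \left(1 - \frac{m-q}{n}\right)\frac{m^2}{n^2}
\]
for all $q \in \naturals_{m+1}$; the boundary case condition follows from these together with the identity $p_{q,+1} + p_{q,0} = 1 - p_{q,-1}$.

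The main obstacle I expect is verifying these two inequalities, particularly the left-move bound $p_{q,-1} \le (1-(m-q)/n)\, m^2/n^2$, using the explicit formulas for $p_{q,\pm 1}$ from \Cref{sec-prep} together with $k = n - m$ and $m \le n/10$. A direct check at $m = 1$, $q = 1$ shows that the left-move inequality fails --- this is precisely why the hypothesis $m \ge 2$ is required in the lemma. For $m \ge 2$ the worst case is near $q = m$, and the hypothesis $m \le n/10$ ensures that the denominator factor $(n - 2q + 1)(n - 2q + 2)$ in $p_{q,-1}$ is at least a constant fraction of $n^2$, which provides the slack needed for the inequality to hold. Once both inequalities are confirmed, the coupling construction gives $Q_t \ge R_t - S_t$ almost surely, which yields the stochastic dominance $\Pr[W_t \ge i] \ge \Pr[W''_t - V''_t \ge i]$ for every integer $i$ as well as the expectation inequality $\expct[W_t] \ge \expct[W''_t - V''_t]$ used in \Cref{lem-expct-lb,lem-lmlst-bound}.
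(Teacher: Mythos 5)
Your proposal is correct and reduces to the same three conditions on transition probabilities that the paper verifies. The difference is one of framing: you build an explicit three-variable coupling $(Q_t,R_t,S_t)$ precisely because $X_t \eqdef W''_t - V''_t$ is not Markov, whereas the paper observes that the \emph{marginal} one-step transition probabilities of $X_t$ are nevertheless well-defined functions of the time step and current state (since $(W''_t,V''_t)$ is an independent pair of Markov chains, the distribution of $(W''_t,V''_t)$ conditioned on $X_t=s$ is a function of $(t,s)$ alone), and then applies \Cref{lem-domination} directly to $X$; this is legitimate because domination concerns only one-dimensional marginals, which are determined by the effective transition kernel even without the Markov property. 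Either way, after taking a worst case over the unobserved coordinate $r=V''_{t-1}$ (which is at its extreme when $r=0$), one is left with exactly your two inequalities $p_{q,+1} \ge \frac{m-q}{n}\bigl(1-\frac{m^2}{n^2}\bigr)$ and $p_{q,-1} \le \bigl(1-\frac{m-q}{n}\bigr)\frac{m^2}{n^2}$, plus the boundary condition $p''_{q-1,r,+1} \le 1-p_{q,-1}$. You are right that the boundary condition follows from the left-move inequality (it gives $p_{q,-1}\le m^2/n^2$, so $1-p_{q,-1}\ge 1-m^2/n^2 \ge (1-m^2/n^2)\,m/n \ge p''_{q-1,r,+1}$); the paper instead checks it with a separate crude estimate, but the two routes verify the same fact. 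Your diagnosis of where $m \ge 2$ enters is also correct: at $m=1$, $q=1$ the left-move inequality would require $\frac{1}{n(n-1)} \le \frac{1}{n^2}$, which fails, and this is exactly the step where the paper invokes $m\ge 2$ (via the bound $\frac{1}{4}(1+1/m)^2 \le \frac{9}{16}$).
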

\begin{proof}
For any~$s \in \naturals_{m+1}$ and non-negative integer~$r $ such that~$s + r \le m$ define
\begin{align*}
p''_{s,r,-1} \quad & \eqdef \quad \left(1 - \frac{m-s-r}{n} \right)\frac{m^2}{n^2} \enspace,\\
p''_{s,r,+1} \quad & \eqdef \quad \left(1 - \frac {m^2}{n^2}\right) \frac {m-s-r}{n} \enspace.
\end{align*}
Note that
\begin{align*}
\Pr[ W''_{t+1} - V''_{t+1} = a \,|\, W''_t = s + r, \, V''_t = r] \quad & = \quad
    \begin{cases}
	p''_{s,r,-1} & \text{if } a = s-1 \\
    p''_{s,r,+1} & \text{if } a = s+1 \\
    1 - p''_{s,r,-1} - p''_{s,r,+1} & \text{if } a = s \\
	0 & \text{otherwise.}
    \end{cases}
\end{align*}
Consequently, the transition probabilities of the random walk~$(W''_t - V''_t)$ are functions of the current state and the time step alone:
\begin{align*}
\Pr[ W''_{t+1} - & V''_{t+1} = a \,|\, W''_t - V''_t = s] \\
     & = \quad \sum_{r \ge 0} \Pr[V''_t = r] \cdot \Pr[W''_t = s + r] \cdot \Pr[ W''_{t+1} - V''_{t+1} = a \,|\, W''_t = s + r, \, V''_t = r] \enspace.
\end{align*}
We show below that~$p''_{s,r,-1} \ge p_{s,-1}$, $p''_{s,r,+1} \le p_{s,+1} $, and~$p''_{s-1,r,+1} \le p_{s,0} + p_{s,+1}$. By \Cref{lem-domination}, this suffices to establish the claimed domination property.

We start with~$p''_{s,r,-1} \ge p_{s,-1}$. Since~$p''_{s,r,-1}$ is smallest when~$r = 0$, it suffices to show~$p''_{s,0,-1} \ge p_{s,-1}$, i.e.,
\[
\left(1-\frac {m-s} n\right) \frac {m^2}{n^2} \quad \geq \quad  \frac{s(k-s+1)(m-s+1)}{(n-2s+1)(n-2s+2)k} \enspace.
\]
The inequality holds for~$s = 0$ as~$p_{0,-1} = 0$, so we need only show it for~$s \in [m]$. Multiplying both sides by~$(n-2s+1)(n-2s+2)/m^2$, we see that it is equivalent to
\begin{align}
\label{eq-prob1}
\left( 1-\frac {m-s} n\right) \left(1-\frac {2s-1}n\right)\left(1-\frac {2s-2}n\right) \quad \geq \quad \frac s m \left(1-\frac {s-1}m\right)\left(1-\frac {s-1}k\right) \enspace.
\end{align}
The right hand side is bounded from above by
\[
\frac{s}{m} \left(1-\frac{s-1}m\right) \quad \leq \quad \frac{1}{4} \left(1+\frac{1}{m} \right)^2 \quad \leq \quad \frac{9}{16} \enspace,
\]
using the AM-GM inequality and~$m \ge 2$. The left hand side of Eq.~\eqref{eq-prob1} is bounded from below by
\[
1 - \frac{m-s}{n} - \frac {2s-1}{n} - \frac{2s-2}{n} \quad \geq \quad 1 - \frac{4m}{n} \quad \geq \quad \frac{6}{10} \enspace,
\]
as~$m \le n/10$. This proves Eq.~\eqref{eq-prob1}.

For the same reason as above, we need only show the relation~$p''_{s,r,+1} \le p_{s,+1}$ when~$r = 0$. The resulting inequality is
\[
\left(1-\frac{m^2}{n^2}\right) \frac {m-s}{n} \quad \leq \quad \frac{(k-s) (n-s+1) (m-s)}{(n-2s) (n-2s+1) k} \enspace.   
\]
Multiplying both sides by~$(n-2s)(n-2s+1)/n$, we see that it is equivalent to
\[
\left(1-\frac {m^2}{n^2}\right) \left(1-\frac {2s}n\right) \left(1-\frac {2s-1}n\right) \quad \leq \quad \left(1-\frac {s-1}n\right)\left(1-\frac {s}k\right) \enspace.
\]
This inequality holds as
\[
\left(1-\frac {m^2}{n^2}\right) \leq 1, \quad \left(1-\frac {2s}n\right) \leq \left(1-\frac {s}k\right), \text{ and } \quad \left(1-\frac {2s-1}n\right) \leq \left(1-\frac {s-1}n\right) \enspace.
\]

Finally we show~$p''_{s-1,r,+1} \le p_{s,0} + p_{s,+1} = 1 - p_{s,-1}$. We have~$p''_{s-1,r,+1} \le m/n \le 1/10$. On the other hand, 
\[
p_{s,-1} \quad = \quad \frac{s(k-s+1)(m-s+1)}{(n-2s+1)(n-2s+2)k} \quad \leq \quad \frac{2 m^2 n}{(n-2m)^2 (n-m)} \quad \leq \quad \frac{5}{18}  \enspace,
\]
using~$s \le m$, $k - s + 1 \le n$, $m - s + 1 \le 2m$, and~$m \le n/10$. So~$p''_{s-1,r,+1} \le 1 - p_{s,-1}$ also holds.
\end{proof}

\section{Miscellaneous}

Here we include some technical lemmas used in \Cref{sec-coupon-collector}.
\begin{lemma}
\label{lem-ratio}
For any non-increasing sequence of positive numbers~$a_1,a_2,\dotsc$, and positive integers~$j_1, j_2$ such that~$j_1 \leq j_2$ we have 
\[
\frac {\sum_{i=1}^{j_1}a_i}{\sum_{i=1}^{j_2}a_i} \quad \geq \quad \frac {j_1}{j_2} \enspace.
\]
\end{lemma}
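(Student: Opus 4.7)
The plan is to reduce the inequality to the simple observation that for a non-increasing sequence, the average of an initial segment is monotonically non-increasing in the length of the segment. Concretely, I would cross-multiply and show that
\[
j_2 \sum_{i=1}^{j_1} a_i \quad \geq \quad j_1 \sum_{i=1}^{j_2} a_i \enspace,
\]
which after rearrangement is equivalent to
\[
(j_2 - j_1) \sum_{i=1}^{j_1} a_i \quad \geq \quad j_1 \sum_{i=j_1+1}^{j_2} a_i \enspace.
\]

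To verify this, I would use the non-increasing property twice. First, since $a_i \geq a_{j_1}$ for every $i \in [j_1]$, the left-hand side is at least $j_1(j_2 - j_1) a_{j_1}$. Second, since $a_i \leq a_{j_1}$ for every $i \in \set{j_1+1, \dotsc, j_2}$, the right-hand side is at most $j_1(j_2 - j_1) a_{j_1}$. Combining the two estimates yields the desired inequality, and positivity of the $a_i$ ensures the denominators are non-zero so that the ratio form is meaningful.

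There is no real obstacle here: the only care needed is to handle the boundary case $j_1 = j_2$ (where both sides of the claimed inequality equal $1$, trivially) and to note that the argument does not require the sequence to be strictly decreasing. The entire proof is a few lines and uses nothing beyond the hypothesis.
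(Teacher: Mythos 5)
Your proof is correct and is essentially the same as the paper's: both cross-multiply to reduce to $j_2\sum_{i=1}^{j_1}a_i \geq j_1\sum_{i=1}^{j_2}a_i$, cancel the common contribution $j_1\sum_{i=1}^{j_1}a_i$, and then compare the leftover terms using monotonicity (you make the comparison explicit via the pivot $a_{j_1}$, while the paper phrases it as every remaining element on the left being at least every remaining element on the right, which is the same observation). No gap, and no meaningful divergence in approach.
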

\begin{proof}
It suffices to show that 
\[
j_2\sum_{i=1}^{j_1}a_i \quad \geq \quad j_1\sum_{i=1}^{j_2}a_i \enspace.
\]
View each side as a sum of $j_1 j_2$ elements, given by an appropriate number of repetitions of~$a_i$ for each~$i \in [j_2]$. Note that~$j_1^2$ elements given by~$j_1$ repetitions of each~$a_i$ for $i\in [j_1]$ occur on both sides. Each remaining element on the left hand side is at least as large as every remaining element on the right.
\end{proof}

Let~$k,n$ be positive integers such that~$1 < k < n$, and let~$m \eqdef n - k$.
\begin{lemma}
\label{lem_eq_frac}
If~$t\eqdef k\ln m+cn$ for some $c\in \mathbb R$, $m\ln m\leq |c|n/10$ and $1\leq m\leq n/40$, then
\[ 
\frac {t}{n-5m-1} \quad \leq \quad \ln m +\max \{2c,c/3\} \enspace.
\]
\end{lemma}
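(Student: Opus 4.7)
The plan is to clear the denominator, simplify algebraically, and split into two cases on the sign of~$c$. First I would observe that $m \leq n/40$ together with $m \geq 1$ forces $n \geq 40$, so $n - 5m - 1 > 0$ and the desired inequality is equivalent to $t - (n - 5m - 1) \ln m \leq (n - 5m - 1) \max\{2c, c/3\}$. Substituting $k = n - m$ and expanding $t = (n-m) \ln m + cn = n \ln m - m \ln m + cn$ collapses the left-hand side to $4m \ln m + \ln m + cn$. So the claim reduces to showing
\[
4m \ln m + \ln m + cn \quad \leq \quad (n - 5m - 1) \max\{2c, c/3\}.
\]

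When $c \geq 0$, the right-hand side minus $cn$ equals $c(n - 10m - 2)$. Using $\ln m \leq m \ln m$ for $m \geq 1$ together with the hypothesis $m \ln m \leq cn/10$ (where $|c| = c$), the left-hand side minus $cn$ is at most $5 m \ln m \leq cn/2$. It then suffices to check $cn/2 \leq c(n - 10m - 2)$, equivalently $20m + 4 \leq n$, which follows from $20m \leq n/2$ and $n \geq 40$.

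When $c < 0$, we have $\max\{2c, c/3\} = c/3$, and after subtracting $cn$ the right-hand side becomes $(-c)(2n/3 + 5m/3 + 1/3)$, which is positive. Using $|c| = -c$, the same chain of bounds yields $4m \ln m + \ln m \leq -cn/2$, and the remaining inequality $n/2 \leq 2n/3 + 5m/3 + 1/3$ is trivially true. The whole argument is routine algebra; the only point that requires mild care is choosing the correct branch of the maximum so that in the $c < 0$ case we use $c/3$ rather than $2c$ (since $c/3 > 2c$ precisely when $c < 0$), and in verifying the hypothesis-controlled bounds with $|c|$ rather than $c$ in that case.
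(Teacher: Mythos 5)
Your proof is correct and follows essentially the same route as the paper's: substitute $k = n - m$ to collapse the left side to $(4m+1)\ln m + cn$, bound the $m\ln m$ contribution by $|c|n/2$ via the hypothesis and $\ln m \le m\ln m$, and then split on the sign of $c$ to pick the right branch of the maximum; the only cosmetic difference is that the paper keeps the inequality in fraction form ($\frac{(4m+1)\ln m}{n-5m-1} \le 2|c|/3$, $\frac{cn}{n-5m-1} \le 4c/3$ or $\le c$) while you clear the denominator first. Same key ideas, same hypotheses used, no gaps.
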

\begin{proof}
The claimed inequality is equivalent to
\begin{align*}
\frac{k \ln m + cn}{n - 5m - 1} - \ln m \quad & \le \quad \max \set{2c, c/3} \\
\Longleftrightarrow \qquad  \frac{(4m +1) \ln m + cn}{n - 5m - 1} \quad & \le \quad \max \set{2c, c/3} \enspace.
\end{align*}
Using~$1 \le m \le n/40$, $m \ln m \le \size{c} n/10$, we have
\[
\frac{(4m +1) \ln m}{n - 5m - 1} \quad \le \quad \frac{5m \ln m}{n - 6m} \quad \le \quad \frac{2\size{c}}{3} \enspace.
\]
Further, when~$c \ge 0$, $cn/ (n - 5m - 1) \le cn / (n - 6m) \le 4c/3$, and when~$c < 0$, $cn/ (n - 5m - 1) \le c$. Combining these, we see that the claimed inequality holds. 
\end{proof}

\end{document}